\tikzset{>=latex}
\newtheorem{theorem}{Theorem}
\newtheorem*{theorem*}{Theorem}
\newtheorem{lemma}[theorem]{Lemma}
\newtheorem{proposition}[theorem]{Proposition}
\newtheorem{corollary}[theorem]{Corollary}
\newtheorem{remark}[theorem]{Remark}
\newtheorem{assumption}[theorem]{Assumption}
\newtheorem{definition}[theorem]{Definition}
\newcommand{\rk}[0]{\operatorname{rank}}
\newcommand{\Cay}{\operatorname{Cay}}
\newcommand{\res}[2]{\left. #1 \right|_{#2}}
\newcommand{\col}{\operatorname{col}}
\newcommand{\bb}{\mathbb}
\newcommand{\mc}{\mathcal}
\newcommand{\eu}{\EuScript}
\newcommand{\spn}{\operatorname{span}}
\begin{document}
\title{An efficient decoder for a linear distance quantum LDPC code}
\author[1]{Shouzhen Gu}
\author[1]{Christopher A. Pattison}
\author[2]{Eugene Tang}

\affil[1]{\small Institute for Quantum Information and Matter, California Institute of Technology, Pasadena, CA 91125}
\affil[2]{\small Center for Theoretical Physics, Massachusetts Institute of Technology, Cambridge, MA 02139}

\date{\today}

\maketitle

\begin{abstract}
    Recent developments have shown the existence of quantum low-density parity check (qLDPC) codes with constant rate and linear distance. A natural question concerns the efficient decodability of these codes. In this paper, we present a linear time decoder for the recent quantum Tanner codes construction of asymptotically good qLDPC codes, which can correct all errors of weight up to a constant fraction of the blocklength. Our decoder is an iterative algorithm which searches for corrections within constant-sized regions. At each step, the corrections are found by reducing a locally defined and efficiently computable cost function which serves as a proxy for the weight of the remaining error.
\end{abstract}

\section{Introduction}

Quantum error correcting codes with constant-sized check operators, known as quantum low-density parity check (qLDPC) codes, have myriad applications in computer science and quantum information. Indeed, almost all leading contenders~\cite{DKLP02,BM06} for experimentally realizable fault-tolerant quantum memories are qLDPC codes. With more stringent requirements on their parameters, qLDPC codes can be used to achieve constant overhead fault-tolerant quantum computation as shown by Gottesman~\cite{G13}. On the more theoretical side, qLDPC codes are believed to have connections to the quantum probabilistically checkable proofs (qPCP) conjecture~\cite{EH15}.

A qLDPC code of blocklength $n$ is said to be good when it encodes $\Theta(n)$ logical qubits and detects all errors up to weight $\Theta(n)$. For many years such codes have proven elusive, with an apparent distance ``barrier'' of around $\sqrt{n}$. It is natural to wonder if there is some fundamental limitation that prevents us from achieving the {\it a priori} best possible distance of $\Theta(n)$. However, a sequence of recent constructions of qLDPC codes with steadily improving code parameters~\cite{PK20,HHO21,BE21} have culminated in the construction of asymptotically good qLDPC codes by Panteleev and Kalachev~\cite{PK22}. Alternative constructions of good qLDPC codes have since been given by Leverrier and Z{\'e}mor~\cite{LZ22} and conjectured by Lin and Hsieh~\cite{LH22quantum}.

With the proven existence of good qLDPC codes, a natural next step is to better understand their properties. For fault-tolerance purposes, a fast decoder is a necessity, so an important question is whether these codes can be efficiently decoded. Previously known efficient decoders~\cite{LTZ15, DLB21, EKZ20, PK19, QC21} were limited by the parameters of the underlying qLDPC code. To date, the best efficient decoder corrects against all errors of weight up to $\Theta(\sqrt{n}\log n)$~\cite{EKZ20}. The existence of good qLDPC codes opens the possibility for a decoder that corrects all errors of weight up to $\Theta(n)$.

In this paper, we focus on the quantum Tanner codes construction of Leverrier and Z{\'e}mor~\cite{LZ22}. Quantum Tanner codes were inspired by the original construction of good qLDPC codes of Panteleev and Kalachev~\cite{PK22}, as well as by the classical locally testable codes of Dinur, {\it et al.}~\cite{Dinur21}, serving as a intermediary between the two constructions. They can also be seen as a natural quantum generalization of classical Tanner codes~\cite{SS96}. A classical Tanner code is defined by placing bits on the edges of an expanding graph, with non-trivial checks defining local codes placed at the vertices. The codewords are the strings whose local views at each vertex belong to the codespace of the local code. A quantum Tanner code is a Calderbank-Shor-Steane (CSS)~\cite{CS96,Steane96} code defined by two classical Tanner codes stitched together using a two-dimensional expanding complex. For particular choices of the local checks and expanding complex, this construction has been shown to yield an asymptotically good family of qLDPC codes. We show that this construction can also yield an asymptotically good family of qLDPC codes which are efficiently decodable for errors of weight up to a constant fraction of the distance.

Our decoder is inspired by the small-set-flip~\cite{LTZ15} decoding algorithm for hypergraph product codes based on expanding graphs. Small-set-flip is an iterative algorithm, where at every step, small sets of qubits are flipped to decrease the syndrome weight. The candidate sets to flip are contained within the supports of individual stabilizer generators. A critical ingredient in the success of the small-set-flip decoder is the presence of expansion in the underlying geometric complex. Since the geometric complex defining quantum Tanner codes has a similar notion of expansion, one might expect that analogous ideas may work for decoding quantum Tanner codes.

In our decoder, we define a ``local potential function'' on each local view which measures the distance of the error from the local codespace. The decoder reduces the sum of these potential functions by applying a constant-sized correction within some local view at each step. In the proof of correctness, we proceed by tracking the minimum weight correction according to each local view, and then use this data to show that a flip-set with the required properties must exist when the error is not too large. As a required step in the proof, we also strengthen the robustness parameters of the random classical codes used in the quantum Tanner code construction.

Our main result is stated below: 
\begin{theorem*}[Informal version of Theorems~\ref{thm:adversarial_errors} and~\ref{thm:linear_time}]
There exists a family of asymptotically good quantum Tanner codes such that our decoder successfully corrects all errors of weight up to $\Theta(n)$ and runs in time $O(n)$.
\end{theorem*}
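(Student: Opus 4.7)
The plan is to mirror the structure of the small-set-flip analysis of Leverrier, Tillich, and Z\'emor, adapted to the quantum Tanner code setting. First I would fix notation: given any error $e$ consistent with the observed syndrome, attach to each vertex $v$ of the underlying square Cayley complex a local view $\res{e}{v}$, and let $\phi_v(e)$ be a local potential measuring the Hamming distance from $\res{e}{v}$ to the nearest configuration reconciling the observed local syndrome with the local tensor code. The global potential $\Phi(e) = \sum_v \phi_v(e)$ is then an efficiently computable quantity that serves as a proxy for the weight of $e$ modulo stabilizers, provided the local tensor codes are sufficiently robust.

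The decoder is the natural greedy algorithm: at each step, scan the local views for a constant-sized flip-set supported inside some single local view that strictly decreases $\Phi$; apply it; repeat until none exists. The proof then splits into two tasks: (i) correctness, i.e.\ showing that the process terminates with a valid correction whenever the initial error has weight at most $\alpha n$ for some constant $\alpha > 0$, and (ii) linear-time implementation.

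Task (i) is the heart of the matter. Here I would exploit two ingredients: the \emph{robustness} of the tensor product of the local codes, which turns a small $\phi_v$ into a structured certificate describing how the local error decomposes along the two tensor directions, and the \emph{expansion} of the underlying Cayley complex, which ensures that large error supports induce many local views carrying such certificates. The argument would proceed by contradiction: assuming the decoder is stuck at a nonzero error below threshold, extract local certificates at all vertices with $\phi_v > 0$, then aggregate them via expansion to either produce a flip-set that decreases $\Phi$ (contradicting stuckness) or exhibit a low-weight nontrivial element of the quotient $\ker H / \operatorname{im} H^T$ (contradicting distance). I expect this aggregation to be the main obstacle: the analogous small-set-flip argument leans on the specific geometry of the hypergraph product, whereas here the two-dimensional square-complex structure forces a more delicate bookkeeping of horizontal vs.\ vertical corrections, and one must push the robustness parameter of the random local codes past what is used in the bare existence proof of \cite{LZ22}. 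Since the authors explicitly flag this strengthening, I would begin by isolating the exact robustness regime needed to close the aggregation step, and only then specialize to a random-code existence argument.

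Task (ii) is, by comparison, a bookkeeping exercise. Maintain a data structure that stores, for each local view, the best currently available decrease in $\Phi$, and update only the $O(1)$ views affected by each applied flip. Since every iteration decreases $\Phi$ by at least one integer unit and $\Phi \le O(n)$ initially, the total number of iterations is $O(n)$ and each admits $O(1)$ amortized work, yielding overall runtime $O(n)$. The only subtlety is confirming that $\phi_v$ and the optimal local flip can be computed in genuine constant time; this follows from the local code having constant size, so a brute-force search over its constant-sized coset space suffices.
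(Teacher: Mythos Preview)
Your high-level plan matches the paper's: a global potential summing local Hamming distances to $C_1^\perp$, greedy constant-size flips inside single local views, and the linear-time implementation via local bookkeeping is exactly the paper's Theorem~\ref{thm:linear_time} argument.

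For Task~(i), however, the contradiction you aim for is not the one the paper uses, and this is worth flagging. You propose that stuckness below threshold should yield a low-weight nontrivial element of $\ker H/\operatorname{im} H^T$, contradicting distance. The paper instead assumes without loss of generality that $e$ is the \emph{minimum-weight} representative of its stabilizer coset, and the terminal contradiction is that the analysis produces a $Z$-stabilizer $c_0\in C_A\otimes C_B$ supported in a single $V_0$ local view with $|e+c_0|<|e|$, violating minimality of $e$. The distance enters only in the wrapper Lemma~\ref{lem:decoder_from_potential}, to certify the final zero-potential error is a stabilizer. This reframing matters because the expansion-plus-robustness machinery naturally outputs a stabilizer with heavy overlap with $e$, not a logical operator; trying to manufacture a logical from the local certificates does not obviously close.

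The paper's case split is also more structured than your sketch: two easy cases (the error is not \emph{metastable}, i.e.\ a single-qubit flip helps; or the local-correction set $R$ has high overlap with some local codeword of $C_1^\perp$) are handled directly, and only the residual case invokes expansion. There the key move is to locate a special vertex $v_0\in V_0$ (not in $V_1$ where the potential lives) whose rows and columns see many near-codewords of $C_A$ and $C_B$; a bound on $|R_{v_0}|$ then forces $e_{v_0}$ close to the tensor code $C_A\otimes C_B$, furnishing $c_0$. Your instinct that the two-directional bookkeeping is the crux is right, but note the decoder must search flip-sets over \emph{both} $V_0$ and $V_1$ local views even though the potential is defined only on $V_1$; restricting to $V_1$ views alone is not known to suffice.
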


The remainder of the paper is organized as follows. In Section~\ref{sec:quantum_tanner} we provide a brief technical introduction to the quantum Tanner codes construction of asymptotically good qLDPC codes. There we present a terse, but self-contained, description of all the ingredients necessary to follow the rest of the paper. In Section~\ref{sec:decoder} we formally define the decoding problem and present the overview of our decoder for the quantum Tanner codes. We also work out basic properties and consequences of our decoder in this section. Section~\ref{sec:proof} contains the technical bulk of the paper, and presents the main proof of the correctness of the decoder. Finally, in Section~\ref{sec:conclusion} we provide a summary of our results and conclude with some open problems. We also include a technical appendix detailing the existence of the dual tensor codes with sufficiently high robustness parameter ($\Delta^{3/2+\varepsilon}$) necessary for the proof.

\section{Quantum Tanner Codes}\label{sec:quantum_tanner}

In this section, we review some coding theory background and summarize the construction of quantum Tanner codes by Leverrier and Z\'emor~\cite{LZ22}.

\subsection{Classical linear codes}
In this subsection we quickly review the necessary classical coding background. A classical linear code is a $k$-dimensional subspace $C\subseteq \bb F_2^n$, which is often specified by a parity check matrix $H\in \bb F_2^{(n-k)\times n}$ such that $C=\ker H$. Equivalently, the code can also be specified as the column space of a generator matrix $G\in \bb F_2^{n\times k}$, such that $C=\col G$. The parameter $n$ is called the blocklength of the code. The number of encoded bits is $k$ and $\rho=k/n$ is the rate of the code. The number of errors that the code can correct is determined by the distance of $C$, which is given by the minimum Hamming weight of a nonzero codeword: $d=\min_{x\in C\setminus\{0\}}|x|$. Sometimes, we consider the relative distance $\delta=d/n$. We say that such a code has parameters $[n,k,d]$.

Given a $D$-regular (multi)graph $\mc G=(V,E)$ and a code $C_0$ of blocklength $D$, we can define the classical Tanner code $C=T(\mc G,C_0)$ as follows. The bits of $C$ are placed on the edges of $\mc G$, so it is a code of length $n=|E|$. For $x\in \bb F_2^{E}$, define the \emph{local view} of $x$ at a vertex $v\in V$ to be $\res{x}{E(v)}$, which is the restriction of $x$ to $E(v)$, the set of edges incident to $v$. Then the codewords of $C$ are those $x\in \bb F_2^{E}$ such that $\res{x}{E(v)}\in C_0$ for every $v\in V$, where we choose some way of identifying every edge-neighborhood of a vertex with the bits of the local code $C_0$. If $H_0$ is the parity check matrix of $C_0$, then the parity check matrix of $C$ will have rows which are equal to a row of $H_0$ on an edge-neighborhood of a vertex and extended to be zero everywhere else. In the Tanner code construction, the code $C_0$ is often called the local, or base, code.

The dual of a classical linear code $C$, denoted $C^\perp$, is the subspace of all vectors orthogonal to the codewords of $C$; that is, 
\begin{align}
C^\perp=\{y\in \bb F_2^n: \forall x\in C,\ \langle x,y\rangle=0\}\, ,
\end{align}
where the inner product is taken modulo $2$. If we have two classical codes $C_A=\ker H_A\subseteq \bb F_2^n$ and $C_B=\ker H_B\subseteq \bb F_2^n$, we can consider their tensor code and dual tensor code. 

\begin{definition}[Tensor and Dual Tensor Codes]
The tensor code of $C_A$ and $C_B$ is the usual tensor product $C_A\otimes C_B\subseteq \bb F_2^n\otimes\bb F_2^n$. We can naturally interpret $\bb F_2^n\otimes\bb F_2^n$ as the set of binary $n\times n$ matrices, and in this view, $C_A\otimes C_B$ is identified with the set of matrices $X$ such that every column of $X$ is a codeword of $C_A$ and every row of $X$ is a codeword of $C_B$. 

The dual tensor code of $C_A$ and $C_B$ is $(C_A^\perp \otimes C_B^\perp)^\perp\subseteq \bb F_2^n\otimes\bb F_2^n$, which can equivalently be expressed as $(C_A^\perp \otimes C_B^\perp)^\perp=C_A\otimes \bb F_2^n + \bb F_2^n\otimes C_B$. Codewords of the dual tensor code are precisely the set of matrices $X$ such that $H_AXH_B^{\mathrm{T}}=0$.
\end{definition}

Note that if $C_A$ is a $[n_A,k_A,d_A]$ code and $C_B$ is a $[n_B,k_B,d_B]$ code, then their tensor code is a $[n_An_B, k_Ak_B, d_Ad_B]$ code. Their dual tensor code is a $[n_An_B, n_Ak_B+n_Bk_A-k_Ak_B, \min(d_A,d_B)]$ code. Moreover, we have $C_A\otimes C_B \subseteq (C_A^\perp \otimes C_B^\perp)^\perp$.

\subsection{Quantum CSS codes}

A quantum stabilizer code is a subspace $\mc C\subseteq \left(\bb C^2\right)^{\otimes n}$ that is the $+1$-eigenspace of an abelian subgroup $S$ of the $n$-qubit Pauli group. If $S$ can be generated by stabilizers that are products of $X$ operators and stabilizers that are products of $Z$ operators, we say that $\mc C$ is a CSS code. In this case, we can associate with $\mc C$ two classical codes $\eu C_X=\ker H_X$ and $\eu C_Z=\ker H_Z\subseteq \bb F_2^n$, where the rows of $H_X$ (resp. $H_Z$) specify the $X$ (resp. $Z$) type stabilizer generators. The property that $X$ and $Z$ generators commute translates to the condition $H_XH_Z^{\mathrm{T}}=0$, or equivalently $\eu C_Z^\perp\subseteq \eu C_X$. 

We can state the code parameters of a CSS code in terms of its underlying classical codes: if $\eu C_X$ (resp. $\eu C_Z$) has $k_X$ (resp. $k_Z$) encoded bits, then the number of encoded qubits is $k=k_X+k_Z-n$. The distance of the CSS code is given by $d=\min\{d_X,d_Z\}$, where
\begin{equation}
    d_X = \min_{x\in \eu C_Z\setminus \eu C_X^\perp} |x|\,, \quad d_Z = \min_{x\in \eu C_X\setminus \eu C_Z^\perp}|x|\, .
\end{equation}
We say that such a quantum code has parameters $[[n,k,d]]$. A family of quantum codes is called \emph{asymptotically good} (or simply \emph{good}) if the rate $\rho=k/n$ and the relative distance $\delta=d/n$ are bounded below by a non-zero constant. The code family is said to be low-density parity check (LDPC) if it can be defined with stabilizer generators that have at most constant weight, with each qubit being in the support of at most a constant number of generators. This is the case if each row and column of $H_X$ and $H_Z$ have at most constant weight.

\subsection{Left-Right Cayley complexes}
Let $G$ be a finite group with a symmetric generating set $A$, i.e. $A=A^{-1}$. The left\footnote{There is also the notion of a right Cayley graph $\Cay(G,A)$ where the generator set acts on the right, with edges $\{(g,ga):g\in G, a\in A\}$.} Cayley graph $\Cay(A,G)$ is the graph with vertex set $G$ and edge set $\{(g,ag): g\in G, a\in A\}$. Let $A$ and $B$ be two symmetric generating for $G$ of size $|A|=|B|=\Delta$. The generating sets $A$ and $B$ are said to satisfy the \emph{Total No-Conjugacy condition (TNC)}~\cite{Dinur21} if we have $ag \neq gb$ for all $a\in A$, $b\in B$, and $g\in G$.

Given a group $G$ and two symmetric generating sets $A$ and $B$ satisfying TNC, we define their double-covered left-right Cayley complex $\mathrm{Cay}_2(A,G,B)$ as the $2$-dimensional complex consisting of:
\begin{enumerate}
    \item Vertices $V = V_0\sqcup V_1 = G\times\{0\} \sqcup G\times \{1\}$.
    There are a total of $|V| = 2|G|$ vertices, with $|V_0| = |V_1| = |G|$.
    
    \item Edges $E = E_A \sqcup E_B$, where
    \begin{align}
    E_A = \{((g,0),(ag,1)) : g\in G, a\in A\}\, ,\quad\text{and}\quad E_B = \{((g,0),(gb,1)) : g\in G, b\in B\}\, .
    \end{align}
    Note that $A$-type edges are defined by a left-action of the generators, while that $B$-type edges are defined by a right-action of the generators. There are a total of $2\Delta|G|$ edges, with $|E_A| = |E_B| = \Delta|G|$. 
    
    \item Squares $Q$ defined by quadruplets of vertices:
    \begin{align}
        Q = \{\{(g,0),(ag,1),(gb,1),(agb,0)\} : a\in A, b\in B, g\in G\}\, .
    \end{align}
    There are a total of $|Q|=\Delta^2|G|/2$ squares.
\end{enumerate}

Note that the graph defined by $(V,E_A)$ is precisely the double cover of the left Cayley graph $\Cay(A,G)$, and the graph defined by $(V,E_B)$ is the double cover the right Cayley graph $\Cay(G,B)$. The full $1$-skeleton of $\Cay_2(A,G,B)$ is a bipartite graph $\mc G^\cup=(V,E)$.

By TNC, each square is guaranteed to have $4$ distinct vertices, so the graph $\mc G^\cup$ is a simple $2\Delta$-regular graph. There are $\Delta^2$ squares incident to a given vertex, and the set of faces incident to a given vertex can be naturally identified with the set $A\times B$. Figure~\ref{fig:local_view} illustrates the faces incident to a given vertex in the left-right Cayley complex.

\begin{figure}
    \centering
    \input{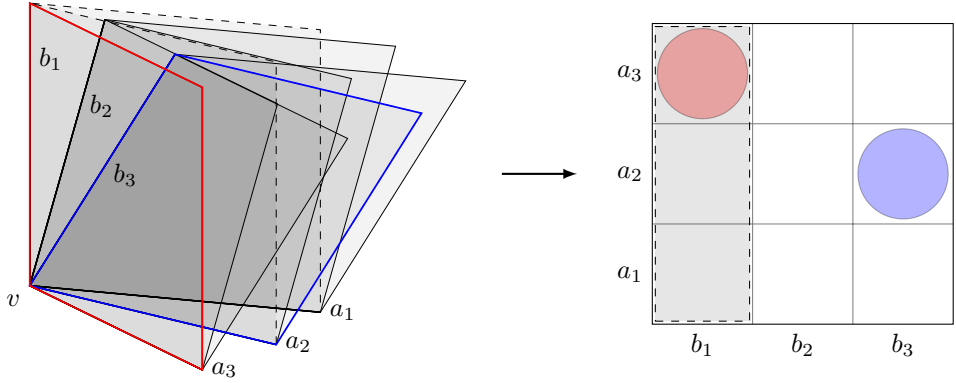}
    \caption{The local view of a vertex $v$ and its identification with the set $A\times B$. Considering the ``book'' defined by the edge $b_1$ picks out a column, $A \times b_1$ (dashed). Specifying entries of $A\times B$ picks out specific faces (red, blue) of the local view, which can be regarded as entries of the corresponding matrix.}
    \label{fig:local_view}
\end{figure}

Based on the structure of the graph $\mc G^\cup$, each face $q\in Q$ can be naturally identified with its diagonal connecting its corners in $V_0$. Through this identification, we can define a graph $\mc G^\square_0$ capturing the incidence structure of faces in the complex. The graph $\mc G^\square_0 = (V_0,Q)$ is defined with vertex set $V_0 = G\times \{0\}$, where $q\in Q$ is present as an edge $(v,v')$ in $\mc G^\square_0$ if and only if $v$ and $v'$ appear as opposite $V_0$-corners of the square $q$. Likewise, each face $q\in Q$ can be identified with its diagonal connecting its corners in $V_1$. This similarly defines a graph $\mc G^\square_1 = (V_1,Q)$. Note that $\mc G_0^\square$ and $\mc G^\square_1$ are $\Delta^2$-regular multigraphs.

\subsection{Quantum Tanner codes construction}
We now describe the construction of quantum Tanner codes from~\cite{LZ22}. The construction is dependent on the choice of a double-covered left-right Cayley complex $\Cay_2(A,G,B)$ with generating sets of size $|A|=|B|=\Delta$ satisfying TNC. It is also dependent on fixed classical codes $C_A,C_B$ of blocklength $\Delta$, which define local codes $C_0=C_A\otimes C_B$ and $C_1=C_A^\perp\otimes C_B^\perp$. 

Given the data above, a quantum Tanner code $\mc C$ is then defined as the CSS code specified by the two classical Tanner codes $\eu C_Z = T(\mc G_0^\square,C_0^\perp)$ and $\eu C_X=T(\mc G_1^\square,C_1^\perp)$. More explicitly, qubits are placed on the squares of the left-right Cayley complex, and the $Z$ (resp. $X$) type stabilizer generators are codewords of the local code $C_A\otimes C_B$ (resp. $C_A^\perp \otimes C_B^\perp$) on the $\Delta^2$ squares incident to each vertex $v\in V_0$ (resp. $v\in V_1$). The incidence structure of the left-right Cayley complex ensures that the $X$ and $Z$ stabilizers commute (see Figure~\ref{fig:checks_commute}).

\begin{figure}
    \centering
    \input{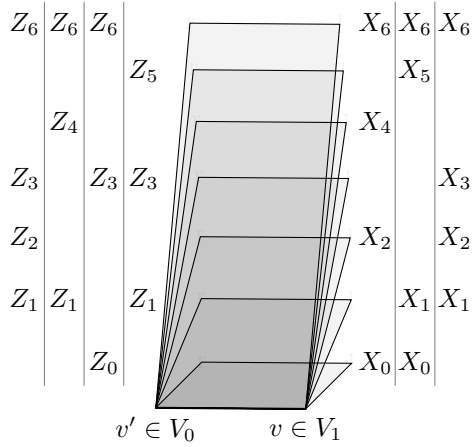}
    \caption{The restriction of checks to the faces incident to an edge $(v',v)\in \mc G^\cup$. The columns on the left indicate the various nontrivial restrictions of $Z$ stabilizers from the $v'$ local view, which are codewords of $C_A$. The columns on the right indicate the various $X$ stabilizer restrictions from the $v$ local view, which are codewords of $C_A^\perp$.}
    \label{fig:checks_commute}
\end{figure}

Note that $\mc C$ is a qLDPC code: each stabilizer generator acts on a subset of the local view $Q(v)$ of $\Delta^2$ qubits, and each qubit is acted on only by the stabilizers in the local views of its four corners. It is proven in~\cite{LZ22} that for certain choices of the left-right Cayley complex and local codes, this construction yields a good family of quantum codes:

\begin{theorem}[Theorem 16 of \cite{LZ22}]
\label{thm:QTCconstruction}
Fix $\varepsilon\in (0,1/2)$, $\rho\in (0,1/2)$, and $\delta\in (0,1/2)$ with $\delta<h^{-1}(\rho)$, where $h(x)=-x\log_2 x - (1-x)\log_2(1-x)$ is the binary entropy function. For some $\Delta$ sufficiently large, there exist classical codes $C_A,C_B$ of blocklength $\Delta$, rates $\rho$ and $1-\rho$ respectively, and relative distances at least $\delta$, as well as an infinite family of left-right Cayley complexes $\Cay_2(A,G,B)$ with $|G|\to\infty$ and symmetric generating sets $A,B$ of size $|A|=|B|=\Delta$ satisfying TNC, such that the quantum Tanner code defined above has parameters 
\[[[n=|Q|,k\ge (1-2\rho)^2n,d\ge \frac{\delta}{4\Delta^{3/2+\varepsilon}}n]]\, .\]
\end{theorem}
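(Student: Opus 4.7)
The plan is to verify the three asserted parameters separately, constructing the necessary ingredients along the way. The rate bound is a straightforward counting argument; the distance bound is the substantive step, and relies on combining spectral expansion of the Cayley complex with robustness of the two dual tensor codes at each vertex.

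For the rate, I would observe that each vertex $v\in V_0$ contributes at most $\dim(C_A\otimes C_B)=\rho(1-\rho)\Delta^2$ independent $Z$-checks to the parity check matrix of $\eu C_Z$. Since $|V_0|=|G|=2n/\Delta^2$, this gives $k_Z\ge (1-2\rho(1-\rho))n$, with the same bound for $k_X$ by symmetry. The CSS identity $k=k_X+k_Z-n$ then yields $k\ge (1-2\rho)^2 n$.

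The remaining work is the distance. For the Cayley complex, I would use an explicit Ramanujan-type construction (e.g., with $G=\mathrm{PSL}_2(q)$, $q\to\infty$), furnishing an infinite family whose bipartite Cayley graph $\mc G^\cup$ is a good spectral expander, and arranging TNC by a light perturbation of the generators. For the local codes $C_A, C_B$, I would pick uniformly random codes of the stated rates, which satisfy relative distance $\delta<h^{-1}(\rho)$ with high probability by Gilbert--Varshamov. Granted that both dual tensor codes $(C_A\otimes C_B)^\perp$ and $(C_A^\perp\otimes C_B^\perp)^\perp$ are $\Delta^{3/2+\varepsilon}$-robust, the Leverrier--Z\'emor expansion argument goes through: if some $x\in \eu C_X\setminus \eu C_Z^\perp$ had weight below $\frac{\delta}{4\Delta^{3/2+\varepsilon}} n$, robustness on each $V_0$-local view would force the support of $x$ to cluster along rows and columns of $\mc G^\square_0$ near each vertex, and the expander mixing lemma applied to $\mc G^\square_0$ would then yield a contradiction by iteratively shrinking $|x|$ modulo $\eu C_Z^\perp$ down to zero. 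The symmetric argument at $V_1$-vertices handles $\eu C_Z\setminus \eu C_X^\perp$, yielding the overall distance bound.

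The real obstacle is establishing the $\Delta^{3/2+\varepsilon}$-robustness of the two random dual tensor codes, which is precisely what delivers the $1/\Delta^{3/2+\varepsilon}$ constant in the distance bound. My approach would be a union bound over all possible decompositions of a would-be dual tensor codeword into ``sparse row plus sparse column'' form, paired with a rank argument showing that such a decomposition is inconsistent with random parity check matrices except with negligible probability. This sharpens the analogous robustness bound in LZ22, and, as indicated in the introduction, its detailed proof belongs in the appendix.
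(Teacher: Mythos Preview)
This theorem is quoted from \cite{LZ22} and is not proved in the present paper; the paper only recapitulates the construction (the groups $G=\mathrm{PSL}_2(q^i)$ and generating sets from Lemma~\ref{lem:PSL}, and the robust local codes) and then invokes the result as a black box. Your high-level outline---rate by dimension counting, distance via local robustness plus expander mixing on $\mc G_0^\square$, $\mc G_1^\square$---is the Leverrier--Z\'emor strategy, so at that level you are aligned with the cited proof.

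Two concrete slips would derail the argument as written, however. First, you take $G=\mathrm{PSL}_2(q)$ with $q\to\infty$, but then $\Delta=q+1\to\infty$ as well, contradicting the hypothesis that $\Delta$ is a fixed constant while $|G|\to\infty$. The actual construction (Lemma~\ref{lem:PSL}) fixes $q$ and takes $G=\mathrm{PSL}_2(q^i)$ with $i\to\infty$; TNC comes directly from that lemma, not from any ``light perturbation.'' Second, your robustness parameter has the wrong sign for this theorem: the stated distance $d\ge \delta n/(4\Delta^{3/2+\varepsilon})$ follows from $\Delta^{3/2-\varepsilon}$-robustness, which is exactly what \cite{LZ22} proves. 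The stronger $\Delta^{3/2+\varepsilon}$-robustness (Theorem~\ref{thm:dualtensorrobustness}) is this paper's new contribution and yields the \emph{improved} distance $d\ge \delta n/(4\Delta^{3/2-\varepsilon})$; it is needed for the decoder analysis in Section~\ref{sec:proof}, not for Theorem~\ref{thm:QTCconstruction} as stated.
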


\subsection{Expanding Cayley complex and robust local codes}
In this subsection, we specify the technical properties of the Cayley complex and local codes that are used in the construction of good quantum Tanner codes described previously.

For a $D$-regular graph $\mc G=(V,E)$, the largest eigenvalue of its adjacency matrix is $\lambda_1=D$, and we let $\lambda(\mc G)=\lambda_2$ denote its second largest eigenvalue. The value of $\lambda(\mc G)$ is related to the expansion properties of the graph, as seen in the expander mixing lemma below. For subsets $S,T\subseteq V$, let $E(S,T)$ be the multiset of edges between $S$ and $T$, where edges in $S\cap T$ are counted twice. We have the following:

\begin{theorem}[Expander mixing lemma]
For a $D$-regular graph $\mc G=(V,E)$ and subsets $S,T\subseteq V$,
\begin{equation}
    |E(S,T)| \le \frac{D}{|V|}|S||T| + \lambda(\mc G)\sqrt{|S||T|}\, .
\end{equation}
\end{theorem}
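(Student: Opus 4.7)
The plan is to give the standard spectral proof: interpret $|E(S,T)|$ as a bilinear form in the adjacency matrix $A$ of $\mc G$, then split off the contribution of the top eigenvector (the all-ones direction) and control the remainder using the spectral gap.

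First I would write $|E(S,T)| = \mathbf{1}_S^{\mathrm{T}} A \mathbf{1}_T$, where $\mathbf{1}_S,\mathbf{1}_T\in\bb R^V$ are the characteristic vectors of $S$ and $T$. Because $\mc G$ is $D$-regular, the all-ones vector $\mathbf{1}$ is an eigenvector of $A$ with eigenvalue $D$, and $A$ is symmetric, so it admits an orthonormal eigenbasis $u_1=\mathbf{1}/\sqrt{|V|},u_2,\ldots,u_{|V|}$ with eigenvalues $D=\lambda_1\ge\lambda_2\ge\cdots\ge\lambda_{|V|}$, all of absolute value at most $\lambda(\mc G)$ after the first. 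I would then decompose
\begin{equation}
\mathbf{1}_S \;=\; \tfrac{|S|}{|V|}\mathbf{1} + v^{\perp}, \qquad \mathbf{1}_T \;=\; \tfrac{|T|}{|V|}\mathbf{1} + w^{\perp},
\end{equation}
where $v^\perp,w^\perp$ are the components of $\mathbf{1}_S,\mathbf{1}_T$ orthogonal to $\mathbf{1}$ (the coefficients are fixed by $\langle \mathbf{1}_S,\mathbf{1}\rangle=|S|$ and similarly for $T$).

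Next I would expand $\mathbf{1}_S^{\mathrm{T}} A \mathbf{1}_T$ using bilinearity. The $\mathbf{1}\mathbf{1}^{\mathrm{T}}$ cross terms vanish with $v^\perp$ and $w^\perp$ by orthogonality, and the pure $\mathbf{1}$ term contributes $\tfrac{|S||T|}{|V|^2}\cdot \mathbf{1}^{\mathrm{T}} A \mathbf{1} = \tfrac{D|S||T|}{|V|}$, which is the leading term in the lemma. The remaining piece is $(v^\perp)^{\mathrm{T}} A w^\perp$. Since $A$ acts on the orthogonal complement of $\mathbf{1}$ with spectral radius at most $\lambda(\mc G)$, writing $v^\perp=\sum_{i\ge 2}a_i u_i$ and $w^\perp=\sum_{i\ge 2}b_i u_i$ and applying Cauchy--Schwarz gives
\begin{equation}
\bigl|(v^\perp)^{\mathrm{T}} A w^\perp\bigr| \;=\; \Bigl|\sum_{i\ge 2}\lambda_i a_i b_i\Bigr| \;\le\; \lambda(\mc G)\,\|v^\perp\|\,\|w^\perp\|.
\end{equation}

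Finally I would bound the orthogonal norms by Pythagoras: $\|v^\perp\|^2 = \|\mathbf{1}_S\|^2 - \tfrac{|S|^2}{|V|} = |S| - \tfrac{|S|^2}{|V|} \le |S|$, and likewise $\|w^\perp\|^2\le|T|$. Combining yields $|E(S,T)| \le \tfrac{D}{|V|}|S||T| + \lambda(\mc G)\sqrt{|S||T|}$. There is no serious obstacle here; the only delicate point is the reading of $\lambda(\mc G)$, since the bound on the orthogonal part really requires $\max_{i\ge 2}|\lambda_i|$ rather than $\lambda_2$ alone. I would either interpret $\lambda(\mc G)$ as this spectral radius in the orthogonal complement (the standard convention for the mixing lemma) or note the absolute-value refinement; apart from that, the proof is a direct spectral computation.
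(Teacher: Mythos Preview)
The paper does not actually prove this statement; the expander mixing lemma is quoted as a standard background result with no accompanying argument. Your spectral proof is the standard one and is correct, and your closing remark about the reading of $\lambda(\mc G)$ is apt: the paper literally sets $\lambda(\mc G)=\lambda_2$, whereas the Cauchy--Schwarz step in your argument genuinely needs $\max_{i\ge 2}|\lambda_i|$. In the paper's applications this distinction is immaterial, since the quoted bounds (e.g.\ $4\sqrt{\Delta}$ for $\mc G^\cup$, $4\Delta$ for $\mc G_i^\square$) control all nontrivial eigenvalues in absolute value, but you are right to flag it.
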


The groups $G$ and generating sets $A,B$ in Theorem~\ref{thm:QTCconstruction} are chosen so that the resulting left-right Cayley complex has good expansion.

\begin{lemma}[Claim 6.7 of~\cite{Dinur21}]
\label{lem:PSL}
Let $q$ be an odd prime power and $G=\mathrm{PSL}_2(q^i)$. There exist two symmetric generating sets $A,B$ of size $|A|=|B|=\Delta=q+1$ and satisfying TNC such that the resulting Cayley graphs $\Cay(A,G), \Cay(G,B)$ are Ramanujan, i.e. have second largest eigenvalue $\lambda_2\le 2\sqrt \Delta$.
\end{lemma}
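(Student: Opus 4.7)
The plan is to produce $A$ and $B$ from a standard explicit construction of Ramanujan Cayley graphs of $\mathrm{PSL}_2$ and then arrange their parameters so that the TNC condition is forced.

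First, I would build a single symmetric generating set $S \subseteq G = \mathrm{PSL}_2(q^i)$ of size $q+1$ such that $\Cay(S,G)$ is Ramanujan, via Morgenstern's function-field analogue of the Lubotzky--Phillips--Sarnak construction: one starts from a suitable quaternion order over $\mathbb{F}_q(X)$, fixes an irreducible polynomial $f \in \mathbb{F}_q[X]$ of degree one, and takes the image in $\mathrm{PSL}_2(q^i)$ of the quaternions of reduced norm $f$. This yields a symmetric set of size $q+1$, and the Ramanujan bound $\lambda_2 \le 2\sqrt{q}$ follows from the Ramanujan--Petersson conjecture, established in this setting by Drinfeld.

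Next, because $S$ is symmetric, the inversion map $g \mapsto g^{-1}$ is a graph isomorphism between the left Cayley graph $\Cay(S,G)$ and the right Cayley graph $\Cay(G,S)$, so the Ramanujan property transfers freely between the two conventions. I would therefore run Morgenstern's construction twice, with two different parameter choices (for instance two distinct quaternion orders, or two distinct irreducible polynomials $f_1, f_2$) to obtain symmetric generating sets $A$ and $B$, each of size $q+1$, each yielding a Ramanujan Cayley graph in the required convention.

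The main obstacle is verifying TNC. Observe that $ag = gb$ is equivalent to $a = gbg^{-1}$, so TNC holds if and only if no element of $A$ is conjugate in $G$ to any element of $B$. In $\mathrm{PSL}_2(q^i)$ the conjugacy class of a semisimple element is determined by its trace up to a sign, and the traces of the norm-$f$ generators can be read off from the coefficients of the underlying quaternion basis in a controlled way. The crux of the argument is to pick the parameters defining $A$ and $B$ so that their trace multisets in $\mathbb{F}_{q^i}$ are disjoint modulo sign, with a small amount of extra bookkeeping to rule out accidental conjugacies coming from the remaining (unipotent and central) classes. Once this separation is arranged, TNC is immediate and the lemma follows.
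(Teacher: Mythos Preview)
The paper does not give its own proof of this lemma; it is quoted as Claim~6.7 of \cite{Dinur21} and used as a black box. So there is no in-paper argument to compare against, only the cited reference.

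Your high-level plan matches that reference: obtain $(q{+}1)$-regular Ramanujan generating sets of $\mathrm{PSL}_2(q^i)$ via Morgenstern's function-field analogue of LPS, take two such sets $A,B$ from different parameter choices, and verify TNC by showing that no element of $A$ is $G$-conjugate to any element of $B$. The reduction $ag=gb \Leftrightarrow a=gbg^{-1}$ is correct, and separating $A$ from $B$ by a conjugation invariant is indeed the right idea.

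The gap is that you do not actually carry out that separation. In Morgenstern's construction the $q+1$ generators attached to a fixed degree-one prime $f$ are determined, not chosen: they are the images of the norm-$f$ integral quaternions, and their traces are whatever the arithmetic produces. Asserting that for two distinct $f_1,f_2$ the resulting trace multisets ``can be arranged'' to be disjoint modulo sign is precisely the content of the claim, and you give no argument for it. (Your alternative suggestion, two distinct quaternion orders, would in general change the ambient group and is not a valid option here.) Since the Ramanujan part is standard from Morgenstern, the entire difficulty of the lemma lies in this TNC verification, and your sketch stops exactly where the real work begins. To turn this into a proof you would need to exhibit an explicit conjugation invariant that provably distinguishes the two Morgenstern generating sets for some concrete choice of parameters, as is done in \cite{Dinur21}.
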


For $G,A,B$ as above, it can be shown~\cite{LZ22} that the relevant graphs in the quantum Tanner codes construction have the parameters specified in Table~\ref{tab:graph_parameters}.
\begin{table}[ht]
    \centering
    \begin{tabular}{c|c|c|c}
         Graph & Degree & Number of vertices & Second eigenvalue \\\hline
         $\mc G^\cup$ & $2\Delta$ & $2|G| = |V_0| + |V_1|$ & $\le 4\sqrt{\Delta}$\\
         $\mc G_0^\square$ & $\Delta^2$ & $|G| = |V_0|$ & $\le 4\Delta$\\
         $\mc G_1^\square$ & $\Delta^2$ & $|G| = |V_1|$ & $\le 4\Delta$
    \end{tabular}
    \caption{Graph parameters}
    \label{tab:graph_parameters}
\end{table}

The classical codes used in the construction of quantum Tanner codes are required to satisfy a robustness property of their dual tensor code, introduced in~\cite{LZ22}.

\begin{definition}[$w$-Robustness]
\label{def:robustness}
Let $C_A,C_B\subseteq \bb F_2^n$ be classical codes with distances $d_A$ and $d_B$ respectively. We say that the dual tensor code $C_{AB}=C_A\otimes \bb F_2^n + \bb F_2^n\otimes C_B$ is \emph{$w$-robust} if every codeword $X\in C_{AB}$ with $|X|\le w$ is supported on the union of at most $|X|/d_A$ non-zero columns and $|X|/d_B$ non-zero rows. That is, there exist rows $A'$ with $|A'|\ge n-|X|/d_B$ and columns $B'$ with $|B'|\ge n-|X|/d_A$ such that $\res{X}{A'\times B'}=0$.
\end{definition}

If the dual tensor code of $C_A$ and $C_B$ is $w$-robust, then their tensor code satisfies a property similar to robust testability defined in~\cite{BS06}.

\begin{proposition}[Proposition 6 of \cite{LZ22}]\label{prop:robustness}
Let $C_A,C_B\subseteq \bb F_2^n$ be classical codes with distances $d_A$ and $d_B$ respectively such that their dual tensor code is $w$-robust for $w\le d_Ad_B/2$. Then
\begin{equation}
    d(x,C_A\otimes C_B)\le \frac 3 2\left(d(x,C_A\otimes \bb F_2^n) + d(x,\bb F_2^n\otimes C_B)\right)
\end{equation}
whenever $d(x,C_A\otimes \bb F_2^n) + d(x,\bb F_2^n\otimes C_B)\le w$.
\end{proposition}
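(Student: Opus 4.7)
The plan is to construct an explicit witness $z \in C_A \otimes C_B$ realizing the bound and apply the triangle inequality. I would first let $y_A \in C_A \otimes \bb F_2^n$ and $y_B \in \bb F_2^n \otimes C_B$ achieve the respective minimum distances to $x$, setting $\alpha := |x - y_A|$ and $\beta := |x - y_B|$. The key object is the ``defect'' $E := y_A + y_B$, which lies in $C_{AB}$ since $y_A$ and $y_B$ lie in its two summands; the triangle inequality over $\bb F_2$ gives $|E| \le \alpha + \beta \le w$, so the $w$-robustness hypothesis applies to $E$. Robustness then produces subsets $A^*, B^*$ with $|A^*| \le |E|/d_B$ and $|B^*| \le |E|/d_A$ such that $E$ is supported on $A^* \times B^*$; combined with $w \le d_A d_B / 2$ this gives $|A^*| < d_A$ and $|B^*| < d_B$, which is what allows local uniqueness arguments for the codes to be applied.

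The heart of the proof is to construct a decomposition $E = e + f$ with $e \in C_A \otimes \bb F_2^n$, $f \in \bb F_2^n \otimes C_B$, and $|e| + |f| \le 2 |E|$. Granted such a decomposition, I would define $z := y_A + e = y_B + f$, which lies in $(C_A \otimes \bb F_2^n) \cap (\bb F_2^n \otimes C_B) = C_A \otimes C_B$ as required. The triangle inequality then gives $|x - z| \le \alpha + |e|$ and $|x - z| \le \beta + |f|$; averaging these two bounds produces
\[
|x - z| \;\le\; \tfrac{1}{2}\bigl(\alpha + \beta + |e| + |f|\bigr) \;\le\; \tfrac{1}{2}\bigl(\alpha + \beta + 2|E|\bigr) \;\le\; \tfrac{3}{2}(\alpha + \beta),
\]
which is the desired bound.

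The main obstacle I expect is producing the decomposition with $|e| + |f| \le 2|E|$. The intuition is that since $E$ is concentrated on the small block $A^* \times B^*$ with $|A^*| < d_A$ and $|B^*| < d_B$, one can try to build $e$ column-by-column (for each $j \in B^*$ choosing a $C_A$-codeword whose restriction to $A^*$ matches $E|_{A^* \times \{j\}}$) and build $f$ row-by-row symmetrically; the quantitative form of $w$-robustness is precisely what ensures the ``excess'' weight of these local choices outside the support block remains bounded, so that the total weight $|e| + |f|$ is at most $2|E|$. The naive constructions supported only on $A^* \times B^*$ fail because the codes have distances exceeding $|A^*|$ and $|B^*|$, so the decomposition must genuinely extend beyond the support of $E$, and controlling the extent of this extension is where robustness must be used non-trivially.
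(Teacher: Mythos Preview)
Your overall architecture is sound: defining $E = y_A + y_B \in C_{AB}$, finding a decomposition $E = e + f$ with $e \in C_A \otimes \bb F_2^n$, $f \in \bb F_2^n \otimes C_B$ and $|e|+|f|\le 2|E|$, setting $z = y_A + e = y_B + f \in C_A \otimes C_B$, and then averaging the two triangle-inequality bounds is exactly the right route to the constant $3/2$. The paper itself does not reprove this proposition (it is quoted from~\cite{LZ22}), so there is no direct comparison to be made; your approach is essentially the standard one.

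However, there is a genuine gap: you have misread what $w$-robustness gives. Robustness does \emph{not} say that $E$ is supported on the small block $A^*\times B^*$; it says $E$ is supported on the \emph{union} $(A^*\times[n])\cup([n]\times B^*)$ of a few full rows and a few full columns. Your entire final paragraph (``since $E$ is concentrated on the small block $A^*\times B^*$\ldots'') is built on this misreading, and the column-by-column construction you sketch does not make sense as written: the $j$-th column of $E$ can be nonzero far outside $A^*$.

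The correct way to produce the decomposition is exactly Lemma~\ref{lem:rowcolsum} of the present paper: since $|A^*|<d_A$ and $|B^*|<d_B$, one obtains $E = \mathbf c + \mathbf r$ with $\mathbf c \in C_A\otimes\bb F_2^n$ supported on the columns $B^*$ and $\mathbf r \in \bb F_2^n\otimes C_B$ supported on the rows $A^*$. Outside the overlap block $A^*\times B^*$ the supports of $\mathbf c$ and $\mathbf r$ are disjoint, so there $|\mathbf c|+|\mathbf r|$ equals $|E|$; inside the block one picks up at most an extra $2|A^*||B^*| \le 2|E|^2/(d_Ad_B) \le |E|$ by the hypothesis $|E|\le w\le d_Ad_B/2$. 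Hence $|\mathbf c|+|\mathbf r|\le 2|E|$, and your averaging argument then goes through verbatim with $e=\mathbf c$, $f=\mathbf r$.
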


In Appendix~\ref{sec:robustness}, we prove Theorem~\ref{thm:dualtensorrobustness} below, which shows that for sufficiently large blocklengths, there exist dual tensor codes of sufficiently large robustness.

\begin{theorem}
\label{thm:dualtensorrobustness}
Fix constants $\varepsilon\in (0,1/28)$, $\rho\in (0,1/2)$, and $\delta\in (0,1/2)$ such that $\delta<h^{-1}(\rho)$, where $h(x)$ is the binary entropy function. For all sufficiently large $\Delta$, there exist classical codes $C_A,C_B$ of length $\Delta$ and rates $\rho_A= \rho$ and $\rho_B=1-\rho$ such that such that both the dual tensor code of $C_A$ and $C_B$ and the dual tensor code of $C_A^\perp$ and $C_B^\perp$ are $\Delta^{3/2+\varepsilon}$-robust and have distances at least $\delta\Delta$.
\end{theorem}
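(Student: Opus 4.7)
My plan is to prove Theorem~\ref{thm:dualtensorrobustness} via the probabilistic method. I would sample $C_A$ and $C_B$ uniformly at random among linear codes of blocklength $\Delta$ with rates $\rho$ and $1-\rho$. The key symmetry to exploit is that the duals $C_A^\perp$ and $C_B^\perp$ of uniformly random codes are themselves uniformly random linear codes (of rates $1-\rho$ and $\rho$), which allows both robustness claims to be reduced to essentially the same calculation with the roles of the codes swapped.

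The distance guarantees for $C_A$, $C_B$, $C_A^\perp$, and $C_B^\perp$ all follow from a standard Gilbert--Varshamov computation: for $\delta < h^{-1}(\rho)$, a uniformly random code of rate $\rho$ has a non-zero codeword of weight at most $\delta\Delta$ with probability at most $2^{\Delta(h(\delta)-(1-\rho))} = 2^{-\Omega(\Delta)}$, and a union bound across all four codes handles the distance condition with probability $1-o(1)$.

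For the robustness of $C_{AB} = C_A \otimes \bb F_2^\Delta + \bb F_2^\Delta \otimes C_B$, I would use a first-moment argument. Call $X \in C_{AB}$ \emph{bad} if $|X|=m \le \Delta^{3/2+\varepsilon}$ but more than $m/d_A$ columns (or, by symmetry, more than $m/d_B$ rows) of $X$ are non-zero. For each choice of row-support size $a$, column-support size $b > m/d_A$, and weight $m$, the number of candidate matrices with this profile is at most $\binom{\Delta}{a}\binom{\Delta}{b}\binom{ab}{m}$. This is multiplied by an upper bound on the probability that a fixed such matrix satisfies the defining equation $H_A X H_B^{\mathrm{T}} = 0$, which is controlled by the effective number of independent $\bb F_2$-linear constraints imposed on the random $(H_A, H_B)$ and scales like $\Theta(\Delta^2)$ when $X$ is not too degenerate. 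Showing that the sum over $(a,b,m)$ is $o(1)$ then gives the robustness guarantee.

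The main obstacle is obtaining the extra $\Delta^\varepsilon$ factor: the naïve counting is essentially tight in the critical regime $a \approx b \approx \sqrt{m}$, where $\binom{ab}{m}$ is largest relative to the probability bound and the overall sum only barely converges at $m = \Delta^{3/2}$. To push past this threshold, I would refine the counting using the decomposition $X = Y+Z$ with columns of $Y$ in $C_A$ and rows of $Z$ in $C_B$: any non-zero column of $Y$ has weight at least $d_A \ge \delta\Delta$ by the already-established distance bound, and likewise for rows of $Z$. This reduces the count of sparse dual-tensor codewords to a count of compatible pairs of sparse column-supports in $C_A$ and sparse row-supports in $C_B$, whose joint entropy is strictly smaller than $\log\binom{ab}{m}$ and provides the polynomial slack needed. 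Applying the symmetric argument to $(C_A^\perp, C_B^\perp)$ together with a final union bound then yields the existence of $C_A, C_B$ satisfying all requirements simultaneously for all sufficiently large $\Delta$.
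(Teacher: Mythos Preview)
Your high-level framework matches the paper: probabilistic method over uniformly random codes, the observation that $(C_A^\perp,C_B^\perp)$ is again a uniform pair so the two robustness claims reduce to one, and Gilbert--Varshamov plus a union bound for the four distance conditions. The gap is in the robustness step. The quantity $\Pr_{H_A,H_B}[H_AXH_B^{\mathrm T}=0]$ is governed by the \emph{rank} of $X$ (and of $H_AX$), not by its support profile $(a,b,m)$: for $\rk X=r$ the bound is only $2^{-\Theta(\Delta r)}$, so your ``scales like $\Theta(\Delta^2)$ when $X$ is not too degenerate'' hides exactly the hard case. Your proposed fix via the decomposition $X=Y+Z$ does not escape this: the decomposition (and the fact that nonzero columns of $Y$ have weight $\ge d_A$) depends on the random codes you are averaging over, so it cannot be used to shrink the \emph{a priori} count of candidate matrices, and if you instead sum over pairs $(Y,Z)$ you are back to the same rank-governed probability for each factor. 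Nothing in the proposal separates low-rank from high-rank $X$, and that separation is where the $\Delta^\varepsilon$ slack actually comes from.

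The paper's route is structurally different from a direct first-moment bound on ``bad'' low-weight codewords. It introduces a proxy property, \emph{sparse robustness} (no nonzero codeword with every row and column of weight $\le\Delta^{1/2+2\varepsilon}$), and shows that sparse robustness of all $\Delta^{1-\varepsilon}$-punctured codes implies $\Delta^{3/2+\varepsilon/2}$-robustness (Lemma~\ref{lem:sparse_implies_robust}). Sparse robustness itself is proved by a rank-based union bound over the set $\mathcal X$ of sparse matrices with $>\delta\Delta/2$ nonzero rows and columns: split $\mathcal X$ at $\rk X=\Delta^{1/2+b}$ with $b\in(2\varepsilon,3\varepsilon)$; low-rank sparse matrices are few (Lemma~\ref{lem:lowrank_card_bound}) and combine with a rank lower bound $\rk(H_AX)\ge(\delta/2)\Delta^{1/2-2\varepsilon}$ borrowed from~\cite{LZ22}; high-rank matrices get a strong probability bound via Lemma~\ref{lem:rank_prob}. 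The puncturing union bound then costs only $2^{\tilde\Theta(\Delta^{1-\varepsilon})}$, which is absorbed. If you want to salvage your outline, the missing ingredient is precisely this rank dichotomy together with a nontrivial lower bound on $\rk(H_AX)$ for sparse $X$ with many nonzero rows.
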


With these ingredients, we can describe the construction in Theorem~\ref{thm:QTCconstruction} in more detail. We first choose a prime power $q=\Delta-1$ sufficiently large such that we can use Theorem~\ref{thm:dualtensorrobustness} to find $C_A,C_B$ with robustness parameter $\Delta^{3/2-\varepsilon}$. Then the infinite family of left-right Cayley complexes is defined using $G=\mathrm{PSL}_2(q^i)$ for increasing values of $i$ and $A,B$ as in Lemma~\ref{lem:PSL}. Note that the sizes of the groups satisfy $|G|=\frac{1}{2}q^i(q^{2i}-1)\to \infty$.

We remark that in~\cite{LZ22}, a version of Theorem~\ref{thm:dualtensorrobustness} was shown for robustness parameter $\Delta^{3/2-\varepsilon}$, but in the proof of correctness of our decoder, a larger parameter $\Delta^{3/2+\varepsilon}$ is needed. Because the proof of Theorem~\ref{thm:QTCconstruction} given in~\cite{LZ22} is valid even for negative values of $\varepsilon$, the existence of dual tensor codes with higher robustness implies a larger distance of the code itself, $d\ge \frac{\delta}{4\Delta^{3/2-\varepsilon}}n$. At the same time, the larger robustness parameter eliminates the need for resistance to puncturing required in~\cite{LZ22}, thus simplifying the overall description of the quantum Tanner code.

\section{Decoding algorithm}\label{sec:decoder}

In this section, we give a description of our decoder for quantum Tanner codes. The quantum Tanner codes we consider are those described in the previous section with distance $d\ge \frac{\delta}{4\Delta^{3/2-\varepsilon}}n$, constructed using classical dual tensor codes of robustness $\Delta^{3/2+\varepsilon}$ as the local codes. In the decoding problem, an unknown (Pauli) error is applied to the code. We may only extract the syndrome of the error by measuring stabilizers, and based on the syndrome, apply corrections. We succeed in decoding if the correction we applied is equal to the error, up to a stabilizer (which has no effect on the codespace). Because quantum Tanner codes are CSS codes, it suffices to consider $X$ and $Z$ errors separately. If we have an algorithm to correct for errors that are purely a product of $X$ operators and another one for a product of $Z$ operators, a general error will be corrected after running both algorithms. Furthermore, since the code is symmetric between $X$ and $Z$, we just consider the problem of correcting $Z$ errors.

\begin{definition}[Decoding Problem]
Let $e\in \bb F_2^{Q}$ be a $Z$ error. Given the syndrome $\sigma=H_Xe$ as input, the task of the decoding problem is to output a correction $f\in \bb F_2^{Q}$ such that $e-f\in \eu C_Z^\perp$.
\end{definition}

Our decoder is similar in flavor to the small-set-flip decoder used on certain hypergraph product codes~\cite{LTZ15}. Small-set-flip is an iterative decoder, where in each step the decoder tries to decrease the syndrome weight by looking for corrections within the support of a $Z$ generator. If the initial error weight is less than the code distance, then such a correction can always be found, and this implies that the decoder can successfully errors of weight less than a constant fraction of the code distance~\cite{LTZ15}.

In our case, the syndrome weight is not a very well-defined concept due to the presence of the local codes. Because the $X$ stabilizers are generated by local tensor codes $C_1=C_A^\perp\otimes C_B^\perp$, defining the Hamming weight of the syndrome involves choosing a basis for $C_1$. Unfortunately, there is no canonical choice of basis, and different choices will give different Hamming weights of a given error. We address this issue by introducing the concept of a potential function. Recall that an element $x\in\bb F_2^{Q}$ is a codeword of $\eu C_X=T(\mc G_1^\square,C_1^\perp)$ if and only if every local view $\res{x}{Q(v)}$, $v\in V_1$ is a codeword of $C_1^\perp$. We define the potential by the distance of the local view to the codespace, which can be inferred from the syndrome. More formally, we have the following definition:

\begin{definition}[Local and Global Potential Functions]
Let $e\in \bb F_2^{Q}$ be an error. Define the \emph{local potential} at a vertex $v\in V_1$ by the Hamming distance
\begin{equation}
    U_v(e) = d\left(\res{e}{Q(v)},C_1^\perp\right)\, .
\end{equation}
The \emph{global potential} is defined as
\begin{equation}
    U(e) = \sum_{v\in V_1} U_v(e)\, .
\end{equation}
\end{definition}

The local potential is the minimum weight of a correction that is needed to take the local view of the error (or corrupted codeword) back into the local codespace $C_1^\perp$. Thus, it is a quantity that can be computed just from the syndrome. We will abuse notation and also write $U_v(\sigma)=U_v(e)$ and $U(\sigma)=U(e)$. Note that in absence of a local code, in other words a local code where the codewords are the vectors of even Hamming weight, the local potential is simply either 0 or 1 depending on if the constraint is satisfied, so it coincides with the Hamming weight of the syndrome.

Our decoding algorithm (Algorithm~\ref{alg:decoder}) runs by looking for bits to flip in local views that will decrease the global potential.

\begin{algorithm}
	\caption{Decoder for quantum Tanner codes}
	\label{alg:decoder}
	\textbf{Input:} A syndrome $\sigma=H_Xe\in \bb F_2^{|V_1|\dim C_1}$ of an error $e\in\bb F_2^{Q}$.\\
	\textbf{Output:} A correction $f\in \bb F_2^{Q}$ for $e$.
	\begin{algorithmic}
    \State $f\gets 0$
    \State $U\gets U(\sigma)$
    \While {$U>0$} 
      \State Look for a vector $z\in\bb F_2^{Q}$ supported on a local view $Q(v)$, $v\in V_0\cup V_1$ such that $U(\sigma+H_Xz)<U$
      \State $f\gets f+z$
      \State $\sigma\gets \sigma+H_Xz$
      \State $U\gets U(\sigma)$
    \EndWhile\\
    \Return $f$
    \end{algorithmic}
\end{algorithm}

We will show that Algorithm~\ref{alg:decoder} succeeds in the decoding problem if the initial error has weight at most a constant fraction of code distance; that is, it can correct all errors up to some linear weight. The main difficulty of the proof is in showing that there always exists a vector $z$ that decreases the global potential when flipped. This is captured in the following theorem, which we prove in the next section.

\begin{theorem}
\label{thm:correction_exists}
Let $e\in\bb F_2^{Q}$ be an error of weight $|e| \le \delta n/6\Delta^{3/2-\varepsilon}$  with syndrome $\sigma=H_Xe$. Then there exists $v \in V_0\cup V_1$ and some $z\in\bb F_2^{Q}$ supported on the local view $Q(v)$, such that $U(\sigma+H_Xz)<U(\sigma)$.
\end{theorem}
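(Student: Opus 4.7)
The plan is to argue by contradiction: assume that no local flip $z$ supported on any local view $Q(v)$, for $v\in V_0\cup V_1$, strictly decreases the global potential, and derive $|e|>\delta n/(6\Delta^{3/2-\varepsilon})$, contradicting the hypothesis of the theorem.

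First, for each $v\in V_1$, I would construct a minimum-weight local correction $f_v^*\in\bb F_2^{Q(v)}$ with $\res{e}{Q(v)}+f_v^*\in C_1^\perp$, so that $|f_v^*|=U_v(e)$. Since $C_1^\perp=C_A\otimes\bb F_2^\Delta+\bb F_2^\Delta\otimes C_B$ is a dual tensor code that is $\Delta^{3/2+\varepsilon}$-robust (Theorem~\ref{thm:dualtensorrobustness}), any codeword of $C_1^\perp$ of weight at most $\Delta^{3/2+\varepsilon}$ is supported on a small number of complete rows and complete columns of $Q(v)\cong A\times B$. I would use this to extract a row/column decomposition of the residual $e_v+f_v^*$, and transfer it to a structured decomposition $f_v^*=\alpha_v+\beta_v$ with $\alpha_v$ supported on few complete rows and $\beta_v$ on few complete columns. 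The hypothesis $|e|\le\delta n/(6\Delta^{3/2-\varepsilon})$ together with the trivial bound $\sum_{v\in V_1}U_v(e)\le 2|e|$ (each erroneous bit lies in exactly two $V_1$ local views) ensures that the vast majority of residuals lie below the robustness threshold, so the decomposition applies uniformly.

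Next, I would analyze the candidate flips. At each $v\in V_1$, flipping $z=f_v^*$ brings $U_v$ to zero, so the assumption that this does not decrease $U$ combined with the triangle inequality summed over $v$'s $\mc G^\square_1$-neighbors forces the increase $U_{v'}(e+f_v^*)-U_{v'}(e)$ to equal $|\res{f_v^*}{Q(v)\cap Q(v')}|$ at each neighbor $v'$—a rigid tightness condition. On the $V_0$ side, for each $v_0\in V_0$ I would assemble a candidate flip $z_{v_0}\in\bb F_2^{Q(v_0)}$ by combining, for each $V_1$-neighbor $v'$ of $v_0$, the restrictions $\res{f_{v'}^*}{Q(v_0)\cap Q(v')}$ along the shared row or column. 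When the row- and column-restrictions fit together consistently at the $\Delta^2$ cells of $Q(v_0)$, flipping $z_{v_0}$ strictly decreases $U$ by the sum of the local decreases at the $V_1$-neighbors, contradicting the hypothesis.

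The contradiction is then extracted by expansion. The consistency constraints forced by the hypothesis, summed over $v_0\in V_0$, combine with the expander mixing lemma applied to $\mc G^\square_0$ and $\mc G^\cup$ (using the spectral bounds in Table~\ref{tab:graph_parameters}) to yield a lower bound $\sum_{v\in V_1}U_v(e)\ge \Omega(n/\Delta^{3/2-\varepsilon})$. Combined with $\sum_v U_v(e)\le 2|e|$, this gives $|e|\ge\Omega(n/\Delta^{3/2-\varepsilon})$, the desired contradiction. I expect the main obstacle to be Step~2: making the row/column decomposition of $f_v^*$ rigorous and showing that the $V_0$ consistency constraints combine cleanly through the expander mixing lemma. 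The strengthened robustness exponent $3/2+\varepsilon$ (rather than $3/2-\varepsilon$ as in~\cite{LZ22}) is essential precisely here—it provides the slack between the local decomposition threshold and the typical residual weight required to drive the expansion argument to a strictly decreasing conclusion.
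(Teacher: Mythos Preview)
Your high-level shape is right (local minimum-weight corrections, robustness of $C_1^\perp$, a $V_0$ analysis, expansion), but the contradiction you aim for is not the one that actually closes the argument, and a couple of intermediate claims are overstated.

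\textbf{The endgame is different and your version is unsubstantiated.} You propose to derive a \emph{global} lower bound $\sum_{v\in V_1}U_v(e)\ge\Omega(n/\Delta^{3/2-\varepsilon})$ from ``consistency constraints summed over $v_0\in V_0$'' and the expander mixing lemma, then contradict $\sum_v U_v(e)\le 2|e|$. The paper does \emph{not} lower-bound $U(e)$ globally. Instead, it first replaces $e$ by the minimum-weight representative of the coset $e+\eu C_Z^\perp$ (this is free, since $U$ depends only on $\sigma$), and then, under the hypothesis that no flip in any $Q(v)$ lowers $U$, it locates a single special vertex $v_0\in V_0$ and shows there that $e_{v_0}$ has weight $\ge(\alpha\delta/4)\Delta^2$ while $d(e_{v_0},C_A\otimes C_B)\le O(\Delta^{3/2+\varepsilon})/\delta$. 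Since $C_A\otimes C_B$ supports the $Z$-stabilizers at $v_0$, this gives a stabilizer $c_0$ with $|e+c_0|<|e|$, contradicting reducedness. Your plan omits the reduction to a minimum-weight coset representative entirely, and I do not see how the ``summed consistency constraints'' would yield the global bound you claim; at best the expansion arguments give you the existence of one good $v_0$, not a sum over all of them.

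\textbf{Two overstated intermediate claims.} First, the ``rigidity'' you assert---that if flipping $f_v^*$ does not decrease $U$ then $U_{v'}(e+f_v^*)-U_{v'}(e)=|\res{f_v^*}{Q(v)\cap Q(v')}|$ at \emph{each} neighbor---does not follow; non-decrease of $U$ only gives that the \emph{sum} of the increases over neighbors is at least $|f_v^*|$. The paper obtains the needed rigidity differently, via single-qubit metastability and an $R$-flipping lemma, and additionally requires a ``low-overlap'' property ($|R_v\cap c|\le|c|/2$ for every $c\in C_1^\perp$) whose failure is handled as a separate case. Second, the robustness applies to the \emph{codeword} $c_v=e_v+f_v^*$, not to $f_v^*$; the paper uses it to show columns/rows of $c_v$ are near $C_A$/$C_B$ codewords (for ``normal'' vertices), and a separate expansion argument controls the few ``exceptional'' vertices where $|c_v|$ exceeds the robustness threshold. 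Your claim that the hypothesis on $|e|$ makes the decomposition apply ``uniformly'' skips this normal/exceptional split, which is where the expander mixing lemma actually enters.
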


From this property, we can show that the algorithm will output a valid correction. We do this by proving a statement that applies to a more general class of small-set-flip type decoders based on a potential function. The proof follows the same idea as that of Lemma 10 in~\cite{LTZ15}.

\begin{lemma}
\label{lem:decoder_from_potential}
Let $\alpha<1, s, c$ be constants. Let $\mc C$ be an $[[n,k,d]]$ quantum CSS code defined by the classical codes $\eu C_X,\eu C_Z\subseteq\bb F_2^n$. Let $U: \bb F_2^n \to \mathbb Z_{\ge 0}$ be a (global) potential function that is constant on cosets of $\eu C_X$, satisfies $U(e)=0$ if and only if $e\in \eu C_X$, and $U(e)\le s|e|$ for all $e\in \bb F_2^n$. Suppose we have an iterative decoder that, given the syndrome of a non-zero $Z$ error of weight less than $\alpha d$, can decrease the potential by applying an $X$ operator of weight at most $c$. Then the decoder can successfully correct errors of weight less than $\alpha d/(1+sc)$.
\end{lemma}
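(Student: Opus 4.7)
The plan is to track the ``effective error'' $e_i = e_0 + z_1 + \cdots + z_i$ after $i$ iterations of the decoder, where $e_0$ denotes the initial $Z$ error of weight $|e_0| < \alpha d/(1+sc)$ and $z_1, z_2, \ldots$ are the corrections the decoder applies. Writing $U_i = U(e_i)$, I want to show that the algorithm terminates at some step $T$ with $e_T \in \eu C_Z^\perp$, so that the output $f = z_1 + \cdots + z_T$ satisfies $e_0 - f = e_T \in \eu C_Z^\perp$, which is what being a valid correction means.

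The proof is essentially a bookkeeping argument combining two estimates. First, $U_i$ is a nonnegative integer that strictly decreases at each iteration, so after $i$ steps $i \le U_0 - U_i \le U_0 \le s|e_0|$; in particular the algorithm halts after at most $s|e_0|$ steps. Second, the triangle inequality gives $|e_i| \le |e_0| + \sum_{j \le i}|z_j| \le |e_0| + ic$, since each applied correction has weight at most $c$ by hypothesis. Combining these, at every step one has $|e_i| \le |e_0|(1+sc) < \alpha d$, which keeps us inside the regime where the decoder is promised to make progress.

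The induction then runs cleanly: as long as $U_i > 0$, the effective error has nonzero syndrome and weight strictly below $\alpha d$, so the lemma's hypothesis supplies a $z_{i+1}$ of weight $\le c$ that strictly decreases $U$. Termination is forced by the first estimate, and at the terminal step $U_T = 0$ yields $e_T \in \eu C_X$ by assumption on $U$. Because $\alpha < 1$ and $d \le d_Z$, we get $|e_T| < \alpha d < d_Z$, and since $d_Z = \min_{x \in \eu C_X\setminus \eu C_Z^\perp}|x|$, any element of $\eu C_X$ of weight less than $d_Z$ must already lie in $\eu C_Z^\perp$. There is no serious obstacle here; the only thing to be careful about is the induction that preserves the weight bound $|e_i| < \alpha d$ needed to invoke the decoder's hypothesis at every step, and this is exactly what the combined estimate above provides.
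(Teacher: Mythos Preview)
Your proof is correct and follows essentially the same approach as the paper: both track the effective error $e_i$, bound $|e_i| \le |e_0| + ci \le |e_0| + c\,U(e_0) \le (1+sc)|e_0| < \alpha d$ using the telescoping of the potential decrease, and conclude by noting that the terminal error lies in $\eu C_X$ with weight below $d$ (you are slightly more explicit in invoking $d \le d_Z$ here). The structure and key estimates are identical.
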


\begin{proof}
Let $x'=x+e\in \bb F_2^n$ be a corrupted codeword with $x\in\eu C_X$ and error $e$ of weight $|e|<\frac{\alpha d}{1+sc}$. The decoder outputs a sequence of corrections $0=f_0, f_1, f_2,\dots$ such that the resulting errors $e_i=e+f_i$ satisfy $|e_{i+1}-e_i|\le c$ and $U(e_i)-U(e_{i+1})\ge 1$ for all $i$. Suppose we have decoded up to step $j$. Then
\begin{align}
    |e_j| &\le |e_0| + |e_1-e_0| + \dots + |e_j-e_{j-1}|\\
    &\le |e| + c + \dots + c\\
    &\le |e| + c(U(e_0)-U(e_1)) + \dots + c(U(e_{j-1})-U(e_j))\\
    &= |e| + c(U(e_0)-U(e_j))\\
    &\le (1+sc)|e|\\
    &< \alpha d\, .
\end{align}
So either $U(e_j)=0$, or the decoder can find the next correction $f_{j+1}$ to produce $e_{j+1}$. Eventually, the decoder will output $e_J$ such that $U(e_J)=0$. In other words, $e_J\in\eu C_X$. But since $|e_J|<\alpha d<d$, it must be in $\eu C_Z^\perp$, and we have decoded to the correct codeword.
\end{proof}

We can now state our main theorems.

\begin{theorem}\label{thm:adversarial_errors}
Fix $\varepsilon\in (0,1/28)$, $\rho\in (0,1/2)$, and $\delta\in (0,1/2)$ with $\delta<h^{-1}(1-\rho)$, where $h(x)$ is the binary entropy function. For some $\Delta$ sufficiently large, there is an infinite family of quantum Tanner codes with parameters
\[[[n,k\ge (1-2\rho)^2n,d\ge \frac{\delta}{4\Delta^{3/2-\varepsilon}}n]]\]
with $n\to\infty$, such that for each $n$, Algorithm~\ref{alg:decoder} can correct all errors of weight
\begin{equation}
    |e| \le \frac{\delta n}{6\Delta^{3/2-\varepsilon}(1+2\Delta^2)}\, .
\end{equation}
\end{theorem}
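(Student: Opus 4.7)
The plan is to derive Theorem~\ref{thm:adversarial_errors} by combining three ingredients already available: Theorem~\ref{thm:QTCconstruction} for the existence of the asymptotically good quantum Tanner code family, Theorem~\ref{thm:correction_exists} for the core structural guarantee that every non-trivial syndrome admits a local-view correction strictly decreasing the global potential, and the abstract Lemma~\ref{lem:decoder_from_potential}. Concretely, I would first instantiate the code family via Theorem~\ref{thm:QTCconstruction}, using local codes of robustness $\Delta^{3/2+\varepsilon}$ from Theorem~\ref{thm:dualtensorrobustness} and the Ramanujan complexes of Lemma~\ref{lem:PSL}. This yields an infinite family with the claimed $[[n, k\ge (1-2\rho)^2 n, d \ge \frac{\delta}{4\Delta^{3/2-\varepsilon}}n]]$ parameters.

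Next I would verify the hypotheses of Lemma~\ref{lem:decoder_from_potential} for $U$. By construction $U(e)$ depends only on $H_X e$, so it is constant on cosets of $\eu C_X$; moreover $U(e)=0$ iff each local view of $e$ lies in $C_1^\perp$, which by the Tanner code definition $\eu C_X = T(\mc G^\square_1, C_1^\perp)$ is equivalent to $e \in \eu C_X$. For the linear bound $U(e)\le s|e|$, I use $U_v(e)\le |\res{e}{Q(v)}|$ (taking the zero codeword in $C_1^\perp$) and note that each square in $Q$ has exactly two corners in $V_1$, so summing over $v\in V_1$ double-counts each qubit, yielding $U(e)\le 2|e|$, i.e. $s=2$.

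Finally I would identify the remaining parameters. Each candidate correction $z$ in Algorithm~\ref{alg:decoder} is supported on a local view of size $\Delta^2$, so $c=\Delta^2$. Setting the threshold $\alpha d := \delta n/6\Delta^{3/2-\varepsilon}$, Theorem~\ref{thm:correction_exists} supplies the required potential-decreasing $z$ for every non-zero error below this threshold; and the distance bound $d\ge \frac{\delta}{4\Delta^{3/2-\varepsilon}}n$ gives $\alpha\le 2/3 < 1$, so the hypothesis of Lemma~\ref{lem:decoder_from_potential} is met. Plugging $\alpha d$, $s=2$, $c=\Delta^2$ into the conclusion $\alpha d/(1+sc)$ yields exactly
\[
\frac{\delta n}{6\Delta^{3/2-\varepsilon}(1+2\Delta^2)},
\]
which is the claimed correction radius.

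There is no substantial obstacle in this argument itself: it is essentially a bookkeeping step that repackages the real technical content, Theorem~\ref{thm:correction_exists}, into the target statement. The only points that require care are correctly extracting $s=2$ from the bipartite incidence of the left-right Cayley complex (two $V_1$-corners per square), confirming that $\alpha d$ stays safely below $d$, and noting that corrections from both $V_0$-type and $V_1$-type local views respect the same weight bound $c=\Delta^2$ so that Lemma~\ref{lem:decoder_from_potential} applies uniformly.
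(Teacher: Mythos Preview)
Your proposal is correct and follows essentially the same approach as the paper: instantiate the code family from Section~\ref{sec:quantum_tanner} with the improved robustness $\Delta^{3/2+\varepsilon}$, verify the hypotheses of Lemma~\ref{lem:decoder_from_potential} with $s=2$ (from the two $V_1$-corners per square) and $c=\Delta^2$ (the local view size), and set $\alpha d = \delta n/6\Delta^{3/2-\varepsilon}$ via Theorem~\ref{thm:correction_exists}. Your writeup is in fact slightly more explicit than the paper's, in that you spell out why $U$ is constant on cosets of $\eu C_X$, why $U(e)=0$ iff $e\in\eu C_X$, and why $\alpha<1$.
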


\begin{proof}
The infinite family of quantum Tanner codes is as described in Section~\ref{sec:quantum_tanner} (with distance parameter from the improved robustness of the classical local codes). To prove the decodable distance, consider the parameters in Lemma~\ref{lem:decoder_from_potential}. Every bit in an error can at most increase the local potentials of the two incident $V_1$ vertices by one each. This implies the bound $U(e)\le 2|e|$, so we can take $s=2$. Since at each step, the algorithm flips sets within a local view, we set $c=\Delta^2$. From Theorem~\ref{thm:correction_exists}, the decoder can reduce the global potential when the error has weight up to $\alpha d=\delta n/6\Delta^{3/2-\varepsilon}$. The theorem then follows from Lemma~\ref{lem:decoder_from_potential}.
\end{proof}

\begin{theorem}
\label{thm:linear_time}
Algorithm~\ref{alg:decoder} runs in time $O(n)$.
\end{theorem}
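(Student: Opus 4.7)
The plan is to separately bound (i) the number of iterations of the while loop and (ii) the cost per iteration together with the required bookkeeping, treating $\Delta$ (and hence $2^{\Delta^2}$) as constants. Adding a linear-time initialization then yields the theorem. For the iteration count, observe that each square contributes to exactly two $V_1$-local-views, so flipping a single square changes the global potential by at most $2$; this gives $U(\sigma)\le 2|e|$, which is $O(n)$ because the decoder is applied only to errors of weight $O(n)$ by Theorem~\ref{thm:adversarial_errors}. Each iteration strictly decreases $U$ by at least $1$, so the loop terminates after at most $O(n)$ iterations.

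For the per-iteration cost, I would exploit a locality property: a flip supported on $Q(v)$ can only alter those $U_w$ with $w\in V_1$ sharing a square with $v$, of which there are at most $\Delta^2$. In particular, whenever a flip at $Q(v)$ strictly decreases $U$, some $w\in V_1$ in the constant-size graph-neighborhood of $v$ must currently satisfy $U_w>0$; otherwise the flip could only create new nonzero potentials and hence raise $U$. I would therefore maintain a queue $S$ of all $w\in V_1$ with $U_w>0$ and restrict the search for improving flips to the candidate vertices $v$ in the constant-size neighborhoods of elements of $S$. Before entering the main loop, precompute a lookup table of Hamming distances from each $y\in \bb F_2^{\Delta^2}$ to $C_1^\perp$ (a table of constant size), compute every local potential in $O(1)$ time, and seed $S$ in $O(n)$ total time.

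Each loop iteration then pops a $w\in S$, enumerates the $O(\Delta^2)$ candidate flip-vertices $v$ near $w$ and the $2^{\Delta^2}$ vectors $z\in\bb F_2^{Q(v)}$, and evaluates the induced change in $U$ by reading off only the $O(\Delta^2)$ affected local-potential entries via the lookup table. If an improving flip is found, apply it, update $\sigma$ and the $O(\Delta^2)$ affected local potentials in $O(1)$ time, and push any newly nonzero-potential vertices onto $S$; otherwise, discard $w$ from $S$. By Theorem~\ref{thm:correction_exists}, whenever $U>0$ some improving flip exists at a vertex near a nonzero-potential $V_1$-vertex, so this restricted search indeed succeeds. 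A vertex enters $S$ only at initialization or because a flip was just applied at one of its neighbors; each flip causes $O(\Delta^2)$ such insertions, so the total number of pops across the entire execution is $O(n)+O(\Delta^2)\cdot O(n)=O(n)$, and each pop does $O(1)$ work.

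The delicate part is the locality argument that justifies restricting the search to neighborhoods of $S$: any improving flip must have its support near a currently nonzero local potential, which follows from the incidence structure of $\Cay_2(A,G,B)$ since $H_X z$ with $z$ supported on $Q(v)$ can only modify local views $Q(w)$ for $w$ sharing a square with $v$. Together with the iteration bound and initialization cost, this gives the claimed $O(n)$ total running time.
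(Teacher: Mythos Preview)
Your approach is essentially the same as the paper's: both rely on the LDPC property to argue that a flip in one local view affects only $O(1)$ other local views, so that all relevant data can be updated in $O(1)$ time per iteration, and the $O(n)$ bound on the initial potential then bounds the number of iterations. The paper's proof is much more terse---it simply says to ``store the best candidate correction for the local view'' at each vertex and update these after each flip---whereas you spell out a work-queue mechanism for locating the next improving flip. Your locality observation (an improving flip at $v$ forces some $V_1$-neighbour $w$ of $v$ to satisfy $U_w>0$) is correct, and the amortized counting of pops is the right idea.

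There is, however, an internal inconsistency in your bookkeeping that creates a genuine gap if read literally. You write that after applying a flip you ``push any newly nonzero-potential vertices onto $S$,'' but later you count ``$O(\Delta^2)$ such insertions'' per flip, which corresponds to pushing \emph{all} $V_1$-neighbours of the flipped vertex. Only the second rule is correct. With the ``newly nonzero'' rule, a vertex $w$ with $U_w>0$ can be popped, found to have no improving flip nearby, and discarded; a later flip at some $v'$ can then change the local view of another $V_1$-neighbour $w'$ of some $v$ near $w$ without making $w'$ newly nonzero (it may have been positive already), so neither $w$ nor $w'$ is in $S$ even though an improving flip now exists at $v$. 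The queue can then empty while $U>0$, and your appeal to Theorem~\ref{thm:correction_exists} does not rescue you, since that theorem only guarantees an improving flip exists \emph{somewhere}, not near a vertex currently in $S$.

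The fix is exactly what your insertion count already presumes: after a flip at $v$, push every $w\in V_1$ sharing a square with $v$ (regardless of whether $U_w$ was previously zero). This maintains the invariant that any $v$ whose improving-flip status could have changed since it was last examined has a $V_1$-neighbour in $S$, and your amortized bound of $O(n)$ total pops goes through unchanged. Also, your citation of Theorem~\ref{thm:adversarial_errors} to justify $|e|=O(n)$ is unnecessary: $|e|\le |Q|=n$ trivially, so $U\le 2n$ holds for any input, independent of the correctness theorem.
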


\begin{proof}
To compute the global potential $U$, we must compute $O(n)$ local potentials. Each local potential is a function of the constant-sized local view and can be computed in $O(1)$ time by enumerating vectors supported in the local view. At the same time, we can store the best candidate correction for the local view. Thus, the initialization runs in time $O(n)$.

In each iteration, we apply corrections in a constant-sized region, so only a constant number of local views and candidate corrections need to be updated for the syndrome and local potentials by the LDPC property. Each iteration of the algorithm runs in a constant amount of time, and there can be at most $O(n)$ iterations. Hence, the total runtime of Algorithm~\ref{alg:decoder} is $O(n)$.
\end{proof}

The correctness of the decoding algorithm implies a form of soundness for the quantum code. This notion is a related to local testability but weaker because it only applies to errors of sufficiently small weight.

\begin{corollary}[Soundness]\label{cor:soundness}
If $e$ is an error that is correctable using Algorithm~\ref{alg:decoder}, then $U(e)\ge \Delta^{-2}d(e,\eu C_Z^\perp)$.
\end{corollary}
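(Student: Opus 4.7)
The plan is to chain two easy bounds: one showing that the correction $f$ output by the decoder is close to $e$ in Hamming distance (so it upper bounds $d(e, \eu C_Z^\perp)$), and one showing that the size of $f$ is controlled by the initial potential $U(e)$. Combining them yields the claimed inequality.

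First, I would unpack what ``correctable'' means for the algorithm. By the loop condition, when Algorithm~\ref{alg:decoder} terminates it returns some $f\in \bb F_2^Q$ with $U(\sigma + H_X f) = U(e - f) = 0$. If $e$ is correctable then by definition $e - f \in \eu C_Z^\perp$ (this is the success criterion of the decoding problem), so in particular $d(e, \eu C_Z^\perp) \le |e - (e-f)| = |f|$. Thus it suffices to bound $|f|$ from above by $\Delta^2 U(e)$.

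Second, I would track the decoder step by step. Let $z_1, z_2, \dots, z_T$ be the sequence of vectors the decoder flips, so that $f = \sum_{t=1}^T z_t$. Each $z_t$ is, by construction, supported on a single local view $Q(v)$ of size $\Delta^2$, hence $|z_t| \le \Delta^2$. By the triangle inequality, $|f| \le \sum_t |z_t| \le \Delta^2 T$. Moreover, each iteration strictly decreases the integer-valued global potential by at least $1$, and $U$ is nonnegative, so the number of iterations satisfies $T \le U(\sigma_0) - U(\sigma_T) \le U(e)$. Combining, $|f| \le \Delta^2 U(e)$, and therefore $d(e, \eu C_Z^\perp) \le |f| \le \Delta^2 U(e)$, which rearranges to the desired $U(e) \ge \Delta^{-2} d(e, \eu C_Z^\perp)$.

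There is essentially no technical obstacle here — the statement is a direct bookkeeping consequence of (a) the termination condition $U = 0$ implying the residual error lies in $\eu C_Z^\perp$ once the error is correctable, (b) the fact that each decoder step flips at most $\Delta^2$ bits, and (c) the fact that each step drops $U$ by at least one. The only minor subtlety worth double-checking is that $U$ is well-defined on cosets of $\eu C_X$ (so that $U(\sigma + H_X z) = U(e - z)$ makes sense as written), but this was already noted in the setup of Lemma~\ref{lem:decoder_from_potential}.
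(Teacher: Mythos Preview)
Your proof is correct and follows essentially the same approach as the paper: the paper's proof simply observes that the algorithm corrects $e$ to an element of $\eu C_Z^\perp$ in at most $U(e)$ steps, each flipping at most $\Delta^2$ bits, hence $d(e,\eu C_Z^\perp)\le \Delta^2 U(e)$. Your version is a slightly more detailed unpacking of the same bookkeeping argument.
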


\begin{proof}
Using Algorithm~\ref{alg:decoder}, $e$ can be corrected to a codeword of $\eu C_Z^\perp$ in at most $U(e)$ steps. In each step, at most $\Delta^2$ bits are flipped. Therefore, we have $d(e,\eu C_Z^\perp)\le \Delta^2 U(e)$.
\end{proof}

\begin{corollary}[Threshold]\label{cor:threshold}
Let $e \in \bb F_2^n$ be a random error with each entry independently and identically distributed such that $e_i = 1$ with probability $p$ and $e_i = 0$ with probability $1-p$. Under this model, the probability that Algorithm~\ref{alg:decoder} fails to return a correction $f$ such that $e+f \in \eu C_Z^\perp$ is $O(e^{-a n})$, with $a>0$, so long as $p<p^*$, where
\begin{align}
p^* \equiv \frac{\delta}{6 \Delta^{3/2-\varepsilon} ( 1 + 2 \Delta^2)}
\end{align}
is a lower bound for the accuracy threshold under independent bit and phase flip noise.
\end{corollary}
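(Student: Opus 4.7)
The plan is to reduce the stochastic threshold statement to the adversarial guarantee of Theorem~\ref{thm:adversarial_errors} via a standard concentration-of-measure argument. Theorem~\ref{thm:adversarial_errors} ensures that Algorithm~\ref{alg:decoder} outputs a valid correction whenever the $Z$-error weight satisfies $|e|\le p^*n$, so a failure can only occur if the random error happens to have weight exceeding this threshold. The task thereby reduces to tail-bounding $\Pr[|e|>p^*n]$.

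Under the i.i.d.\ Bernoulli$(p)$ noise model, the error weight $|e|=\sum_{i=1}^{n}e_i$ is a sum of $n$ independent Bernoulli$(p)$ random variables with mean $pn$. Since $p<p^*$ by assumption, there is a constant additive gap $p^*-p>0$. Hoeffding's inequality then yields
\begin{equation}
\Pr[|e|>p^*n]=\Pr\bigl[|e|-pn>(p^*-p)n\bigr]\le\exp\bigl(-2(p^*-p)^2 n\bigr).
\end{equation}
For a general Pauli error under independent bit and phase flip noise, Algorithm~\ref{alg:decoder} is invoked separately on the $X$-type and $Z$-type error components (via the reduction noted at the start of Section~\ref{sec:decoder}); both components are i.i.d.\ Bernoulli$(p)$, so a union bound over the two runs preserves the exponential decay. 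Combining these bounds with Theorem~\ref{thm:adversarial_errors} yields the claimed failure probability $O(e^{-an})$ with $a=2(p^*-p)^2>0$.

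The argument is essentially routine once Theorem~\ref{thm:adversarial_errors} is in hand, so I do not expect a genuine obstacle. The only mild subtlety is that one must separately track the $X$- and $Z$-type error components and union bound over them, but since both have identical statistics this only costs a factor of two. The exponent $a$ obtained this way is not optimal; replacing Hoeffding by a multiplicative Chernoff bound stated in terms of the Kullback--Leibler divergence $D(p^*\|p)$ would yield a tighter constant, though the qualitative conclusion of exponential suppression at any noise rate below $p^*$ is unchanged.
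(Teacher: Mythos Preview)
Your proposal is correct and follows essentially the same approach as the paper: reduce to the adversarial guarantee of Theorem~\ref{thm:adversarial_errors}, then apply Hoeffding's inequality to bound $\Pr[|e|>p^*n]\le e^{-2n(p^*-p)^2}$, obtaining $a=2(p^*-p)^2$. Your additional remark about union-bounding over the $X$ and $Z$ components is not needed for the corollary as stated (which concerns only $Z$ errors), but it correctly justifies the accompanying claim about independent bit and phase flip noise.
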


\begin{proof}
By Theorem~\ref{thm:adversarial_errors}, the decoder is guaranteed to succeed as long as $|e| \le n p^*$. The Hamming weight of $e$ is distributed as a Binomial random variable which concentrates around the mean $n p$. For $p^*>p$, we can use Hoeffding's inequality to bound the probability that $|e| > n p^*$ as
\begin{align}
\Pr \left( |e| > n p^* \right) < e^{-2 n(p^*-p)^2}\, ,
\end{align}
which completes the proof.
\end{proof}

\section{Proof of Theorem~\ref{thm:correction_exists}}\label{sec:proof}

Before beginning the proof of Theorem~\ref{thm:correction_exists}, we first elaborate on some conventions and notation. In the remainder of the paper we will adopt the convention that a vector $x \in \mathbb{F}_2^Q$ is treated equivalently as the subset of $Q$ indicated by the vector. This allows us to write expressions such as $x \cup y \in \mathbb{F}_2^Q$ to denote the vector defined by the union of $x, y\subseteq Q$. 

We will often need to consider the restriction of a vector $x\in \mathbb{F}_2^Q$ to the set of faces $Q(v)$ incident to some vertex $v\in V$. This is called the \emph{local view} of $x$ at $v$. In a convenient abuse of notation, we will equivalently consider local views as elements of $\mathbb{F}_2^{Q(v)}$, or as elements of $\mathbb{F}_2^Q$ with support on $Q(v)$. For simplicity of notation, we write local views at $v\in V$ with a subscript $v$, for example $x_v=\res{x}{Q(v)}$. 

By the TNC condition, $Q(v)$ is in bijection with $A\times B$ so that each local view naturally defines a $\Delta\times\Delta$ matrix, i.e., $x_v \in \mathbb{F}_2^{\Delta\times\Delta}$. We will label the faces of $Q(v)$ by pairs of vertices $v_1,v_2$, where $v_1$ is connected to $v$ by an edge in $A$, and $v_2$ to $v$ by an edge in $B$. In this case, we denote the unique face defined by these vertices by $[v_1,v_2]\in Q(v)$ and we say that $v_1$ is a row vertex for $v$, and that $v_2$ is a column vertex. We will use the notation $x_v[v_1,v_2]$ to denote the entry of $x_v$ specified by the face $[v_1,v_2]$. Likewise, we will adopt the notation $x_v[v_1,\cdot]$ to denote the row of $x_v$ indexed by the row vertex $v_1$, and similarly $x_v[\cdot,v_2]$ to denote the column of $x_v$ indexed by $v_2$. Given neighboring vertices $v\in V_0$ and $v'\in V_1$, the shared row (resp. column) of the local views $x_v$ and $x_{v'}$ can be equivalently denoted by either $x_v[v',\cdot]$ or $x_{v'}[v,\cdot]$ (resp. $x_v[\cdot,v']$ or $x_{v'}[\cdot,v]$). 

Let us now define the notion of a local minimum weight correction and other associated objects.

\begin{definition}
Let $e \in \mathbb{F}_2^Q$ be a $Z$ error. For each vertex $v\in V_1$, we define $c_v(e)$ as a closest codeword in $C_1^\perp$ to the local view $e_v$. If there are multiple closest codewords, then we may fix an arbitrary one.

For each vertex $v\in V_1$, let $R^+_{v}(e)=e_v-c_v(e)\subseteq Q(v)$. Then we call $R^+_v(e)$ the \emph{local minimum weight correction} at the vertex $v$. We will denote the collection of all local minimum weight corrections by $\mc R(e) = \{R^+_v(e)\}_{v\in V_1}$. We will also define the total correction
\begin{align}
    R(e)=\bigcup \mathcal{R}(e) = \bigcup_{v\in V_1}R_{v}^+(e)\, .
\end{align}
\end{definition}

Note that the local potential at $v$ is given by
\begin{align}
U_v(e) = d\left(e_v,C_1^\perp\right) = \left|e_v-c_v(e)\right| = |R^+_v(e)|\, ,
\end{align}
and our goal is to reduce the global potential $U(e)=\sum_{v\in V_1} U_v(e)$ at every step of the decoding. When the error $e$ is understood, we will often simply write $c_v$, $R^+_v$, and $R$ for short.

We can now proceed with the proof of Theorem~\ref{thm:correction_exists}, which we split into three cases: 
\begin{enumerate}
\item In the first case, we consider whether flipping single qubits can decrease the total potential. If this is not the case, it will introduce extra structure in the set $R$. 

\item In the second case, we ask if $R$ has high overlap with a codeword of $C_1^\perp$ in a $V_1$ local view. If so, it will allow us to flip a set of qubits that together can decrease the total potential. 

\item The third and most complicated case is the one complementary to the first two, where no single qubit flip can decrease the total potential, and where $R$ has low overlap with all local codewords. The intuition here is that $R$ cannot be a very large set, so every $V_1$ local view of the error is close to the local code. Because the error ``looks like'' a codeword, we are able to apply reasoning similar to the local minimality argument in the proof of the distance of the code. In essence, the expansion of the graph allows us to find a special $V_0$ vertex whose local view contains a flip set to decrease the total potential.
\end{enumerate}
\subsection{Proof of Cases 1 and 2}

In this subsection, we prove Theorem~\ref{thm:correction_exists} for the first two cases listed above. The terminology and definitions established in this subsection will also be crucial to the proof of case 3. To consider the first case, we define the concept of a metastable configuration.

\begin{definition}
Let $e \in \mathbb{F}_2^Q$ be an error. We say that $e$ is \emph{metastable} if flipping any one qubit $q\in Q$ does not decrease the global potential. We also say that $\mc R(e)$ and $R(e)$ are metastable if they are obtained from a metastable error $e$. Note that while we only define and use metastability for an error $e$ and its configuration of local minimum weight corrections, the property of metastability is really a property intrinsic to the underlying syndrome $\sigma$.
\end{definition}

Note that case 1 pertains precisely to the case when the error $e$ is not metastable. If $e$ is not metastable then there exists some $q\in Q$ which decreases the global potential and Theorem~\ref{thm:correction_exists} follows. Therefore, in the remainder of this section we consider the case that $e$ (and hence $R$) is metastable.

\begin{definition}
Let $e\in\mathbb{F}_2^Q$ be an error, and let $\mathcal{R} = \{R^+_v(e)\}_{v\in V_1}$ be a set of local minimum weight corrections for $e$. We say that $\mc R$ is \emph{disjoint} if $R^+_v(e) \cap R^+_{v'}(e) = \emptyset$ for all $v\neq v'$.
\end{definition}

When $\mathcal{R}$ is a disjoint set of corrections we can think of it as a directed subgraph of $\mathcal G_1^{\square}$ by viewing each $R^+_v$ as the set of outgoing edges from $v$ (see Figure~\ref{fig:g_square1}). The local view $R_v$ is then the set of all edges, incoming or outgoing, incident to $v$ in this directed graph. Note that in this case, the set $\mathcal{R}$ completely defines the underlying directed graph. Conversely, given the directed subgraph, we may uniquely recover $\mathcal{R}$ by taking $R^+_v(e)$ as the set of outgoing edges at each vertex. Therefore we will identify a disjoint $\mathcal{R}$ with the directed subgraph it defines in the following. We can likewise identify the set of total corrections $R$ with the undirected graph underlying $\mc R$.

\begin{figure}
    \centering
    \input{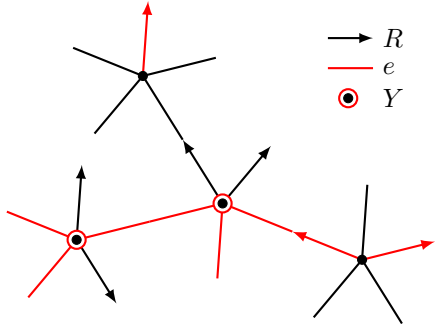}
    \caption{Subsets of $\mc G_1^\square$ indicating $e$, $R$, and elements of $Y$. In the diagram $C_1^\perp$ is the repetition code (codewords $00000$ and $11111$). Note that the red edges without an arrow are in $y$, the red edges with an arrow are in $R\cap e$, and the undecorated black edges are just the remaining edges in $\mc G^\square_1$.}
    \label{fig:g_square1}
\end{figure}

Note that $\mathcal{R}$ will always be disjoint when $e$ is a metastable error (otherwise flipping a shared qubit will lower the global potential by $2$). For a metastable error, flipping a qubit $q=(v,v')\in R_v^+$, which is a directed edge from $v$ to $v'$, decreases $U_v$ by one and increases $U_{v'}$ by one. We first prove a lemma which shows that metastable errors are somewhat rigid under additional bit-flips.

\begin{lemma}[$R$-flipping]\label{lem:R_flip}
Let $\mc R(e)$ be a directed subgraph of $\mathcal G_1^\square$ corresponding to a set of local minimum weight corrections for a metastable error $e$. Suppose furthermore that for some subset $\hat{R}\subseteq R(e)$, flipping all qubits of $\hat{R}$ does not decrease the global potential. Consider the error $e+\hat{R}$. Then a valid configuration $\mc R(e+\hat{R})$ of locally minimum weight corrections for $e+\hat{R}$ is obtained from $\mc R(e)$ by reversing the directions of all edges in $\hat{R}$. Moreover, the nearest codewords $c_v$ at each vertex remains unchanged, i.e., 
\begin{align}
c_v(e) = e_v + R^+_v(e) = (e+\hat{R})_v + R^+_v(e+\hat{R}) = c_v(e+\hat{R})\, .
\end{align}
\end{lemma}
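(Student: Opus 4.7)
The plan is to use the codewords $c_v(e)$ themselves as candidate nearest codewords for the flipped error $(e+\hat R)_v$, yielding an a priori upper bound on each new local potential $U_v(e+\hat R)$. Summing these per-vertex bounds and exploiting the disjointness of $\mc R(e)$ (which holds because $e$ is metastable, as noted just before the lemma statement), I expect the totals to collapse exactly to $U(e)$. Combining this with the hypothesis that $\hat R$ does not decrease the global potential will then force the per-vertex bounds to be tight at every $v$, which simultaneously identifies the new nearest codewords and the new local corrections.

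Concretely, for each $v\in V_1$ the codeword $c_v(e)\in C_1^\perp$ is a legitimate, though possibly non-optimal, candidate for $(e+\hat R)_v$, so
\[
U_v(e+\hat R)\;\le\;\bigl|(e+\hat R)_v - c_v(e)\bigr|\;=\;\bigl|R_v^+(e)\,\triangle\, \hat R_v\bigr|\,,
\]
where $\hat R_v$ denotes the restriction of $\hat R$ to $Q(v)$. Expanding $|A\triangle B|=|A|+|B|-2|A\cap B|$ and summing over $v\in V_1$ gives
\[
U(e+\hat R)\;\le\;U(e)+\sum_{v\in V_1}|\hat R_v|-2\sum_{v\in V_1}|R_v^+(e)\cap \hat R_v|\,.
\]
Each face $q\in Q$ sits in the local view of exactly two vertices of $V_1$ (its two $V_1$-corners), so $\sum_v|\hat R_v|=2|\hat R|$. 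And because $\mc R(e)$ is disjoint, each $q\in \hat R\subseteq R(e)$ belongs to exactly one $R_v^+(e)$, so $\sum_v|R_v^+(e)\cap \hat R_v|=|\hat R|$. The two correction terms therefore cancel, giving $U(e+\hat R)\le U(e)$.

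Combined with the hypothesis $U(e+\hat R)\ge U(e)$, this yields equality $U(e+\hat R)=U(e)$. Since each individual term in the sum is dominated by $|R_v^+(e)\triangle \hat R_v|$ and the totals agree, each inequality is saturated, i.e.\ $U_v(e+\hat R)=|R_v^+(e)\triangle \hat R_v|$. This means $c_v(e)$ remains a closest codeword to $(e+\hat R)_v$, so we are entitled to take $c_v(e+\hat R)=c_v(e)$, establishing the codeword-stability assertion. The corresponding minimum-weight correction is then
\[
R_v^+(e+\hat R)\;=\;(e+\hat R)_v + c_v(e)\;=\;R_v^+(e)\triangle \hat R_v\,.
\]
Because every $q\in \hat R$ is outgoing from exactly one $V_1$-endpoint (where it lies in $R_v^+(e)$) and incoming at the other (where it does not), the symmetric difference removes $q$ from the outgoing set at its source and inserts it into the outgoing set at its target—precisely the direction reversal described in the lemma.

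The only step with any real content is the double-counting identity that makes the upper bound equal $U(e)$ on the nose; once that is in hand, the equality $U(e+\hat R)=U(e)$ forces per-vertex tightness mechanically, and the rest is unpacking definitions. The key conceptual point to flag is that disjointness of $\mc R(e)$—a consequence of metastability rather than of the hypothesis on $\hat R$ itself—is what prevents an edge of $\hat R$ from being over- or under-counted in the sums.
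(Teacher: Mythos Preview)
Your proof is correct and follows essentially the same approach as the paper: use $c_v(e)$ as a candidate nearest codeword for $(e+\hat R)_v$ to obtain a per-vertex upper bound on $U_v(e+\hat R)$, sum these bounds and use the disjointness of $\mc R(e)$ to show the total equals $U(e)$, then combine with the hypothesis $U(e+\hat R)\ge U(e)$ to force per-vertex tightness. The only cosmetic difference is that the paper splits $\hat R_v$ into its outgoing part $\hat R\cap R_v^+(e)$ and incoming part $\hat R\cap R_v^-(e)$ before summing, whereas you invoke the symmetric-difference identity $|A\triangle B|=|A|+|B|-2|A\cap B|$ directly; the two computations are line-for-line equivalent once one notes $\hat R_v=(\hat R\cap R_v^+(e))\sqcup(\hat R\cap R_v^-(e))$.
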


\begin{proof}
Consider any $v\in V_1$. By definition, each $R_v^+(e)$ is a minimum weight correction to the local code at $v$, so $c_v(e)=e_v+R_v^+(e)$ and $U_v(e)=|R_v^+(e)|$. Now suppose we flip all qubits in $\hat{R}$. In the local view of $v$, we have
\begin{align}
    c_v(e) &= e_v + \hat{R}\cap Q(v) + R_v^+(e) + \hat{R}\cap Q(v)\\
    &= (e+\hat{R})_v + R_v^+(e) + \hat{R}\cap R_v^+(e) + \hat{R}\cap R^-_v(e)\, ,\label{eq:Rflip1}
\end{align}
where we define $R^-_v(e) = R_v(e)\backslash R_v^+(e)$. Note that $R^-_v(e)$ can be thought of as the set of incoming edges at $v$ in the directed graph defined by $\mc R(e)$. Therefore, we can bound the weight of the new minimal weight correction for vertex $v$ by
\begin{align}
    U_v(e+\hat{R}) &\le |R_v^+(e) + \hat{R}\cap R_v^+(e) + \hat{R}\cap R^-_v(e)|\label{eq:Uv_inequality}\\
    &= |R_v^+(e) + \hat{R}\cap R_v^+(e)| + |\hat{R}\cap R^-_v(e)|\\
    &= U_v(e) - |\hat{R}\cap R_v^+(e)| + |\hat{R}\cap R^-_v(e)|\, , 
\end{align}
where the first line follows from equation~\eqref{eq:Rflip1} and the second from the disjointness of the sets $R_v^+(e)$ and $R^-_v(e)$. Note that if equality holds in equation~\eqref{eq:Uv_inequality}, then a valid minimum weight correction for $(e+\hat{R})_v$ is given by 
\begin{align}
R_v^+(e+\hat{R}) = R_v^+(e) + \hat{R}\cap R_v^+(e) + \hat{R}\cap R^-_v(e)\, .
\end{align}
The set $R_v^+(e+\hat{R})$ above is obtained from $R_v^+(e)$ by removing all outgoing edges in $\hat{R}$ and changing all incoming edges in $\hat{R}$ to outgoing edges. Also note that in this case the nearest codeword remains $c_v(e)$.

Summing inequality~\eqref{eq:Uv_inequality} for all $v\in V_1$ gives a bound on the global potential as
\begin{align}
    U(e+\hat{R}) &\le \sum_{v\in V_1}U_v(e) - \sum_{v\in V_1} |\hat{R}\cap R_v^+(e)| + \sum_{v\in V_1} |\hat{R}\cap R_v^-(e)|\\
    &= U(e) - |\hat{R}\cap R(e)| + |\hat{R}\cap R(e)|\\
    &= U(e)\, ,
\end{align}
where in the second line we've used the fact that $R(e) = \bigsqcup_{v\in V_1} R_v^+(e) = \bigsqcup_{v\in V_1}R^-_v(e)$ by metastability. By the assumption of the lemma, $U(e+\hat{R})\ge U(e)$. This means inequality~\eqref{eq:Uv_inequality} must hold with equality for all $v\in V_1$. Hence, we have proven that $\mc R(e+\hat{R})$ can be taken as $\mc R(e)$, but with the directions of edges in $\hat{R}$ reversed.
\end{proof}

\begin{remark}
In the scenario of the $R$-flipping lemma, while the error $e+\hat{R}$ may not be metastable itself, the set $\mc R(e+\hat{R})$ as defined as in the lemma is still disjoint. This new set is a valid correction in the sense that each $R_v^+(e+\hat{R})$ gives a minimum weight correction to the local code --- correcting the error $(e+\hat{R})_v$ to $c_v(e+\hat{R}) = c_v(e)$ --- at every $v\in V_1$. Note that the set of total corrections remains invariant in this case, i.e., $R(e) = R(e+\hat{R})$.
\end{remark}

In the second case, we assume that $R$ has high overlap with a codeword of $C_1^\perp$. We formalize this property below.

\begin{definition}[Low Overlap]\label{def:low_overlap}
    The set $R$ is said to have the \emph{low-overlap} property at $v\in V_1$ if for all codewords $c\in C^\perp_1$, we have $|R_v \cap c| \le |c|/2$. We will say that the set $R$ has the low-overlap property if it has the low-overlap property at every $v\in V_1$.
\end{definition}

Before formally proving case $2$, let us first provide some rough intuition. When the low-overlap property is not satisfied, there exists some codeword $c\in C_1^\perp$ at some vertex $v\in V_1$ which has large agreement with $R_v$. Using the $R$-flipping Lemma~\ref{lem:R_flip}, we may assume without loss of generality that $R^+_v = 0$. Now imagine flipping the set $R_v\cap c$. Since $R^+_v = 0$, every edge in $R_v$ belongs to a local correction neighboring $v$. Flipping $R_v \cap c$ will therefore lower the local potential at each of these neighbors by $1$. It will also raise the local potential at $v$, which was zero before. However, since $R_v$ has large overlap with $c$ it is actually more efficient to apply the correction $c \backslash R_v$ instead of $R_v\cap c$. In this case, the local error is pushed out of the neighborhood of its original nearest codeword $c_v(e)$ and into the neighborhood of $c_v(e) + c$ instead. The local potential at $v$ is therefore raised by an amount less than $R_v\cap c$, which results in an overall lowering of the global potential. Figure~\ref{fig:edge_flip} illustrates the proof technique.

\begin{figure}
    \centering
    \input{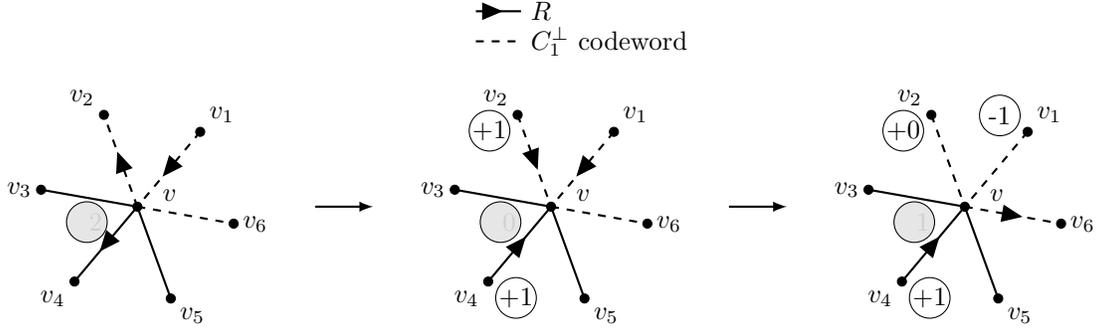}
    \caption{Flipping bits to decrease the global potential in case 2. The changes in local potentials after flipping the edges $(v, v_2)$, $(v, v_4)$ (left to center) and then flipping $(v,v_1)$, $(v,v_2)$, $(v,v_6)$ (center to right) in the graph $\mc G_1^\square$ are shown. The local potentials at $v$ are indicated within the shaded circles. Potential differences relative to the first configuration are indicated for the neighboring vertices. }
    \label{fig:edge_flip}
\end{figure}

\begin{lemma}
Let $R$ be metastable. If $R$ does not have the low-overlap property, then there exists $v \in V_1$ and a subset $f \subseteq Q(v)$ such that flipping the qubits of $f$ decreases the total potential.
\end{lemma}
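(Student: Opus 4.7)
The plan is to use the codeword $c$ witnessing the overlap failure as a scaffold for a flip-set within a single local view, after applying Lemma~\ref{lem:R_flip} to reduce to the normal form $R_v^+=0$. By hypothesis there exist $v\in V_1$ and $c\in C_1^\perp$ with $|R_v\cap c|>|c|/2$. I first check whether $R_v^+(e)\subseteq Q(v)$ already decreases $U$; if so we are done. Otherwise, the hypothesis of Lemma~\ref{lem:R_flip} holds with $\hat R=R_v^+(e)$, and passing to $e'=e+R_v^+(e)$ yields a configuration with $R(e')=R(e)$, $c_v(e')=c_v(e)$, and $R_v^+(e')=0$. In particular $e'_v=c_v(e)$, so $U_v(e')=0$, and every edge of $R_v$ lies in $R_{v'}^+(e')$ for its unique $\mc G_1^\square$-neighbor $v'$.

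I would then propose $f_0=R_v\cap c\subseteq Q(v)$ as a flip-set for $e'$ and bound $U(e'+f_0)-U(e')$ by summing contributions at $v$ and its neighbors. At $v$, the new local view is $c_v+f_0$; since $f_0\subseteq c$, the codeword $c_v+c\in C_1^\perp$ is at distance $|c+f_0|=|c|-|f_0|$, and the overlap hypothesis $|R_v\cap c|>|c|/2$ is precisely what makes this the smaller natural upper bound, so
\begin{equation*}
U_v(e'+f_0)\le |c|-|R_v\cap c|.
\end{equation*}
At each neighbor $v'$, the set $S_{v'}$ of edges of $f_0$ incident to $v'$ is contained in $R_{v'}^+(e')$ (because $R_v^+(e')=0$ forces every edge of $R_v$ to be outgoing from its other endpoint), so the new local view at $v'$ is $c_{v'}(e')+(R_{v'}^+(e')\setminus S_{v'})$ and $U_{v'}(e'+f_0)\le U_{v'}(e')-|S_{v'}|$. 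Summing,
\begin{equation*}
U(e'+f_0)-U(e')\le (|c|-|R_v\cap c|)-\sum_{v'}|S_{v'}|=|c|-2|R_v\cap c|<0.
\end{equation*}

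Translating back to the original error, the set $f=R_v^+(e)+f_0\subseteq Q(v)$ satisfies $e+f=e'+f_0$, so $U(e+f)<U(e)$ and $f$ is the desired flip-set supported on $Q(v)$. I expect the main delicate point to be the neighbor bookkeeping: one must verify that all edges of $f_0$ incident to a common neighbor $v'$ sit inside the same minimum-weight correction $R_{v'}^+(e')$, so that their effects on $U_{v'}$ aggregate additively rather than partially cancel. This is exactly where the disjointness of $\mc R$ together with the $R_v^+(e')=0$ normal form is essential; without the preparatory $R$-flipping step, the bound at $v$ would instead involve an unwanted $|R_v^+(e)|$ term and the argument would fail.
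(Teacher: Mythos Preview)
Your proposal is correct and follows essentially the same approach as the paper: reduce to the normal form $R_v^+(e')=0$ via the $R$-flipping lemma, flip $f_0=R_v\cap c$, and compare the potential gain at $v$ (bounded by $|c|-|R_v\cap c|$ via the alternate codeword $c_v+c$) against the guaranteed total loss $|R_v\cap c|$ at the neighbors. Your grouping of the neighbor contributions by vertex $v'$ (rather than edge-by-edge) is a mild presentational improvement for the multigraph setting, but the argument is otherwise identical, including the final flip-set $f=R_v^+(e)+f_0$.
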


\begin{proof}
Suppose that $R$ is metastable and does not have the low-overlap property. Then there exists some $v\in V_1$ and some $c\in C^\perp_1\setminus \{0\}$ such that $|R_v(e) \cap c| > |c|/2$. Let $e' = e+R_v^+(e)$. If $U(e') < U(e)$ then we are done. Otherwise $U(e')=U(e)$, and by the $R$-flipping Lemma~\ref{lem:R_flip}, we may take $R(e') = R(e)$ with $R_v^+(e') = 0$.

Consider now flipping the additional set of qubits $f'= R_v(e')\cap c$ to obtain the error $e'' = e'+f'$. For each $q = (v',v) \in f'$, we have $q \in R_{v'}^+(e')$, so that $|R^+_{v'}(e'')| = |R^+_{v'}(e')|-1$. This is the new value of the local potential at $v'$. Since we had $U_v(e') \equiv |R_v^+(e')| = 0$, the change in the global potential is given by $U(e'')-U(e') = U_v(e'')-|f'|$.

Since $e'_v \in C_1^\perp$, a valid correction for $e''_v$ is given by $f'+c$, where $c$ is the high-overlap codeword from earlier. This correction has weight $|f'+c| = |R_v(e)\cap c + c| < |c|/2 < |R_v(e)\cap c| = |f'|$. Therefore $U_v(e'')-|f'| < 0$, and we have $U(e'') < U(e') = U(e)$. Our desired flip-set is therefore $f=R_v^+(e)+R_v(e)\cap c$. 
\end{proof}

\subsection{Proof of Case 3}

The preceding subsection proves Theorem~\ref{thm:correction_exists} in the cases when $R$ is not metastable, or when $R$ is metastable but does not have the low-overlap property. In what follows, we consider the remaining case where $R$ is both metastable and has the low-overlap property. We summarize our key list of assumptions for this case below for convenience.

\begin{assumption}\label{key_assumptions}
Let $e \in \mathbb{F}_2^Q$ be a $Z$ error of weight $|e| \le \delta n/6\Delta^{3/2-\varepsilon}$. We assume that $e$ is a reduced error, i.e., it is the minimum weight element of the coset $e + \eu{C}_Z^\perp$. We assume that $e$ is a metastable error, and that its set of local minimum weight corrections $R(e)$ satisfies the low-overlap property~\ref{def:low_overlap}. Finally, we also require that the underlying quantum Tanner code be defined using dual tensor codes of sufficiently large robustness, i.e., with robustness parameter $\Delta^{3/2+\varepsilon'}$ for some $\varepsilon' > 0$. Throughout the rest of the proof, we fix any $\varepsilon < \varepsilon'$.
\end{assumption}

The proof of case $3$ proceeds in two general steps. In the first step, we show using the expansion of the underlying graphs that, given an error $e$ of sufficiently low weight, there always exists a special vertex $v_0 \in V_0$ with the property that $v_0$ ``sees'' many non-trivial codewords of $C_A$ and $C_B$ amongst its shared local views with the minimum weight corrections on neighboring vertices.

The second step of the proof proceeds to analyze the local view at the vertex $v_0$ described above. We show that due to the pattern of its many shared codewords, it is either the case that $R_{v_0} \subset Q(v_0)$ is sufficiently large to contain a flip-set which reduces the potential, or else it is small enough that $e_{v_0}$ has many columns and rows which are close to non-trivial codewords of $C_A$ and $C_B$. In the latter case, the robustness of the underlying dual tensor code then implies that $e_{v_0}$ must have sufficient overlap with a $Z$-stabilizer that the addition of this stabilizer will reduce the weight of $e$. Since we began without loss of generality with a reduced error $e$, this leads to a contradiction.

\subsubsection{Existence of $v_0 \in V_0$}

In the first part of the analysis of the third case, we proceed in a manner parallel to the proof of Theorem 1 in~\cite{LZ22}. The goal is to show that for an error $e$ with weight $|e| \le \delta n/6\Delta^{3/2-\varepsilon}$, there always exists a vertex $v_0 \in V_0$ whose local view contains many columns and rows which are close to non-trivial codewords of $C_A$ and $C_B$. Aside from some differences in definitions, the proofs and results of this subsection are equivalent to their counterparts in~\cite{LZ22}.

Since our goal is to find a vertex $v_0 \in V_0$ whose local view has many rows and columns close non-trivial codewords, we first parametrize the vertices of $V_1$ with non-trivial nearest codewords. This is captured by the set $Y$ below.

\begin{definition}
    Let $e \in \mathbb{F}_2^Q$ be an error and let $\mathcal{R} = \{R^+_v(e)\}_{v\in V_1}$ be a set of local minimum weight corrections. We define the set of non-trivially corrected vertices $Y\subseteq V_1$ as
    \begin{align}
        Y = \{v \in V_1 \mid R_v^+ \neq e_v\}\, .
    \end{align}
    That is, a vertex $v$ is in $Y$ if and only if the result of applying the locally minimum weight correction at $v$ results in a non-trivial codeword i.e. $c_v=e_v + R_v^+\neq 0$.
\end{definition}

To work with the vertex set $Y$, it will also be convenient to define an edgewise version of the condition $R^+_v\neq e_v$. To that end, we introduce the set $y$ of ``residual errors''. Given an error $e\in\mathbb{F}^Q_2$, the elements of $y$ are all of the elements of $e$ which have no overlap with the set of minimum weight corrections $R(e)$ (see Figure~\ref{fig:g_square1}).

\begin{definition}
    Let $e \in \mathbb{F}_2^Q$ be an error and let $\mathcal{R} = \{R^+_v(e)\}_{v\in V_1}$ be a set of local minimum weight corrections. The set of ``residual'' errors is defined by $y = e\backslash R \in \mathbb{F}_2^Q$, i.e., $y$ labels the set of errors which are not in any of the local minimum weight corrections.
\end{definition}

The edges of $\mc G^\square_1$ indexed by $y$ define a subgraph of $\mc G^\square_1$ which we will call $G_{1,y}^\square$. This subgraph is closely related to the set $Y$. It is straightforward to see that every vertex of $\mc G_{1,y}^\square$ must belong to $Y$. Conversely, the low-overlap property implies that each vertex of $Y$ must be incident to many edges in $\mc G_{1,y}^\square$. This means that $Y$ is precisely the vertex set of $\mc G_{1,y}^\square$ and moreover $\mc G_{1,y}^\square$ must have large minimum degree. This discussion is formalized below by Lemmas~\ref{lem:y_subgraph} and~\ref{lem:min_deg}.

\begin{lemma}\label{lem:y_subgraph}
Let $(v,v') \in y$ be an edge in $\mathcal{G}_{1,y}^\square$. Then both $v$ and $v'$ are elements of $Y$.
\end{lemma}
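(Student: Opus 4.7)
The plan is to unpack the definitions and show that the claim is essentially definitional. Recall from the construction that edges of $\mathcal{G}^\square_1$ correspond to the faces $Q$, with an edge $(v,v')$ with $v,v' \in V_1$ corresponding to the unique face $q \in Q$ whose two $V_1$-corners are $v$ and $v'$. In particular, $q \in Q(v) \cap Q(v')$. The subgraph $\mathcal{G}^\square_{1,y}$ uses only those faces indexed by $y$, so saying $(v,v') \in y$ amounts to saying the face $q$ corresponding to this edge lies in $y = e \setminus R$.

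From this, I would argue as follows. First, since $q \in y \subseteq e$, the restriction $e_v$ of $e$ to $Q(v)$ contains $q$. Second, since $q \notin R = \bigcup_{u \in V_1} R^+_u(e)$, in particular $q \notin R^+_v(e)$. By the defining relation $c_v(e) = e_v + R^+_v(e)$ (understood as symmetric difference of subsets of $Q(v)$), the face $q$ belongs to $c_v(e)$. Hence $c_v(e) \neq 0$, which is exactly the condition $R^+_v(e) \neq e_v$, so $v \in Y$.

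The same argument applied at $v'$ — using $q \in Q(v')$, $q \in e_{v'}$, and $q \notin R^+_{v'}(e)$ — yields $v' \in Y$, finishing the proof. I expect no real obstacle: the only step requiring a little care is making the correspondence between edges of $\mathcal{G}^\square_1$ and faces of $Q$ explicit, and keeping track that an edge being ``in $y$'' simultaneously witnesses membership in $e$ and non-membership in each of the two local corrections whose domains contain $q$.
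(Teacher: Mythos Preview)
Your proposal is correct and follows essentially the same approach as the paper's proof: both observe that the face $q$ corresponding to the edge $(v,v')$ lies in $e_v$ but not in $R^+_v$ (and likewise at $v'$), which immediately gives $e_v \neq R^+_v$. The paper concludes directly from this inequality, while you phrase it equivalently as $q \in c_v = e_v + R^+_v$ and hence $c_v \neq 0$; this is the same argument with one extra (harmless) rephrasing step.
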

\begin{proof}
By definition, the edge $(v,v')\in y$ is an element of $e$ but not of $R$. Therefore $(v,v')$ is an element of $e_v$ (and likewise, of $e_{v'}$) but not an element of $R_v^+$ (and likewise, $R_{v'}^+)$. It follows that $e_v \neq R_v^+$ and $e_{v'} \neq R_{v'}^+$.
\end{proof}

\begin{lemma}\label{lem:min_deg}
    Every vertex $v\in Y$ is incident to at least $\delta\Delta/2$ edges in $y$. In particular, the subgraph $\mathcal{G}_{1,y}^\square$ has vertex set equal to $Y$ and minimum degree at least $\delta\Delta/2$.
\end{lemma}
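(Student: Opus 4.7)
The plan is to bootstrap from the distance of the local dual tensor code, using the low-overlap hypothesis to turn a lower bound on $|c_v|$ into a lower bound on the number of edges of $y$ incident to $v$.

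First, I will observe that for every $v \in Y$ the nearest codeword $c_v = e_v + R_v^+$ is a nonzero element of $C_1^\perp$. Since $C_1^\perp = (C_A^\perp \otimes C_B^\perp)^\perp$ is the dual tensor code of $C_A$ and $C_B$, Theorem~\ref{thm:dualtensorrobustness} together with our choice of local codes (Assumption~\ref{key_assumptions}) guarantees minimum distance at least $\delta\Delta$, so $|c_v| \ge \delta\Delta$. Next, I will apply the low-overlap property~\ref{def:low_overlap} directly to the codeword $c_v$ itself, yielding $|R_v \cap c_v| \le |c_v|/2$, so at least $|c_v|/2 \ge \delta\Delta/2$ coordinates of $c_v$ lie outside $R_v$.

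The key step is to note that any such coordinate $q \in c_v \setminus R_v$ necessarily lies in $y_v$. Indeed, at such a position $R_v^+[q] = 0$ (since $R_v^+ \subseteq R_v$), and the identity $c_v = e_v + R_v^+$ over $\mathbb{F}_2^{Q(v)}$ then forces $e_v[q] = c_v[q] = 1$; combined with $q \notin R_v$ this gives $q \in e_v \setminus R_v = y_v$. Hence $c_v \setminus R_v \subseteq y_v$, and the count $|c_v \setminus R_v| \ge |c_v|/2 \ge \delta\Delta/2$ translates into $|y_v| \ge \delta\Delta/2$. This is exactly the statement that $v$ is incident to at least $\delta\Delta/2$ edges of $y$ in $\mathcal{G}_{1,y}^\square$.

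For the ``in particular'' clause, Lemma~\ref{lem:y_subgraph} already tells us that every endpoint of a $y$-edge lies in $Y$, so the non-isolated vertex set of $\mathcal{G}_{1,y}^\square$ is contained in $Y$; conversely the degree bound we just proved shows each $v \in Y$ is incident to at least one $y$-edge, so $Y$ is precisely the vertex set of $\mathcal{G}_{1,y}^\square$, and the minimum degree bound is immediate. I don't expect any real obstacle here: the proof is a short unpacking of definitions, with the only subtle point being the logical chain $v \in Y \Rightarrow |c_v| \ge \delta\Delta \Rightarrow |c_v \setminus R_v| \ge \delta\Delta/2 \Rightarrow |y_v| \ge \delta\Delta/2$, whose last implication uses the $\mathbb{F}_2$ relation $c_v = e_v + R_v^+$ pointwise.
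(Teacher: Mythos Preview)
Your proof is correct and follows essentially the same approach as the paper. Both arguments hinge on applying the low-overlap property to the codeword $c_v$ to get $|R_v\cap c_v|\le |c_v|/2$, and then using $c_v=e_v+R_v^+$ to place the remaining $\ge|c_v|/2$ elements of $c_v$ into $e_v\setminus R_v=y_v$; your containment $c_v\setminus R_v\subseteq y_v$ is exactly the paper's decomposition $c_v\subseteq (e_v\setminus R_v)\sqcup R_v$ read the other way.
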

\begin{proof}
Let $v\in Y$, and consider $e_v \cup R_v$. We have $c_v \subseteq e_v \cup R_v$ since
\begin{align}
c_v = e_v + R_v^+ \subseteq e_v \cup R_v^+ \subseteq e_v \cup R_v\, .
\end{align}

Next we decompose
\begin{align}
    e_v \cup R_v = (e_v\backslash R_v) \sqcup R_v\, ,\label{eq:disjoint_union}
\end{align}
so that
\begin{align}
   |c_v| &= |(e_v \cup R_v) \cap c_v| \\
   &= |(e_v\backslash R_v)\cap c_v| + |R_v\cap c_v| \\
   &\le |(e_v\backslash R_v)\cap c_v| + |c_v|/2\, .\label{eq:y_intersect_c_upper_bound}
\end{align}
The first equality follows from the fact that $c_v \subseteq e_v \cup R_v$, the second equality follows from~\eqref{eq:disjoint_union} and the fact that Hamming weights are additive over disjoint unions. The last inequality follows from the low-overlap property. Therefore, we have
\begin{align}
    \deg_{G_{1,y}^\square}(v) &= |y_v|\\
    &=|e_v\backslash R_v|\\
    &\ge |(e_v\backslash R_v) \cap c_v| \\
    &\ge |c_v|/2 \\
    &\ge \delta\Delta/2\, ,
\end{align}
where the last line follows from the minimum distance of $C_1^\perp$, i.e. $\delta \Delta$, and the fact that $c_v\neq 0$ since $v \in Y$.
\end{proof}

Each vertex $v$ of $\mc G_{1,y}^\square$ has a non-trivial nearest codeword $c_v \in C_1^\perp$. To ensure that the individual columns and rows of $c_v$ are themselves close to non-trivial codewords of $C_A$ and $C_B$, we appeal to the robustness of the dual tensor code $C_1^\perp$. Since robustness only applies to codewords of weight at most $\Delta^{3/2+\varepsilon}$, we first define the concept of a \emph{normal} vertex. Roughly speaking, a vertex is considered \emph{normal} precisely when robustness can be applied to its nearest codeword. 

\begin{definition}
    Let us define a \emph{normal} vertex of $Y$ as a vertex with degree at most $\frac{1}{2}\Delta^{3/2+\varepsilon}$ in $\mathcal{G}_{1,y}^\square$. A vertex of $Y$ which is not normal is called \emph{exceptional}. We denote the subsets of normal and exceptional vertices as $Y_n$ and $Y_e$, respectively.
\end{definition}

Since $\mc G^\square_{1,y}$ has large minimum degree, the expansion of $\mc G^\square_1$ now ensures that as long as $\mc G^\square_{1,y}$ has sufficiently few edges, it must contain many normal vertices. Note that Lemma~\ref{lem:y_except_bound} is the only place where the assumption on the weight of $|e|$ (and hence $|y|$) is explicitly used.

\begin{lemma}\label{lem:y_except_bound}
Suppose that $|y| \le \delta n/6\Delta^{3/2-\varepsilon} = \delta\Delta^{1/2+\varepsilon}|V_1|/12$. Then the fraction of exceptional vertices in $Y_e\subseteq Y$ is bounded above as
\begin{align}
    \frac{|Y_e|}{|Y|} \le \frac{576}{\Delta^{1+2\varepsilon}}\, .
\end{align}
\end{lemma}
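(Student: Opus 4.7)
The plan is to combine a crude global upper bound on $|Y|$ with the expander mixing lemma applied inside $\mc G_1^\square$ to the pair $(Y_e, Y)$. The key observation is that Lemma~\ref{lem:y_subgraph} forces every edge of $\mc G_{1,y}^\square$ incident to $Y_e$ to have its other endpoint in $Y$, so the unusually large $\mc G_{1,y}^\square$-degrees of exceptional vertices must be accounted for by $\mc G_1^\square$-edges between $Y_e$ and $Y$, and these are controlled by the spectral gap of $\mc G_1^\square$.

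First I would use Lemma~\ref{lem:min_deg} to convert the hypothesis on $|y|$ into an upper bound on $|Y|$. Since every vertex of $Y$ has $\mc G_{1,y}^\square$-degree at least $\delta\Delta/2$, double-counting the edges of $\mc G_{1,y}^\square$ gives $|Y|\cdot \delta\Delta/2 \le 2|y|$, and substituting the hypothesis $|y|\le \delta\Delta^{1/2+\varepsilon}|V_1|/12$ yields $|Y| \le |V_1|/(3\Delta^{1/2-\varepsilon})$.

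Next I would lower-bound the edges between $Y_e$ and $Y$ in $\mc G_{1,y}^\square$. By Lemma~\ref{lem:y_subgraph} every edge of $\mc G_{1,y}^\square$ has both endpoints in $Y$, so summing the exceptional-degree bound over $v\in Y_e$ gives
\begin{equation}
|E_{\mc G_{1,y}^\square}(Y_e,Y)| = \sum_{v\in Y_e}\deg_{\mc G_{1,y}^\square}(v) > \frac{|Y_e|}{2}\Delta^{3/2+\varepsilon}\, ,
\end{equation}
matching the convention in which the edges inside $Y_e = Y_e\cap Y$ are counted with multiplicity two. Applying the expander mixing lemma to $\mc G_1^\square$ (which is $\Delta^2$-regular on $|V_1|$ vertices with $\lambda\le 4\Delta$) and using $\mc G_{1,y}^\square\subseteq \mc G_1^\square$ bounds the same quantity above by $\Delta^2|Y_e||Y|/|V_1| + 4\Delta\sqrt{|Y_e||Y|}$. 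Dividing by $|Y_e|$ gives
\begin{equation}
\frac{1}{2}\Delta^{3/2+\varepsilon} < \frac{\Delta^2|Y|}{|V_1|} + 4\Delta\sqrt{|Y|/|Y_e|}\, .
\end{equation}

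Substituting the bound $\Delta^2|Y|/|V_1| \le \Delta^{3/2+\varepsilon}/3$ from the first step absorbs the ``trivial'' expander-mixing term into a fraction of the exceptional-degree lower bound, leaving $\frac{1}{6}\Delta^{3/2+\varepsilon} < 4\Delta\sqrt{|Y|/|Y_e|}$, which rearranges to $|Y_e|/|Y| < 576/\Delta^{1+2\varepsilon}$. The step I expect to require the most care is the choice to apply expander mixing to the pair $(Y_e, Y)$ rather than to $(Y_e, Y_e)$: the exceptional-degree condition by itself only constrains edges leaving $Y_e$, and Lemma~\ref{lem:y_subgraph} is exactly what promotes these into edges within $Y$, so that the global upper bound on $|Y|$ has enough power to dominate the $\Delta^2|Y|/|V_1|$ term in the mixing inequality and leave the eigenvalue term to produce the advertised $\Delta^{-(1+2\varepsilon)}$ decay.
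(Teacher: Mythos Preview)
Your proposal is correct and follows essentially the same argument as the paper: bound $|Y|$ via the minimum degree in $\mc G_{1,y}^\square$, lower-bound $|E(Y_e,Y)|$ using the exceptional-degree threshold, and apply the expander mixing lemma for $\mc G_1^\square$ to the pair $(Y_e,Y)$. The only cosmetic difference is that you make explicit the appeal to Lemma~\ref{lem:y_subgraph} and the edge-counting convention, while the paper leaves these implicit.
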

\begin{proof}
By Lemma~\ref{lem:min_deg}, the minimnum degree of $\mathcal{G}_{1,y}^\square$ is at least $\frac{1}{2}\delta\Delta$. This implies that
\begin{align}
    |Y| \le \frac{2}{\delta\Delta} 2|y| \le \frac{|V_1|}{3\Delta^{1/2-\varepsilon}}\, .\label{eq:Y_bound}
\end{align}

Applying the Expander Mixing Lemma to $E(Y_e,Y)$ in $\mathcal{G}^\square_1$, we get
\begin{align}
    |E(Y_e,Y)| &\le \frac{\Delta^2}{|V_1|}|Y||Y_e| + 4\Delta\sqrt{|Y_e||Y|}\\
    &\le \frac{1}{3}\Delta^{3/2+\varepsilon}|Y_e| + 4\Delta\sqrt{|Y_e||Y|}\, .
\end{align}
By definition of $Y_e$, it holds that $|E(Y_e,Y)| \ge \frac{1}{2}\Delta^{3/2+\varepsilon}|Y_e|$. Combining the inequalities, it follows that
\begin{align}
    \frac{|Y_e|}{|Y|} \le \frac{576}{\Delta^{1+2\varepsilon}}\, .
\end{align}
\end{proof}

Using the robustness of $C_1^\perp$ and the low-overlap property, we can now show that each column and row of $c_v$ for $v \in Y_n$ is indeed close to a codeword of $C_A$ and $C_B$.

\begin{lemma}\label{lem:c_close_col}
Let $v \in Y_n$ be a normal vertex. Then every column (resp. row) of $c_v$ is distance at most $\Delta^{1/2+\varepsilon}/\delta$ from a codeword in $C_A$ (resp. $C_B$). Moreover, $c_v$ contains at least one row or column which is close to a non-zero codeword of $C_A$ or $C_B$.
\end{lemma}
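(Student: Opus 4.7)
The plan is to combine the normality of $v$, which caps $|c_v|$, with the robustness of the dual tensor code $C_1^\perp$, to extract the desired column and row structure. First, the chain of inequalities inside the proof of Lemma~\ref{lem:min_deg} yields the implicit bound $|c_v|\le 2|y_v|$. Since $v\in Y_n$ has $|y_v|\le \tfrac{1}{2}\Delta^{3/2+\varepsilon}$ and the dual tensor code $C_1^\perp$ is $\Delta^{3/2+\varepsilon'}$-robust with $\varepsilon'>\varepsilon$, Definition~\ref{def:robustness} applies to $c_v$. It produces row and column sets $A',B'\subseteq[\Delta]$, with complements $R=[\Delta]\setminus A'$ and $C=[\Delta]\setminus B'$ of size at most $|c_v|/(\delta\Delta)\le \Delta^{1/2+\varepsilon}/\delta$, such that $c_v$ vanishes on $A'\times B'$.

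The main step, which I expect to be the technical crux, is to leverage this support structure into a decomposition $c_v=U+V$ with $U\in C_A\otimes\mathbb{F}_2^\Delta$, $V\in\mathbb{F}_2^\Delta\otimes C_B$, and $V$ supported entirely on the rows $R$. Starting from any decomposition $c_v=U_0+V_0$, I plan to construct $W\in C_A\otimes C_B$ matching $V_0$ on its $A'$-rows, so that $V=V_0-W$ vanishes on $A'$. The existence of $W$ will come from a two-step puncturing-extension argument: the condition $c_v|_{A'\times B'}=0$ forces $V_0[i,B']=U_0[i,B']$ for $i\in A'$, so the submatrix $U_0|_{A'\times B'}$ has rows in the punctured code $C_B|_{B'}$ together with the inherited columns in $C_A|_{A'}$, placing it inside the tensor code $(C_A|_{A'})\otimes(C_B|_{B'})$. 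For $\Delta$ large enough that $|R|<d_A$ and $|C|<d_B$, the puncturing maps $C_B\to C_B|_{B'}$ and $C_A\to C_A|_{A'}$ are both bijective, and I will extend $U_0|_{A'\times B'}$ uniquely, first row-wise through $(C_A|_{A'})\otimes C_B$ and then column-wise, to a tensor codeword $W\in C_A\otimes C_B$. The uniqueness of the row-wise extension forces $W[i,\cdot]=V_0[i,\cdot]$ on $A'$, giving the desired cancellation.

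Once $V$ is supported only on rows $R$, every column obeys $d(c_v[\cdot,j],C_A)\le|V[\cdot,j]|\le|R|\le\Delta^{1/2+\varepsilon}/\delta$, proving the column statement; the row statement is symmetric. For the ``moreover'' clause I plan to use a simple averaging argument: since $v\in Y$, $c_v$ is a non-zero codeword of $C_1^\perp$, so $|c_v|\ge\min(d_A,d_B)\ge\delta\Delta$, and robustness bounds the number of non-zero columns by $|c_v|/d_A$; hence some column has weight at least $d_A\ge\delta\Delta$, lying at Hamming distance at least $\delta\Delta$ from $0\in C_A$. The main claim already guarantees this column lies within $\Delta^{1/2+\varepsilon}/\delta<\delta\Delta/2$ of some $C_A$-codeword, and the triangle inequality then forces that codeword to be non-zero.
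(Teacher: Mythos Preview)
Your main argument is correct and takes the same route as the paper: bound $|c_v|\le\Delta^{3/2+\varepsilon}$ via normality together with the low-overlap inequality from Lemma~\ref{lem:min_deg}, invoke robustness to get the support structure on $A'\times B'$, and then produce a decomposition $c_v=U+V$ with $U\in C_A\otimes\mathbb{F}_2^\Delta$ supported on the few columns $\overline{B'}$ and $V\in\mathbb{F}_2^\Delta\otimes C_B$ supported on the few rows $\overline{A'}$. The paper packages your two-step puncturing--extension construction as the standalone Lemma~\ref{lem:rowcolsum} and simply cites it; you reprove it inline. (Note that your single $W$ actually gives both support conditions simultaneously: for $j\in B'$ the column $W[\cdot,j]$ is the unique $C_A$-extension of $U_0[A',j]$, hence equals $U_0[\cdot,j]$, so $U=U_0+W$ vanishes on $B'$. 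You don't need to redo the argument ``symmetrically.'')

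Your argument for the ``moreover'' clause has a gap. You claim that robustness bounds the number of non-zero columns of $c_v$ by $|c_v|/d_A$, but that is not what Definition~\ref{def:robustness} says: it furnishes a support decomposition into at most $|c_v|/d_A$ columns \emph{together with} at most $|c_v|/d_B$ rows, and a column of $c_v$ may be non-zero solely because it meets one of those rows. For example, if $c_v$ consists of a single non-zero $C_B$ codeword placed on one row, then every non-zero column of $c_v$ has weight exactly $1$, so your pigeonhole conclusion ``some column has weight at least $d_A$'' is false. The paper's fix is simpler than your averaging attempt and reuses the decomposition you already built: since $c_v\neq 0$, at least one of $U,V$ is non-zero; if say $U\neq 0$ then some column $U[\cdot,j]$ is a non-zero codeword of $C_A$, and $|c_v[\cdot,j]-U[\cdot,j]|=|V[\cdot,j]|\le|\overline{A'}|\le\Delta^{1/2+\varepsilon}/\delta$ already exhibits $c_v[\cdot,j]$ as close to that non-zero codeword.
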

\begin{proof}
By assumption of $v$ being a normal vertex, we know that $|y_v| = |e_v\backslash R_v| \le \frac{1}{2}\Delta^{3/2+\varepsilon}$. From inequality~\eqref{eq:y_intersect_c_upper_bound}, we see that
\begin{align}
    \frac{1}{2}|c_v| \le |(e_v\backslash R_v)\cap c_v| \le |e_v\backslash R_v| \le \frac{1}{2}\Delta^{3/2+\varepsilon}\, .
\end{align}
By the robustness of the dual tensor code $C_1^\perp$, it follows that the support of $c_v$ is concentrated on the union of at most $|c_v|/\delta\Delta \le \Delta^{1/2+\varepsilon}/\delta$ non-zero columns and rows. Using Lemma~\ref{lem:rowcolsum}, we conclude that there exists a decomposition $c_v = \mathbf{c} + \mathbf{r}$, where $\mathbf{c} \in C_A\otimes \mathbb{F}_2^B$ is supported on at most $\Delta^{1/2+\varepsilon}/\delta$ non-zero columns, and where $\mathbf{r} \in \mathbb{F}_2^A\otimes C_B$ is supported on at most $\Delta^{1/2+\varepsilon}/\delta$ non-zero rows. In particular, this implies that each column (resp. row) of $c_v$ is distance at most $\Delta^{1/2+\varepsilon}/\delta$ from a codeword of $C_A$ (resp. $C_B$). Since $c_v$ is non-zero by definition of $Y$, it follows at least one of $\mathbf{c}$ or $\mathbf{r}$ is non-zero, so that at least one column or row is close to a non-zero codeword.   
\end{proof}

Now we are in a position to start the search for our special vertex $v_0 \in V_0$. To that end, we define our analog of ``heavy'' edges in~\cite{LZ22}, which we call ``dense'' edges.

\begin{definition}[Dense Edges]
Let $E_y \subseteq E(\mathcal{G}^\cup)$ be the edges in $\mathcal{G}^\cup$ which are incident to some square in $y$.

We say that an edge $(v,v')\in E_y$, where $v\in V_1$ and $v'\in V_0$, is \emph{dense} if it is incident to at least $\delta\Delta - \Delta^{1/2+\varepsilon}/\delta$ squares of $c_v$.

We then define the vertex set $W\subseteq V_0$ to be the set of all vertices incident to a normal vertex $v\in Y_n$ through a dense edge.
\end{definition}

From the perspective of a vertex $v' \in V_0$, only individual columns and rows of its neighboring nearest codewords $c_v$ are visible. Dense edges are precisely the edges through which $v'$ expects to see non-trivial codewords of $C_A$ or $C_B$. The set $W\subseteq V_1$ defined above can therefore be thought of as the set of ``candidate'' $v_0$'s. We will identify a vertex of $W$ with a linear number of dense edges but a sublinear number of exceptional neighbors in $Y_e$. Such a vertex will allows us to utilize the robustness properties of the local codes.

We first show that each $v' \in W$ must have many neighbors in $Y$ (see Figure~\ref{fig:dense_edge}).

\begin{figure}
    \centering
    \input{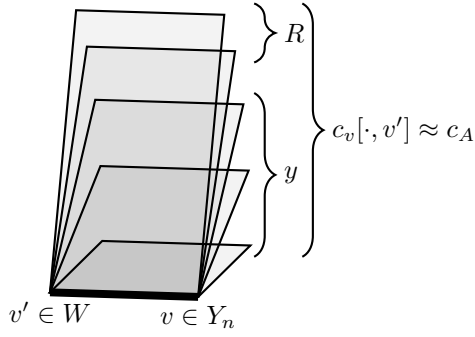}
    \caption{The faces incident to a dense edge $(v, v')$ connecting $v'\in W$ to a normal vertex $v\in Y_n$. Note that $c_v[\cdot,v']$ is close to a $C_A$ codeword.}
    \label{fig:dense_edge}
\end{figure}

\begin{lemma}\label{lem:deg_W}
The degree in $E_y$ of any $v' \in W$ is at least $\frac{1}{2}\delta\Delta - \Delta^{1/2+\varepsilon}/\delta$. In particular, every $v' \in W$ is adjacent to at least $\frac{1}{2}\delta\Delta - \Delta^{1/2+\varepsilon}/\delta$ vertices in $Y$.
\end{lemma}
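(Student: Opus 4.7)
The plan is to fix $v' \in W$ and use its dense edge witness to produce many $y$-squares in $v'$'s local view, each yielding a distinct edge of $E_y$ incident to $v'$. By the definition of $W$, there exists a normal vertex $v_0 \in Y_n$ with a dense edge $(v_0,v') \in E_y$; without loss of generality assume this is an $A$-type edge. The faces incident to $(v_0,v')$ form a book corresponding, in $v_0$'s local view, to the row $c_{v_0}[v',\cdot]$, which has weight at least $\delta\Delta - \Delta^{1/2+\varepsilon}/\delta$ by the density condition. The strategy is to replicate the argument of Lemma~\ref{lem:min_deg} restricted to this single book in order to show $|y_{v_0}[v',\cdot]|$ is large. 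Once that is in hand, each non-zero entry of $y_{v_0}[v',\cdot]$ corresponds to a distinct $B$-neighbor $v_b$ of $v'$ whose shared edge $(v_b,v')$ is forced into $E_y$ by the $y$-square at $(v_0,v_b)$, which immediately yields the $E_y$-degree bound; moreover each such $v_b$ has $e_{v_b}$ carrying a square outside of $R^+_{v_b}$, so $e_{v_b} \neq R^+_{v_b}$ and $v_b \in Y$, handling the ``In particular'' clause.

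Repeating the disjoint-union decomposition of Lemma~\ref{lem:min_deg} row-wise, using the inclusion $c_{v_0} \subseteq e_{v_0} \cup R_{v_0}$ restricted to the book, gives
\[
|y_{v_0}[v',\cdot]| \;\ge\; |c_{v_0}[v',\cdot]| - |R_{v_0}[v',\cdot] \cap c_{v_0}[v',\cdot]|\, ,
\]
so it suffices to upper bound the intersection on the right. The obstacle, and the main difficulty of the proof, is that $c_{v_0}[v',\cdot]$ is not itself a codeword of $C_1^\perp$, so the low-overlap property does not apply to it directly. To circumvent this I would invoke Lemma~\ref{lem:c_close_col}: since $v_0$ is normal, there exists a $C_B$ codeword $c_B$ within Hamming distance $\Delta^{1/2+\varepsilon}/\delta$ of $c_{v_0}[v',\cdot]$, and $c_B$ must be non-zero since the row is too heavy to lie in a small neighborhood of $0$. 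Lifting $c_B$ to the codeword $\tilde c_B \in C_1^\perp$ supported only on the row indexed by $v'$ produces a legitimate target for the low-overlap property at $v_0$, yielding $|R_{v_0}[v',\cdot] \cap c_B| \le |c_B|/2$.

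The final step is to transfer this bound from $c_B$ back to $c_{v_0}[v',\cdot]$. Writing $c_{v_0}[v',\cdot] = c_B \oplus \rho$ with $|\rho| \le \Delta^{1/2+\varepsilon}/\delta$ and using $c_{v_0}[v',\cdot] \subseteq c_B \cup \rho$, a union-bound step combined with $|c_B| \le |c_{v_0}[v',\cdot]| + |\rho|$ yields an estimate of the form $|R_{v_0}[v',\cdot] \cap c_{v_0}[v',\cdot]| \le |c_{v_0}[v',\cdot]|/2 + O(\Delta^{1/2+\varepsilon}/\delta)$. Plugging in $|c_{v_0}[v',\cdot]| \ge \delta\Delta - \Delta^{1/2+\varepsilon}/\delta$ then gives $|y_{v_0}[v',\cdot]| \ge \tfrac{1}{2}\delta\Delta - \Delta^{1/2+\varepsilon}/\delta$ after careful accounting of constants. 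The main anticipated obstacle is precisely this constant-chasing: the naive triangle-inequality step seems to lose a factor of two in the $\Delta^{1/2+\varepsilon}/\delta$ error term, and to land on the exact bound stated in the lemma one must exploit the sharper bound $|c_{v_0}| \le \tfrac{1}{2}\Delta^{3/2+\varepsilon}$ coming from the normality of $v_0$ when applying robustness inside Lemma~\ref{lem:c_close_col}.
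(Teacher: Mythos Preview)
Your approach is essentially the same as the paper's: take the dense-edge witness $(v_0,v')$ to a normal vertex, pass to the nearby row/column codeword $c_B$ via Lemma~\ref{lem:c_close_col}, lift $c_B$ to a $C_1^\perp$ codeword supported on a single row so that low-overlap applies, and conclude that $|y_{v_0}[v',\cdot]|$ is large, hence $\deg_{E_y}(v')$ is large (with the ``In particular'' clause following from Lemma~\ref{lem:y_subgraph}). All the key ideas are correct.

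The one genuine issue is the constant-chasing at the end, and your proposed remedy does not work. Going through $|c_{v_0}[v',\cdot]|$ first and then transferring to $c_B$ forces two triangle-inequality losses and yields $|y_{v_0}[v',\cdot]| \ge \tfrac{1}{2}\delta\Delta - 2\Delta^{1/2+\varepsilon}/\delta$, not the stated bound. Invoking the sharper $|c_{v_0}|\le\tfrac12\Delta^{3/2+\varepsilon}$ from normality does not help: that bound is already what produces the $\Delta^{1/2+\varepsilon}/\delta$ distance in Lemma~\ref{lem:c_close_col}, and there is nothing further to squeeze from it. The fix is simpler and purely a matter of reordering: decompose $|c_B|$ directly rather than $|c_{v_0}[v',\cdot]|$. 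Using $c_{v_0}\subseteq y_{v_0}\sqcup R_{v_0}$ on the row and $|c_B\setminus c_{v_0}[v',\cdot]|\le \Delta^{1/2+\varepsilon}/\delta$, one gets
\[
|c_B| \le |y_{v_0}[v',\cdot]\cap c_B| + |R_{v_0}[v',\cdot]\cap c_B| + \Delta^{1/2+\varepsilon}/\delta \le |y_{v_0}[v',\cdot]| + |c_B|/2 + \Delta^{1/2+\varepsilon}/\delta,
\]
whence $|y_{v_0}[v',\cdot]|\ge |c_B|/2 - \Delta^{1/2+\varepsilon}/\delta \ge \tfrac12\delta\Delta - \Delta^{1/2+\varepsilon}/\delta$ using only $|c_B|\ge\delta\Delta$. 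This is exactly the paper's computation.
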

\begin{proof}
Let $v'\in W$. By assumption, there exists a dense edge $(v,v')$ connecting $v'$ to a normal vertex $v\in Y_n$. Let us assume without loss of generality that $(v,v')$ is a $B$-edge so that $c_v[\cdot,v']$ defines a column of $c_v$.

Note that the degree of $v'$ in $E_y$ is lower bounded by the weight of the corresponding column in $y_v$, i.e., $\deg_{E_y}(v') \ge |y_v[\cdot,v']|$.

Let $c_A \in C_A$ denote the codeword closest to $c_v[\cdot,v']$. Since $(v,v')$ is dense, it follows from Lemma~\ref{lem:c_close_col} that $c_A$ is non-zero. We can form the matrix which is zero everywhere except on the $v'$-column, where it is equal to $c_A$. Note that this matrix will be a codeword of $C_1^\perp$, and that the low-overlap property applied to this codeword implies that $|R_v[\cdot,v'] \cap c_A| \le |c_A|/2$. 

Then we have
\begin{align}
    |c_A| &= |c_v[\cdot,v']\cap c_A| + |c_A\backslash c_v[\cdot,v']|\\
    &\le |c_v[\cdot,v']\cap c_A| + \Delta^{1/2+\varepsilon}/\delta\\
    &\le |y_v[\cdot,v']\cap c_A| + |R_v[\cdot,v']\cap c_A| + \Delta^{1/2+\varepsilon}/\delta \\
    &\le |y_v[\cdot,v']| + |c_A|/2 + \Delta^{1/2+\varepsilon}/\delta\, ,
\end{align}
where the second line follows from Lemma~\ref{lem:c_close_col}, the third line from the fact that $c_v \subseteq y_v \cup R_v$, and the last line from the low-overlap property. This gives us
\begin{align}
    \delta\Delta/2 \le |c_A|/2 \le |y_v[\cdot,v']| + \Delta^{1/2+\varepsilon}/\delta\, .
\end{align}
Therefore we have
\begin{align}
    \deg_{E_y}(v') \ge |y_v[\cdot,v']| \ge \frac{1}{2}\delta\Delta - \Delta^{1/2+\varepsilon}/\delta\, .
\end{align}
Lemma~\ref{lem:y_subgraph} now ensures that each $v' \in W$ is adjacent to at least $\frac{1}{2}\delta\Delta - \Delta^{1/2+\varepsilon}/\delta$ elements of $Y$. 
\end{proof}

Knowing that each $v'\in W$ has many neighbors in $Y$, the expansion of $\mc G^\cup$ implies that the number of vertices in $W$ must be small compared to $Y$.

\begin{lemma}\label{lem:YW_bound}
For $\Delta$ large enough, the set $W$ satisfies the bound
\begin{align}
    |W|\le \frac{81}{\delta^2\Delta}|Y|\, .
\end{align}
\end{lemma}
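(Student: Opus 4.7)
The plan is to combine Lemma~\ref{lem:deg_W}, which gives each $v' \in W$ a linear-in-$\Delta$ number of neighbors in $Y$, with the expansion of $\mc G^\cup$ (second eigenvalue $\le 4\sqrt{\Delta}$ from Table~\ref{tab:graph_parameters}) via the Expander Mixing Lemma applied to the edge set $E(W, Y)$. This is the standard ``expansion kills a bipartite set of small degree" argument.

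More concretely, I would first lower bound $|E(W,Y)|$ by summing the degree bound from Lemma~\ref{lem:deg_W}, giving
\begin{equation}
|E(W,Y)| \ge |W|\left(\tfrac{1}{2}\delta\Delta - \Delta^{1/2+\varepsilon}/\delta\right).
\end{equation}
I would then upper bound $|E(W,Y)|$ using the Expander Mixing Lemma on $\mc G^\cup$ (which is $(2\Delta)$-regular on $2|G|$ vertices) to obtain
\begin{equation}
|E(W,Y)| \le \frac{\Delta}{|G|}|W||Y| + 4\sqrt{\Delta}\sqrt{|W||Y|}.
\end{equation}

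To cancel the ``first" term on the right, I would recycle the inequality $|Y| \le |G|/(3\Delta^{1/2-\varepsilon})$ established in equation~\eqref{eq:Y_bound} in the proof of Lemma~\ref{lem:y_except_bound}; this shows $\Delta|Y|/|G| \le \Delta^{1/2+\varepsilon}/3$, which together with the $\Delta^{1/2+\varepsilon}/\delta$ subtractive term on the lower-bound side is $o(\delta\Delta)$ for small $\varepsilon < 1/2$. Hence for $\Delta$ sufficiently large both asymptotically negligible terms can be absorbed, giving an ``effective" degree lower bound of the form $\tfrac{4}{9}\delta\Delta \cdot |W| \le 4\sqrt{\Delta}\sqrt{|W||Y|}$. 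Squaring this inequality and dividing by $|W|$ yields precisely $|W| \le (81/(\delta^2\Delta))|Y|$.

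The only subtlety is bookkeeping of constants: the factor $81$ in the conclusion is not tight and corresponds to the specific choice of $\tfrac{4}{9}$ for the fraction of $\tfrac{1}{2}\delta\Delta$ that remains after absorbing the lower-order terms. I do not expect any conceptual obstacle; this mirrors the role played by the analogous bound in~\cite{LZ22}, and the real work has already been done in Lemmas~\ref{lem:min_deg} and~\ref{lem:deg_W}, which supplied the minimum-degree bounds on $\mc G^\square_{1,y}$ and on $W$ via the low-overlap property and the robustness of $C_1^\perp$.
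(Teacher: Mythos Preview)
Your proposal is correct and follows essentially the same argument as the paper: lower bound $|E_{\mc G^\cup}(W,Y)|$ via Lemma~\ref{lem:deg_W}, upper bound it by the Expander Mixing Lemma on $\mc G^\cup$, use the bound $|Y|\le |V_1|/(3\Delta^{1/2-\varepsilon})$ from~\eqref{eq:Y_bound} to absorb the density term, and then solve for $|W|/|Y|$. Your choice of the residual factor $\tfrac{4}{9}\cdot\tfrac{1}{2}\delta\Delta$ is exactly the paper's bookkeeping (it takes $\Delta$ large enough that the parenthetical correction is at least $8/9$), yielding the constant $81$ after squaring.
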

\begin{proof}
Using Lemma~\ref{lem:deg_W}, we know that each vertex in $W$ is adjacent to at least $\frac{1}{2}\delta\Delta - \Delta^{1/2+\varepsilon}/\delta$ vertices in $Y$. Therefore we can bound the edges in $\mathcal{G}^\cup$ between $Y$ and $W$ by
\begin{align}
    |E_{\mathcal{G}^\cup}(Y,W)| \ge \left(\frac{1}{2}\delta\Delta - \frac{\Delta^{1/2+\varepsilon}}{\delta}\right)|W| = \frac{1}{2}\delta\Delta\left(1 - \frac{2}{\delta^2\Delta^{1/2-\varepsilon}}\right)|W|\, .
\end{align}
Applying the Expander Mixing Lemma, we have
\begin{align}
    |E_{\mathcal{G}^\cup}(Y,W)| \le \frac{\Delta}{|V_1|}|Y||W|+4\Delta^{1/2}\sqrt{|Y||W|}\, .
\end{align}
From equation~\eqref{eq:Y_bound}, we have
\begin{align}
    |Y| \le \frac{|V_1|}{3\Delta^{1/2-\varepsilon}}\, .
\end{align}
Combining these inequalities, we end up with
\begin{align}
    \frac{1}{2}\delta\Delta\left(1 - \frac{2}{\delta^2\Delta^{1/2-\varepsilon}}\right)|W| &\le \frac{\Delta}{|V_1|}|Y||W|+4\Delta^{1/2}\sqrt{|Y||W|}\\
    &\le \frac{1}{3}\Delta^{1/2+\varepsilon}|W|+4\Delta^{1/2}\sqrt{|Y||W|}\, ,
\end{align}
or equivalently,
\begin{align}
    \frac{1}{8}\delta\Delta^{1/2}\left(1 - \frac{2}{\delta^2\Delta^{1/2-\varepsilon}} - \frac{2}{3\delta\Delta^{1/2-\varepsilon}}\right) \le \sqrt{\frac{|Y|}{|W|}}\, .
\end{align}
Taking $\Delta$ sufficiently large so that
\begin{align}
   1 - \frac{2}{\delta^2\Delta^{1/2-\varepsilon}} - \frac{2}{3\delta\Delta^{1/2-\varepsilon}} \ge \frac{8}{9}\, ,
\end{align}
we end up with the desired bound.
\end{proof}

We expect each $v\in Y_n$ to be incident to at least one dense edge by virtue of having a column or row close to a non-trivial codeword. This means that the total number of dense edges is at least on the order of $|Y_n|$. Lemma~\ref{lem:YW_bound} in turn suggests that the number of dense edges is large relative to $|W|$. This implies that the average vertex in $W$ should be incident to a large number of dense edges. This is formalized by Lemma~\ref{lem:dense_deg} and Corollary~\ref{cor:dense_frac} below.

\begin{lemma}\label{lem:dense_deg}
Let $\mathcal{D}$ denote the set of dense edges incident to $W$. Then the average degree of $W$ in $\mathcal{D}$ is bounded by
\begin{align}
    \frac{\mathcal{|D|}}{|W|} \ge 2\alpha \Delta
\end{align}
for some constant $\alpha > 0$.\footnote{Note that we may choose $\alpha$ to be anything smaller than $\delta^2/192$ by taking $\Delta$ sufficiently large.}
\end{lemma}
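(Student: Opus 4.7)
The plan is to show that each normal vertex $v\in Y_n$ contributes at least one dense edge to $\mc D$, and then combine this with the previously established estimates on $|Y_n|/|Y|$ and $|W|/|Y|$. No new ingredients beyond Lemmas~\ref{lem:c_close_col}, \ref{lem:y_except_bound}, and~\ref{lem:YW_bound} are needed; this is a direct aggregation.

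The first step is to translate the qualitative statement of Lemma~\ref{lem:c_close_col} (that $c_v$ has a row or column close to a nonzero codeword of $C_A$ or $C_B$) into the quantitative definition of a dense edge. Concretely, Lemma~\ref{lem:c_close_col} provides a decomposition $c_v = \mathbf c + \mathbf r$ with $\mathbf c$ supported on at most $\Delta^{1/2+\varepsilon}/\delta$ columns and $\mathbf r$ on at most $\Delta^{1/2+\varepsilon}/\delta$ rows, and with at least one of these columns or rows being a nonzero codeword. If $\mathbf c[\cdot,v']$ is nonzero for some column index $v'$, then $|\mathbf c[\cdot,v']|\ge \delta\Delta$ since $\mathbf c[\cdot,v']\in C_A\setminus\{0\}$, while $|\mathbf r[\cdot,v']|\le \Delta^{1/2+\varepsilon}/\delta$ by the row-support bound on $\mathbf r$. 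The triangle inequality then gives
\[
|c_v[\cdot,v']|\ \ge\ |\mathbf c[\cdot,v']|-|\mathbf r[\cdot,v']|\ \ge\ \delta\Delta-\Delta^{1/2+\varepsilon}/\delta,
\]
so the $B$-edge $(v,v')$ is dense by definition. The symmetric argument treats the case of a nonzero row of $\mathbf r$. Thus each $v\in Y_n$ has at least one incident dense edge, and by the definition of $W$ this dense edge lies in $\mc D$.

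For the counting, each dense edge has a unique endpoint in $V_1$, so the per-vertex lower bound aggregates without double-counting to $|\mc D|\ge |Y_n|$. Combining with Lemmas~\ref{lem:YW_bound} and~\ref{lem:y_except_bound} gives
\[
\frac{|\mc D|}{|W|}\ \ge\ \frac{|Y_n|}{|W|}\ \ge\ \frac{\delta^2\Delta}{81}\cdot\frac{|Y_n|}{|Y|}\ \ge\ \frac{\delta^2\Delta}{81}\left(1-\frac{576}{\Delta^{1+2\varepsilon}}\right).
\]
For $\Delta$ sufficiently large, the parenthetical factor can be made arbitrarily close to $1$, so $|\mc D|/|W|\ge 2\alpha\Delta$ for any fixed $\alpha<\delta^2/162$, which accommodates the footnote's bound $\alpha<\delta^2/192$.

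I do not anticipate a genuine obstacle: the only mildly delicate point is connecting ``close to a nonzero codeword'' to the weight threshold defining a dense edge, which the triangle inequality handles cleanly. The heavy lifting (expander mixing applied to $(Y,W)$ and to $(Y_e,Y)$, and the robustness-based structural statement of Lemma~\ref{lem:c_close_col}) has already been done in the preceding lemmas.
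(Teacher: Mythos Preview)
Your proposal is correct and follows essentially the same approach as the paper: you show each $v\in Y_n$ contributes at least one dense edge (via the decomposition $c_v=\mathbf c+\mathbf r$ from Lemma~\ref{lem:c_close_col} and the triangle inequality), observe that dense edges have at most one $V_1$-endpoint so $|\mc D|\ge |Y_n|$, and then combine Lemmas~\ref{lem:y_except_bound} and~\ref{lem:YW_bound} exactly as the paper does. Your explicit triangle-inequality step is slightly more detailed than the paper's phrasing, and your bound $\alpha<\delta^2/162$ is in fact marginally sharper than the footnote's $\delta^2/192$, but the argument is the same.
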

\begin{proof}
First, note that every $v\in Y_n$ is incident to at least one dense edge, which is then by definition in $\mathcal{D}$. To see this, consider $c_v$, which is non-zero by definition of $Y$. It follows from Lemma~\ref{lem:c_close_col} that $c_v$ contains at least one column or row which is close to a non-zero codeword of $C_A$ or $C_B$, which in turn implies that column or row must have weight at least $\delta\Delta - \Delta^{1/2+\varepsilon}/\delta$. By definition, such a column or row is defined by some edge $(v,v') \in \mathcal{G}^\cup$, which is then a dense edge incident to $v$.

Since each dense edge has at most one endpoint in $Y_n$, it follows the above discussion that $|\mathcal{D}| \ge |Y_n| = |Y|-|Y_e|$. From Lemmas~\ref{lem:y_except_bound} and~\ref{lem:YW_bound}, it follows that
\begin{align}
    |Y|-|Y_e| \ge \left(1 - \frac{576}{\Delta^{1+2\varepsilon}}\right)|Y| \ge \frac{\Delta\delta^2}{81}\left(1 - \frac{576}{\Delta^{1+2\varepsilon}}\right)|W|\, .
\end{align}
Therefore we get
\begin{align}
\frac{|\mathcal{D}|}{|W|} \ge \frac{\delta^2}{81}\left(1 - \frac{576}{\Delta^{1+2\varepsilon}}\right)\Delta \equiv 2\alpha\Delta\, .
\end{align}
\end{proof}

\begin{corollary}\label{cor:dense_frac}
At least an $\alpha/2$ fraction of the vertices in $W$ are incident to at least $\alpha\Delta$ dense edges.
\end{corollary}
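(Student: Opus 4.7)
The plan is to prove the corollary by a straightforward averaging (Markov-type) argument using the lower bound on the average dense-degree supplied by Lemma~\ref{lem:dense_deg}. Let $S \subseteq W$ denote the set of vertices incident to at least $\alpha\Delta$ dense edges, so that the goal is to show $|S| \ge (\alpha/2)|W|$. The strategy is to split the total count $|\mathcal{D}|$ into the contribution from $S$ and from $W \setminus S$ and to apply a tight bound in each regime.

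For the upper bound, I would use that every vertex of $W \subseteq V_0$ has total degree at most $2\Delta$ in $\mathcal{G}^\cup$, since $\mathcal{G}^\cup$ is $2\Delta$-regular; in particular, any vertex is incident to at most $2\Delta$ dense edges. Thus vertices in $S$ contribute at most $2\Delta \cdot |S|$ to $|\mathcal{D}|$, while vertices in $W \setminus S$ contribute at most $\alpha\Delta \cdot (|W|-|S|) \le \alpha\Delta \cdot |W|$. Summing these and combining with the lower bound $|\mathcal{D}| \ge 2\alpha\Delta \cdot |W|$ from Lemma~\ref{lem:dense_deg} yields
\begin{equation}
2\alpha\Delta \cdot |W| \;\le\; |\mathcal{D}| \;\le\; 2\Delta \cdot |S| + \alpha\Delta \cdot |W|,
\end{equation}
which rearranges directly to $|S| \ge (\alpha/2)|W|$, completing the proof. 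There is no substantive obstacle here; the corollary is essentially a one-line Markov-type consequence of the preceding average-degree bound, and the only ingredient beyond Lemma~\ref{lem:dense_deg} is the trivial regularity bound on the maximum possible number of dense edges at any single vertex.
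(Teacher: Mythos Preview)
Your proof is correct and follows essentially the same averaging argument as the paper: bound $|\mathcal{D}|$ below by Lemma~\ref{lem:dense_deg} and above using the $2\Delta$-regularity of $\mathcal{G}^\cup$ on the high-degree vertices and the $\alpha\Delta$ threshold on the rest. The paper writes this in terms of the fraction $\eta = |S|/|W|$ and obtains the slightly sharper $\eta \ge \alpha/(2-\alpha) \ge \alpha/2$, but the argument is the same.
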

\begin{proof}
Let $\eta$ be the fraction of vertices in $W$ with dense degree greater than $\alpha\Delta$. The maximum degree of any vertex in $\mathcal{G}^\cup$ is $2\Delta$, so it follows that
\begin{align}
    2\alpha\Delta \le \frac{|\mathcal{D}|}{|W|} \le 2\Delta\eta + (1-\eta)\alpha\Delta\, .
\end{align}
Therefore we have $\eta \ge \alpha/(2-\alpha) \ge \alpha/2$.
\end{proof}

We have now shown that there exists a subset of vertices in $W$ incident to many dense edges. We must now show that within this subset, there exists vertices which are \emph{not} adjacent to many exceptional vertices in $Y_e$. We expect this to be the case since the number of exceptional vertices is small relative to the number of normal vertices. To proceed, we bound the number of edges shared between $W$ and $Y_e$ in Lemma~\ref{lem:WYe_edge} below.

\begin{lemma}\label{lem:WYe_edge}
The total number of edges in $\mathcal{G}^\cup$ between $W$ and $Y_e$ is bounded above by
\begin{align}
    |E_{\mathcal{G}^\cup}(W,Y_e)| \le 193\Delta^{1/2-\varepsilon}|W|\, .
\end{align}
\end{lemma}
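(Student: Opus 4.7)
The plan is to apply the Expander Mixing Lemma on $\mathcal{G}^\cup$ to the sets $W \subseteq V_0$ and $Y_e \subseteq V_1$, which yields
\[
|E_{\mathcal{G}^\cup}(W, Y_e)| \le \frac{\Delta|W||Y_e|}{|G|} + 4\sqrt{\Delta|W||Y_e|}\, .
\]
The main obstacle is that the previously established bounds on $|Y_e|$ are all phrased in terms of $|Y|$ or $|G|$, whereas the target bound scales with $|W|$. Neither of the existing forms is individually sufficient: the absolute bound $|Y_e| = O(|G|/\Delta^{3/2+\varepsilon})$ is insensitive to $|W|$, and any bound proportional to $|Y|$ fails because $|Y|$ could a priori be large when $|W|$ is small. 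The crux is therefore to upgrade these into a $|W|$-type bound of the form $|Y_e| = O(|W|/\Delta^{2\varepsilon})$.

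To do this, I will exploit the definition of $W$ via dense edges. By Lemma~\ref{lem:c_close_col}, every normal vertex $v \in Y_n$ has at least one column or row of $c_v$ close to a nonzero codeword of $C_A$ or $C_B$, producing at least one incident dense edge; by definition, the other endpoint of any such dense edge lies in $W$. Counting the dense edges with one endpoint in $Y_n$ from the $W$ side and using that each $v' \in W$ has total degree at most $2\Delta$ in $\mathcal{G}^\cup$, I get $|Y_n| \le 2\Delta|W|$. Since Lemma~\ref{lem:y_except_bound} gives $|Y_e|/|Y| \le 576/\Delta^{1+2\varepsilon}$, which is strictly less than $1/2$ once $\Delta$ is sufficiently large, we have $|Y_n| \ge |Y|/2$, so $|Y| \le 4\Delta|W|$ and hence
\[
|Y_e| \le \frac{576}{\Delta^{1+2\varepsilon}}|Y| \le \frac{2304\,|W|}{\Delta^{2\varepsilon}}\, .
\]

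Substituting this into the Expander Mixing bound, the cross term becomes
\[
4\sqrt{\Delta|W||Y_e|} \le 4\sqrt{2304}\cdot\Delta^{1/2-\varepsilon}|W| = 192\,\Delta^{1/2-\varepsilon}|W|\, ,
\]
which is the dominant contribution and the source of the constant $192$ in the claim. For the main term, I will use the absolute bound $|Y_e| \le 192|G|/\Delta^{3/2+\varepsilon}$, obtained by chaining Lemma~\ref{lem:y_except_bound} with equation~\eqref{eq:Y_bound}, to get $\Delta|W||Y_e|/|G| \le 192|W|/\Delta^{1/2+\varepsilon}$, which is at most $\Delta^{1/2-\varepsilon}|W|$ once $\Delta \ge 192$. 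Summing the two contributions then gives $|E_{\mathcal{G}^\cup}(W, Y_e)| \le 193\,\Delta^{1/2-\varepsilon}|W|$, as required. The only nontrivial step is the conversion of $|Y_e|$ into a bound proportional to $|W|$; once this is in hand, the rest is direct expander mixing bookkeeping.
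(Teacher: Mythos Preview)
Your proposal is correct and follows essentially the same approach as the paper: apply the Expander Mixing Lemma to $W$ and $Y_e$, convert the $|Y_e|$-bound into a $|W|$-bound via the key observation $|Y_n|\le 2\Delta|W|$ (from dense edges) combined with $|Y_e|\le |Y|/2$ for large~$\Delta$, and then do the arithmetic. The only cosmetic difference is that the paper substitutes $|Y_e|\le 576|Y|/\Delta^{1+2\varepsilon}$ into the cross term before invoking $|Y|\le 4\Delta|W|$, whereas you first consolidate to $|Y_e|\le 2304|W|/\Delta^{2\varepsilon}$; both routes yield the same $192\,\Delta^{1/2-\varepsilon}|W|$ for the cross term and $192|W|/\Delta^{1/2+\varepsilon}$ for the main term.
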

\begin{proof}
Using the Expander Mixing Lemma, we get
\begin{align}
|E_{\mathcal{G}^\cup}(W,Y_e)| \le \frac{\Delta}{|V_1|}|Y_e||W| + 4\sqrt{\Delta}\sqrt{|Y_e||W|}\, .
\end{align}
Using Lemma~\ref{lem:y_except_bound} and inequality~\eqref{eq:Y_bound}, this becomes
\begin{align}
    |E_{\mathcal{G}^\cup}(W,Y_e)| &\le \frac{576}{|V_1|\Delta^{2\varepsilon}}|Y||W| + 96\Delta^{-\varepsilon}\sqrt{|Y||W|}\\
    &\le \frac{192}{\Delta^{1/2+\varepsilon}}|W| + 96\Delta^{-\varepsilon}\sqrt{|Y||W|}\, .
\end{align}
As noted in the proof of Lemma~\ref{lem:dense_deg}, each vertex of $Y_n$ is incident to at least one vertex in $W$. Since each vertex of $W$ has degree $2\Delta$, it follows that $|Y_n| \le 2\Delta|W|$. Choosing $\Delta$ sufficiently large that
\begin{align}
    \frac{576}{\Delta^{1+2\varepsilon}} \le \frac{1}{2}\, ,\label{eq:y_exp_half}
\end{align}
it follows from Lemma~\ref{lem:y_except_bound} that $|Y_n| = |Y|-|Y_e| \ge |Y|/2$, so that $|Y| \le 4\Delta|W|$. Combining these bounds, we obtain
\begin{align}
      |E_{\mathcal{G}^\cup}(W,Y_e)| &\le  \frac{192}{\Delta^{1/2+\varepsilon}}|W| + 96\Delta^{-\varepsilon}\sqrt{|Y||W|}\\
      &\le \frac{192}{\Delta^{1/2+\varepsilon}}|W| + 192\Delta^{1/2-\varepsilon}|W|\\
      &= 192\left(1+\frac{1}{\Delta}\right)\Delta^{1/2-\varepsilon}|W|\\
      &\le 193\Delta^{1/2-\varepsilon}|W|\, .
\end{align}
\end{proof}

Putting everything together, we can finally show the existence of the special vertex $v_0$, as formalized by Corollary~\ref{cor:special_vertex}.

\begin{corollary}\label{cor:special_vertex}
At least an $\alpha/4$ fraction of the vertices of $W$:
\begin{enumerate}
    \item are incident to at least $\alpha\Delta$ dense edges, and
    \item are adjacent to at most $(772/\alpha) \Delta^{1/2-\varepsilon} \equiv \beta\Delta^{1/2-\varepsilon}$ vertices of $Y_e$.
\end{enumerate}
In particular, at least one such vertex exists since $\alpha > 0$.
\end{corollary}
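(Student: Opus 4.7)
The plan is to combine Corollary~\ref{cor:dense_frac} with Lemma~\ref{lem:WYe_edge} via a simple averaging (Markov-style) argument. From Corollary~\ref{cor:dense_frac}, I already have that at least a $(\alpha/2)$-fraction of $W$ satisfies condition (1), namely incidence to at least $\alpha\Delta$ dense edges. So it suffices to show that fewer than an $(\alpha/4)$-fraction of $W$ violate condition (2), and then intersect the two subsets.

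First I would define the ``bad'' set $B = \{v' \in W : |N_{\mathcal{G}^\cup}(v') \cap Y_e| > \beta\Delta^{1/2-\varepsilon}\}$. Double-counting the edges between $W$ and $Y_e$, every vertex in $B$ contributes strictly more than $\beta\Delta^{1/2-\varepsilon}$ edges, so
\begin{equation}
\beta\Delta^{1/2-\varepsilon}\,|B| \;<\; |E_{\mathcal{G}^\cup}(W,Y_e)| \;\le\; 193\,\Delta^{1/2-\varepsilon}\,|W|,
\end{equation}
where the second inequality is Lemma~\ref{lem:WYe_edge}. Solving gives $|B| < (193/\beta)\,|W| = (\alpha/4)\,|W|$ by the definition $\beta = 772/\alpha$.

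Now let $A \subseteq W$ denote the set of vertices satisfying condition (1), so $|A| \ge (\alpha/2)|W|$ by Corollary~\ref{cor:dense_frac}. The vertices satisfying both conditions form the set $A \setminus B$, and
\begin{equation}
|A \setminus B| \;\ge\; |A| - |B| \;\ge\; \tfrac{\alpha}{2}|W| - \tfrac{\alpha}{4}|W| \;=\; \tfrac{\alpha}{4}|W|,
\end{equation}
which gives the claimed fraction. Since $\alpha > 0$ (by the parenthetical footnote in Lemma~\ref{lem:dense_deg}, and since $W$ is nonempty whenever $Y_n$ is nonempty, which holds for $\Delta$ large via \eqref{eq:y_exp_half}), at least one such vertex $v_0$ exists. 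I do not anticipate any real obstacle here; the whole content is the averaging inequality together with the already-established bounds, and the constant $\beta = 772/\alpha$ is chosen precisely to make the two fractions subtract cleanly.
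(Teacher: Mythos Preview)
Your proof is correct and is essentially identical to the paper's own argument: both define the set of vertices violating condition (2), bound its size via Lemma~\ref{lem:WYe_edge} and a Markov-type inequality to get at most an $\alpha/4$ fraction, and then subtract from the $\alpha/2$ fraction guaranteed by Corollary~\ref{cor:dense_frac}. The only difference is notation ($A,B$ versus the paper's $W_1,\overline{W_2}$).
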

\begin{proof}
Let $W_1$ be the subset of vertices in $W$ satisfying condition 1, and let $\overline{W_2}$ be the subset of vertices in $W$ \emph{not} satisfying condition 2. Since each vertex of $\overline{W_2}$ is adjacent to more than $(772/\alpha)\Delta^{1/2-\varepsilon}$ vertices of $Y_e$, we get 
\begin{align}
|\overline{W_2}|\cdot (772/\alpha) \Delta^{1/2-\varepsilon} \le |E_{\mathcal{G}^\cup}(W,Y_e)| \le 193\Delta^{1/2-\varepsilon}|W|\, ,
\end{align}
which implies that $|\overline{W_2}| \le (\alpha/4)|W|$. Therefore the set of vertices satisfying both condition 1 and 2 is bounded below by
\begin{align}
    |W_1\backslash \overline{W_2}| \ge |W_1| - |\overline{W_2}| \ge \alpha|W|/2 - \alpha|W|/4 = \alpha|W|/4\, .
\end{align}
\end{proof}

\subsubsection{The local view at $v_0$}

Let $v_0 \in W$ be a vertex satisfying the conditions of Corollary~\ref{cor:special_vertex}. In this subsection, we analyze the structure of $y$ and $R$ \emph{from the perspective of $v_0 \in V_0$}. Let $y_0$, $e_0$, and $R_0$ denote the local views of $y$, $e$, and $R$ at the vertex $v_0$. 

We will write $[v,v'] \in Q(v_0)$ to denote the face anchored at $v_0$ with neighboring $V_1$ vertices $v$ and $v'$, with the implicit convention that unprimed vertices $v$ denote row vertices, and primed vertices $v'$ denote column vertices. We will also write $N(v_0)\subseteq V_1$ to denote the set of all neighbors of $v_0$ in $\mc G^\cup$, and $N_r(v_0)$ and $N_c(v_0)$ to denote the set of row and column vertex neighbors, respectively.

We first show a key result regarding the structure of $y_0$ and $R_0$. As a consequence of metastability, the edges of $R_0$ must complement the edges of $y_0$ to complete codewords on either columns or rows shared with neighboring local views (see equation~\ref{eq:y_R_decomp}). This allows us to split $R_0$ into disjoint parts depending on whether columns or rows are corrected.

\begin{lemma}\label{lem:R_decomp}
We can write $R_0 = R_\mathrm{col} \sqcup R_\mathrm{row}$, where we have
\begin{align}
    y_0[v,\cdot]\sqcup R_{\mathrm{row}}[v,\cdot] = c_v[v_0,\cdot]\, ,\qquad\text{and}\qquad y_0[\cdot,v']\sqcup R_{\mathrm{col}}[\cdot,v'] = c_{v'}[\cdot,v_0]\, ,\label{eq:y_R_decomp}
\end{align}
for all $v \in N_r(v_0)$ and $v' \in N_c(v_0)$.
\end{lemma}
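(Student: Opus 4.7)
The plan is to define the partition $R_0 = R_{\mathrm{row}} \sqcup R_{\mathrm{col}}$ explicitly in terms of the local codewords $c_v$, then verify the two claimed identities by a short face-by-face check. The guiding observation is that since $y = e \setminus R$ is disjoint from $R$, the identity $y_0[v,\cdot] \sqcup R_{\mathrm{row}}[v,\cdot] = c_v[v_0,\cdot]$ forces $R_{\mathrm{row}}[v,\cdot]$ to be precisely the part of $c_v[v_0,\cdot]$ that also lies in $R_0$, so the decomposition is essentially uniquely determined.

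Accordingly, I would take
\[
R_{\mathrm{row}} \;=\; \bigsqcup_{v \in N_r(v_0)} \bigl(R_0[v,\cdot] \cap c_v[v_0,\cdot]\bigr), \qquad R_{\mathrm{col}} \;=\; \bigsqcup_{v' \in N_c(v_0)} \bigl(R_0[\cdot,v'] \cap c_{v'}[\cdot,v_0]\bigr).
\]
To prove the row identity, I would fix $v \in N_r(v_0)$ and case-split each face $q$ in the shared row on whether $q \in R$. If $q \notin R$, then in particular $q \notin R_v^+$, so the relation $c_v = e_v \triangle R_v^+$ gives $q \in c_v \iff q \in e \iff q \in y_0[v,\cdot]$. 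If instead $q \in R$, then $q \notin y$, and by definition $q \in c_v \iff q \in R_{\mathrm{row}}[v,\cdot]$. Disjointness of the left-hand union is automatic from $y \cap R = \emptyset$, and the column identity follows by a symmetric argument.

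The remaining step is to confirm that $R_{\mathrm{row}}$ and $R_{\mathrm{col}}$ truly partition $R_0$. By metastability, the family $\{R_w^+\}_{w \in V_1}$ is pairwise disjoint, so each $q \in R_0$ belongs to exactly one of $R_v^+$ or $R_{v'}^+$, where $v, v'$ are its two $V_1$-corners. Enumerating the four resulting cases according to which of $R_v^+, R_{v'}^+$ contains $q$ and whether $q \in e$, the identities $c_v = e_v \triangle R_v^+$ and $c_{v'} = e_{v'} \triangle R_{v'}^+$ show that the $e$-component cancels against whichever $R_w^+$ contains $q$, leaving $q$ in precisely one of $c_v$ or $c_{v'}$, and hence in exactly one of $R_{\mathrm{row}}, R_{\mathrm{col}}$.

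I do not expect any substantive obstacle here; the whole argument is bookkeeping once the correct partition is identified. The only genuine inputs are the relation $c_v = e_v + R_v^+$, the set-theoretic definition $y = e \setminus R$, and the disjointness of $\{R_w^+\}$ arising from metastability.
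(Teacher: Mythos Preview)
Your proposal is correct and follows essentially the same approach as the paper: you define $R_{\mathrm{row}}$ and $R_{\mathrm{col}}$ exactly as the paper does (elements of $R_0$ lying in the row/column codeword $c_v$ or $c_{v'}$), and your partition argument via metastability and the four $(R_v^+/R_{v'}^+,\, e)$ cases is identical to the paper's. The only cosmetic difference is that you verify the row identity by a direct face-by-face case split on $q \in R$ versus $q \notin R$, whereas the paper proves the two inclusions $y_0[v,\cdot]\sqcup R_{\mathrm{row}}[v,\cdot] \subseteq c_v[v_0,\cdot]$ and $c_v[v_0,\cdot] \subseteq y_0[v,\cdot]\sqcup R_0[v,\cdot]$ separately.
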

\begin{proof}
Let $q = [v,v'] \in R_0$. Since $R$ is metastable, it follows that $q$ belongs to exactly one of $R_v^+$ or $R_{v'}^+$. Suppose without loss of generality that $q\in R_v^+$. Since $e_v + R_v^+ = c_v$, it follows that $q \in c_v$ if and only if $q \notin e$. Likewise, since $q \notin R_{v'}^+$, it follows that $q \in c_{v'}$ if and only if $q \in e$. It follows that $q$ must be an element of exactly one of $c_v$ or $c_{v'}$.

Let $R_{\mathrm{row}}\subseteq R_0$ denote the collection of all $q\in R_0$ which belong to $c_v$ for some row vertex $v$. Likewise, let $R_{\mathrm{col}}\subseteq R_0$ denote the collection of all $q\in R_0$ which belong to $c_{v'}$ for some column vertex $v'$. Then by the preceding discussion we have 
\begin{align}
R_0 = R_{\mathrm{row}} \sqcup R_{\mathrm{col}}\, .    
\end{align}

Next, we show equation~\eqref{eq:y_R_decomp}. We focus on the row case, with the column case being analogous. Note that we have 
\begin{align}
y_0[v,\cdot] = e_v[v_0,\cdot]\backslash R_v[v_0,\cdot] \subseteq e_v[v_0,\cdot]\backslash R_v^+[v_0,\cdot] \subseteq e_v[v_0,\cdot] + R_v^+[v_0,\cdot] = c_v[v_0,\cdot]\,.
\end{align}
Also, we have $R_{\mathrm{row}}[v,\cdot] \subseteq c_v[v_0,\cdot]$ by definition. This implies that
\begin{align}
    y_0[v,\cdot] \sqcup R_{\mathrm{row}}[v,\cdot] \subseteq c_v[v_0,\cdot]\, .
\end{align}
Conversely, we have 
\begin{align}
c_v[v_0,\cdot] = e_v[v_0,\cdot] + R^+_v[v_0,\cdot] \subseteq e_v[v_0,\cdot] \cup R_v[v_0,\cdot] = y_v[v_0,\cdot]\sqcup R_v[v_0,\cdot] = y_0[v,\cdot] \sqcup R_0[v,\cdot]\, .
\end{align} 
Since all elements of $R_0$ belonging to $c_v$ are by definition in $R_{\mathrm{row}}$, it follows that we have
\begin{align}
    c_v[v_0,\cdot] \subseteq y_0[v,\cdot] \sqcup R_{\mathrm{row}}[v,\cdot]\, .
\end{align}
It therefore follows that
\begin{align}
    y_0[v,\cdot]\sqcup R_{\mathrm{row}}[v,\cdot] = c_v[v_0,\cdot]\, ,\qquad\text{and}\qquad y_0[\cdot,v']\sqcup R_{\mathrm{col}}[\cdot,v'] = c_{v'}[\cdot,v_0]\, ,
\end{align}
which hold for all $v \in N_r(v_0)$ and $v'\in N_c(v_0)$.
\end{proof}

\begin{corollary}\label{cor:R_notin_Y}
Let $[v,v']\in Q(v_0)$. If $v\notin Y$ then $R_{\mathrm{row}}[v,\cdot] = 0$. Likewise, if $v'\notin Y$ then $R_{\mathrm{col}}[\cdot,v'] = 0$.
\end{corollary}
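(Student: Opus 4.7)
The plan is to observe that this corollary follows almost immediately by chaining the definition of $Y$ with the decomposition established in Lemma~\ref{lem:R_decomp}. The key fact I would use is that by definition, $v \in Y$ exactly when $c_v \neq 0$. So the hypothesis $v \notin Y$ is equivalent to saying that the nearest codeword at $v$ is identically zero.

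I would then invoke Lemma~\ref{lem:R_decomp}, which gives the disjoint union identity $y_0[v,\cdot] \sqcup R_{\mathrm{row}}[v,\cdot] = c_v[v_0,\cdot]$. Under the hypothesis $v \notin Y$, the right-hand side is $0$ (being a restriction of the zero codeword to a row). Since the left-hand side is a disjoint union, both terms must individually vanish, so in particular $R_{\mathrm{row}}[v,\cdot] = 0$, as desired. The column statement follows by exactly the same argument applied to the second identity in Lemma~\ref{lem:R_decomp}, using the column decomposition $y_0[\cdot,v'] \sqcup R_{\mathrm{col}}[\cdot,v'] = c_{v'}[\cdot,v_0]$.

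There is no real obstacle here; the corollary is a direct bookkeeping consequence of the preceding lemma combined with unpacking the definition of $Y$. The content of the corollary is simply the observation that any ``row contribution'' in $R_0$ living on a row shared with $v$ must be part of completing the codeword $c_v$ on that row, and so cannot exist if $c_v$ is already zero. This clean consequence is what makes the decomposition $R_0 = R_{\mathrm{row}} \sqcup R_{\mathrm{col}}$ useful in the subsequent analysis of the local view at $v_0$: corrections on rows and columns are confined to neighbors that genuinely lie in $Y$, to which the robustness machinery applies.
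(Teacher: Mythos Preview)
Your proposal is correct and matches the paper's own proof essentially line for line: the paper also uses $v\notin Y \Rightarrow c_v = 0$, substitutes into the identity $y_0[v,\cdot]\sqcup R_{\mathrm{row}}[v,\cdot] = c_v[v_0,\cdot]$ from Lemma~\ref{lem:R_decomp}, and concludes that both pieces of the disjoint union vanish.
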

\begin{proof}
We work with the row vertex $v$, with the column case being identical. Suppose that $v\notin Y$. Then by definition, the closest codeword to $e_v$ at $v$ is the trivial codeword $c_v = 0$. Evaluating equation~\eqref{eq:y_R_decomp} at the row defined by edge $(v_0,v)$, we have
\begin{align}
    y_0[v,\cdot]\sqcup R_{\mathrm{row}}[v,\cdot] = c_v[v_0,\cdot] = 0\, , 
\end{align}
which implies that $R_{\mathrm{row}}[v,\cdot] = 0$.
\end{proof}

Let us now provide some intuition for the remainder of the proof. The decomposition shown in Lemma~\ref{lem:R_decomp} allow us to consider two separate scenarios: 
\begin{enumerate}
\item First, imagine that $R_0$ has high weight relative to $y_0$. Then Lemma~\ref{lem:R_decomp} suggests that the columns and rows of $R_0$ are close to codewords of $C_A$ and $C_B$. An argument similar to the one used in the proof of case 2 would seem to suggest that there exists some subset of $R_0$ which would decrease the global potential when flipped. 

\item Alternatively, consider the scenario where $R_0$ has low weight relative to $y_0$. In this case, $y_0$ is close to $e_0$, and Lemma~\ref{lem:R_decomp} now implies that the columns and rows of $e_0$ are close to codewords of $C_A$ and $C_B$. The robustness of the dual tensor code $C_1^\perp$ suggests that we can find a codeword $c_0 \in C_A\otimes C_B$, i.e., a $Z$-stabilizer, which has high overlap with $e_0$. But this is in contradiction with the fact that $e$ was assumed to be a reduced error.
\end{enumerate}

Given the discussion above, we will finish the proof as follows: Suppose that no subset of $Q(v_0)$ decreases the global potential when flipped. We will show that this necessarily implies that $R_0$ has sufficiently low weight (as formalized by Lemma~\ref{lem:R_low_weight}) that the argument outlined in scenario 2 can be carried out. Specifically, we will show that there exists some $c_0 \in C_A\otimes C_B$ such that $|e+c_0| < |e|$, contradicting the fact that $e$ is reduced.

To proceed, we will need to analyze the value of the potential on a new configuration of errors, one obtained from $e$ by flipping all the qubits of $e\cap R_0$. The utility of this new error configuration $\tilde{e}$ comes from the fact that the rows of $R_{\mathrm{row}}$ and columns of $R_{\mathrm{col}}$ are exactly equal to the local minimum weight corrections for $\tilde{e}$ (see equation~\ref{eq:rowcol_correction}), giving us better control over the potential.

Let $\tilde{e} = e + e\cap R_0 = e\backslash R_0$. We first show that some key quantities remain unchanged in this new error configuration. Since $\tilde{e}$ is obtained from $e$ by flipping a subset of $R$ without decreasing the global potential, the $R$-flipping Lemma~\ref{lem:R_flip} implies that the new total correction $\tilde{R} \equiv R(\tilde{e})$ will be equal to the old one, i.e., $\tilde{R} = R(\tilde{e}) = R(e)$. This implies that the vector of residual errors $y$ likewise stays invariant, i.e., $\tilde{y} = y(\tilde{e}) = \tilde{e}\backslash \tilde{R} = e\backslash R(e) = y(e)$. The situation after flipping $R_0 \cap e$ is illustrated in Figure~\ref{fig:lemma40_flip} and summarized by Lemma~\ref{lem:local_update}.

\begin{figure}
    \centering
    \input{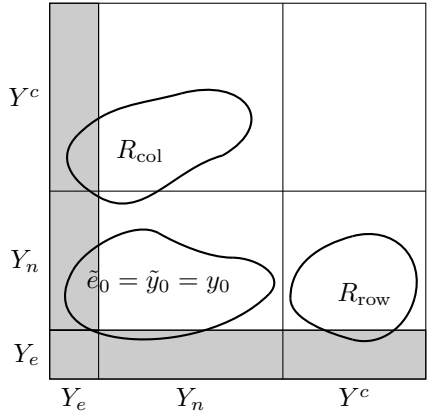}
    \caption{The $v_0$ local view after flipping $R_0\cap e$. The various regions indicate the possible supports of the labeled quantities.}
    \label{fig:lemma40_flip}
\end{figure}

\begin{lemma}\label{lem:local_update}
Suppose that no subset of $Q(v_0)$ decreases the global potential when flipped. Let $\tilde{e} = e\backslash R_0$ denote the configuration of errors obtained after flipping all the elements of $R_0\cap e$. In this new error configuration, we may take the local minimum weight corrections to be as given by the $R$-flipping Lemma~\ref{lem:R_flip}. Specifically, we have $\tilde{R} = R$ and $\tilde{y} \equiv \tilde{e}\backslash \tilde{R} = e\backslash R = y$. Moreover, we have $\tilde{e}_0 = y_0$, and
\begin{align}
    R_{\mathrm{row}}[v,\cdot] = \tilde{R}_v^+[v_0,\cdot]\, ,\quad\text{and}\quad R_{\mathrm{col}}[\cdot, v'] = \tilde{R}^+_{v'}[\cdot,v_0]\, ,\label{eq:rowcol_correction}
\end{align}
for all $[v,v'] \in Q(v_0)$.
\end{lemma}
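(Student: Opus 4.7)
The plan is to apply the $R$-flipping Lemma~\ref{lem:R_flip} with the subset $\hat{R} = e \cap R_0 \subseteq R$. Since $\hat{R} \subseteq R_0 \subseteq Q(v_0)$, the standing assumption that no subset of $Q(v_0)$ decreases the global potential guarantees the hypothesis of Lemma~\ref{lem:R_flip}: flipping $\hat{R}$ does not decrease $U$. The lemma then furnishes a valid configuration $\mc R(\tilde{e})$ obtained from $\mc R(e)$ by reversing the directions of the edges in $\hat{R}$, and guarantees that the nearest codewords $c_v$ are unchanged at every $v \in V_1$. Since reversing edge directions does not affect the underlying undirected edge set, this immediately gives $\tilde{R} = R$.

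The identities $\tilde{y} = y$ and $\tilde{e}_0 = y_0$ then follow by direct set manipulation. Using $R_0 \subseteq R$, we have $\tilde{y} = \tilde{e} \setminus \tilde{R} = (e \setminus R_0) \setminus R = e \setminus R = y$, and restricting to $Q(v_0)$ gives $\tilde{e}_0 = e_0 \setminus R_0$, which equals $y_0$ by the definition of $y$.

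For the remaining equation~\eqref{eq:rowcol_correction}, I would proceed by a case analysis on each $q = [v,v'] \in R_0$, classifying $q$ by (i) which of $R^+_v$ or $R^+_{v'}$ it belongs to (mutually exclusive by metastability), and (ii) whether $q \in e$. By the proof of Lemma~\ref{lem:R_decomp}, the condition $q \in R_{\mathrm{row}}$ (equivalently, $q \in c_v$) holds precisely in the two cases $\{q \in R^+_v,\; q \notin e\}$ and $\{q \in R^+_{v'},\; q \in e\}$. On the other hand, the flip $\hat{R} = e \cap R_0$ reverses the direction of $q$ precisely when $q \in e$, so in both of the above cases $q$ ends up as an outgoing edge from $v$ in $\mc R(\tilde{e})$, i.e., $q \in \tilde{R}^+_v$, and in the two complementary cases $q$ ends up in $\tilde{R}^+_{v'}$. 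This yields $q \in R_{\mathrm{row}} \iff q \in \tilde{R}^+_v$, and analogously $q \in R_{\mathrm{col}} \iff q \in \tilde{R}^+_{v'}$. Restricting the first equivalence to the shared row between $v_0$ and $v$, which is the same set of faces whether written $R_{\mathrm{row}}[v,\cdot]$ or $\tilde{R}^+_v[v_0,\cdot]$, gives the row identity in~\eqref{eq:rowcol_correction}; the column identity is symmetric.

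I do not anticipate a substantive obstacle here: the argument is essentially bookkeeping on top of Lemma~\ref{lem:R_flip} and the disjoint decomposition of Lemma~\ref{lem:R_decomp}. The only real care needed is in aligning the notation for shared rows and columns between the local views at $v_0 \in V_0$ and at its $V_1$ neighbors, and in consistently tracking which edges have their directions reversed through the case analysis.
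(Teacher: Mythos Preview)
Your proposal is correct and follows the paper's approach closely for the first two parts: both invoke Lemma~\ref{lem:R_flip} with $\hat{R} = e \cap R_0$ to obtain $\tilde{R} = R$, and both derive $\tilde{y} = y$ and $\tilde{e}_0 = y_0$ by the same set manipulations.

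For equation~\eqref{eq:rowcol_correction} the two arguments diverge slightly in style. The paper argues algebraically via the invariance of the nearest codeword $c_v$ guaranteed by Lemma~\ref{lem:R_flip}: writing $c_v[v_0,\cdot]$ once as $y_0[v,\cdot] \sqcup R_{\mathrm{row}}[v,\cdot]$ (from Lemma~\ref{lem:R_decomp}) and once as $\tilde{e}_0[v,\cdot] + \tilde{R}^+_v[v_0,\cdot] = y_0[v,\cdot] \sqcup \tilde{R}^+_v[v_0,\cdot]$ (from $c_v(\tilde{e}) = c_v(e)$), it cancels the common $y_0$ term. You instead do a direct four-case analysis on each $q \in R_0$, tracking which of $R^+_v, R^+_{v'}$ it lies in and whether it belongs to $e$, and matching this against the characterization of $R_{\mathrm{row}}$ extracted from the proof of Lemma~\ref{lem:R_decomp}. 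Both routes are valid bookkeeping built on the same two lemmas; the paper's is marginally cleaner in that it avoids reaching back into the \emph{proof} of Lemma~\ref{lem:R_decomp}, while yours makes the edge-reversal picture of Lemma~\ref{lem:R_flip} more explicit.
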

\begin{proof}
The fact that we may take $\tilde{R} = R$ follows directly from the $R$-flipping Lemma~\ref{lem:R_flip}, which ensures that the original and updated local minimum weight correction sets differ only by the orientations of edges. It follows that we also have
\begin{align}
    y = e\backslash R = (e\backslash R_0)\backslash R = \tilde{e}\backslash \tilde{R} = \tilde{y}\, .
\end{align}
Note that since $\tilde{e}\cap R_0 = \emptyset$, it also follows that $\tilde{y}_0 = \tilde{e}_0$.

Now, let $v$ be a neighbor of $v_0$, and suppose without loss of generality that it is a row vertex. By the $R$-flipping Lemma~\ref{lem:R_flip}, the nearest codeword $c_v$ remains unchanged after flipping $R_0\cap e$. In particular, we must have
\begin{align}
     y_0[v,\cdot] \sqcup R_{\mathrm{row}}[v,\cdot] = c_v[v_0,\cdot] = \tilde{e}_0[v,\cdot] + \tilde{R}_v^+[v_0,\cdot] = y_0[v,\cdot] \sqcup \tilde{R}_v^+[v_0,\cdot]\, , 
\end{align}
where the first equality follows from Lemma~\ref{lem:R_decomp}, the second from the invariance of the codeword $c_v$, and the last from the facts that $\tilde{e}_0 = \tilde{y}_0 = y_0$ and $\tilde{e}_0[v,\cdot]\cap \tilde{R}^+_v[v_0,\cdot] \subseteq \tilde{e}_0[v,\cdot]\cap \tilde{R}_0[v,\cdot] = \emptyset$. It follows that we must have $R_{\mathrm{row}}[v_,\cdot] = \tilde{R}_v^+[v_0,\cdot]$.
\end{proof}

Since the rows (resp. columns) of $R_{\mathrm{row}}$ (resp. $R_{\mathrm{col}}$) are equal to the local minimum weight corrections (for $\tilde{e}$) on neighboring vertices, we expect that $R_0$ cannot be too large. Otherwise, $R_0$ would have enough overlap with the neighboring local minimum weight corrections that subsets of it can start lowering the potential. Therefore the fact that no subset of $R_0$ can lower the potential implicitly places a bound on its size. This is formalized by Lemma~\ref{lem:R_low_weight} below.

\begin{lemma}\label{lem:R_low_weight}
Suppose that no subset of $Q(v_0)$ decreases the global potential $U$ when flipped. Then we have 
\begin{align}
    |R_0| \le \frac{3\Delta^{3/2+\varepsilon}}{\delta}
\end{align}
for sufficiently large $\Delta$.
\end{lemma}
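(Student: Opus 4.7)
The plan is to apply the hypothesis to the specific flip $f = y_0 \subseteq Q(v_0)$, which erases the local view of $\tilde{e}$ at $v_0$ (since $\tilde{e}_0 = y_0$ by Lemma~\ref{lem:local_update}). The hypothesis that no subset of $Q(v_0)$ decreases $U(e)$ carries over to $\tilde{e}$, since any such potential-lowering flip $f'$ for $\tilde{e}$ could be composed with $R_0 \cap e$ to yield a flip for $e$ that is still in $Q(v_0)$. Hence $U(\tilde{e} + y_0) \geq U(\tilde{e})$, and the goal is to convert this non-negativity into the claimed upper bound on $|R_0|$.

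To bound the change in $U_v$ at each $v \in V_1 \cap N(v_0)$ caused by flipping $y_0$, I would proceed by cases. At a normal row neighbor $v \in Y_n \cap N_r(v_0)$, the flip cancels $\tilde{e}_v$ on the $v_0$-row. Instead of the naive correction $\tilde{R}_v^+ + y_0[v,\cdot]$, I would use the alternate correction $\tilde{R}_v^+ + y_0[v,\cdot] + \kappa$, where $\kappa \in C_1^\perp$ is the matrix supported on the $v_0$-row equal to some $c_B^* \in C_B$ (and hence in $\mathbb{F}_2 \otimes C_B \subseteq C_1^\perp$). Using the identities $\tilde{R}_v^+[v_0,\cdot] = R_{\mathrm{row}}[v,\cdot]$ from Lemma~\ref{lem:local_update} and $y_0[v,\cdot] \sqcup R_{\mathrm{row}}[v,\cdot] = c_v[v_0,\cdot]$ from Lemma~\ref{lem:R_decomp}, the on-row contribution to this correction simplifies to $|c_v[v_0,\cdot] + c_B^*|$ while off-row it has weight $|\tilde{R}_v^+| - |R_{\mathrm{row}}[v,\cdot]|$. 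Minimizing over $c_B^*$ and invoking Lemma~\ref{lem:c_close_col} bounds the change at $v$ by $\Delta^{1/2+\varepsilon}/\delta - |R_{\mathrm{row}}[v,\cdot]|$. A symmetric analysis at normal column neighbors $v' \in Y_n \cap N_c(v_0)$, using an analog $\kappa' \in C_A \otimes \mathbb{F}_2 \subseteq C_1^\perp$ and the fact that $c_{v'}[\cdot, v_0]$ is close to $C_A$, yields the bound $\Delta^{1/2+\varepsilon}/\delta - |R_{\mathrm{col}}[\cdot, v']|$. At exceptional neighbors in $Y_e$ I would use the trivial bound that the change is at most the weight of the flip on the strip, namely $|y_0[\mathrm{strip}]| \leq \Delta$; at neighbors outside $Y$ the flip has no effect since $y \subseteq Y$ by Lemma~\ref{lem:y_subgraph}.

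Summing these bounds over the at most $\Delta$ row and $\Delta$ column neighbors of $v_0$, and invoking Corollary~\ref{cor:special_vertex} to bound the total number of exceptional neighbors by $\beta\Delta^{1/2-\varepsilon}$, the total change is at most $2\Delta^{3/2+\varepsilon}/\delta - |R_0|_{\mathrm{nor}} + \beta\Delta^{3/2-\varepsilon}$, where $|R_0|_{\mathrm{nor}}$ denotes the weight of $R_0$ restricted to rows and columns indexed by normal vertices. Non-negativity of this upper bound gives $|R_0|_{\mathrm{nor}} \leq 2\Delta^{3/2+\varepsilon}/\delta + \beta\Delta^{3/2-\varepsilon}$; combined with the crude bound $|R_0| - |R_0|_{\mathrm{nor}} \leq \beta\Delta^{3/2-\varepsilon}$ (from at most $\beta\Delta^{1/2-\varepsilon}$ exceptional rows and columns of length at most $\Delta$), this gives $|R_0| \leq 2\Delta^{3/2+\varepsilon}/\delta + 2\beta\Delta^{3/2-\varepsilon}$, which is at most $3\Delta^{3/2+\varepsilon}/\delta$ for $\Delta$ large enough that $2\beta\delta \leq \Delta^{2\varepsilon}$. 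The main difficulty I anticipate is the careful weight accounting at normal vertices: tracking the two disjoint decompositions on the $v_0$-strip and verifying that the alternate correction $\kappa$ really lies in $C_1^\perp$.
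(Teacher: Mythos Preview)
Your proposal is correct and follows essentially the same argument as the paper's proof: both flip $y_0$ on the modified error $\tilde{e}$, bound the per-vertex potential change at normal neighbors via Lemma~\ref{lem:c_close_col} and the identity $y_0[v,\cdot]\sqcup R_{\mathrm{row}}[v,\cdot]=c_v[v_0,\cdot]$, handle exceptional neighbors with the trivial bound $\Delta$ via Corollary~\ref{cor:special_vertex}, and use Corollary~\ref{cor:R_notin_Y} to pass from the normal-vertex sum to $|R_0|$. Your accounting is in fact slightly tighter (yielding a $2\beta\Delta^{3/2-\varepsilon}$ correction term where the paper gets $3\beta\Delta^{3/2-\varepsilon}$), but both are absorbed into the final bound for large $\Delta$; your explicit description of the codeword $\kappa$ is exactly what underlies the paper's inequality~\eqref{eq:potential_reduce2}.
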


\begin{proof}
Consider the error configuration $\tilde{e} = e\backslash R_0$. By assumption we have $U(\tilde{e}) = U(e)$. Using Lemma~\ref{lem:local_update}, we have $\tilde{e}_0 = \tilde{y}_0 = y_0$ and $\tilde{R}_0 = R_0$.

Let $v$ be, without loss of generality, a row vertex. Since we have $R_{\mathrm{row}}[v,\cdot] = \tilde{R}_v^+[v_0,\cdot]$,
it follows that flipping $R_{\mathrm{row}}[v,\cdot]$ decreases the local potential $U_v(\tilde{e})$ by $|R_{\mathrm{row}}[v,\cdot]|$, i.e.,
\begin{align}
    U_v(\tilde{e} + R_{\mathrm{row}}[v,\cdot]) = U_v(\tilde{e}) - |R_{\mathrm{row}}[v,\cdot]|\, .\label{eq:potential_reduce}
\end{align}
Now, suppose that $v \in Y_n$. Let $c_B$ be the closest codeword of $C_B$ to $c_v[v_0,\cdot]$. Then
\begin{align}
    U_v(\tilde{e}+y_0[v,\cdot]) &= U_v(\tilde{e}+R_{\mathrm{row}}[v,\cdot] + c_v[v_0,\cdot])\\ 
    &\le U_v(\tilde{e}+R_{\mathrm{row}}[v,\cdot] + c_B) + \frac{\Delta^{1/2+\varepsilon}}{\delta}\\
    &=U_v(\tilde{e}+R_{\mathrm{row}}[v,\cdot]) + \frac{\Delta^{1/2+\varepsilon}}{\delta}\\
    &=U_v(\tilde{e})-|R_{\mathrm{row}}[v,\cdot]| + \frac{\Delta^{1/2+\varepsilon}}{\delta}\, ,\label{eq:potential_reduce2}
\end{align}
where the first equality follows from the fact that
\begin{align}
y_0[v,\cdot] \sqcup R_{\mathrm{row}}[v,\cdot] =
y_0[v,\cdot] + R_{\mathrm{row}}[v,\cdot] = c_v[v_0,\cdot].
\end{align}
The second line follows from Lemma~\ref{lem:c_close_col}, and the third line follows from the fact that $U_v(e+c) = U_v(e)$ for any $c \in C_1^\perp$. The last line is just equation~\eqref{eq:potential_reduce}. Note that an analogous version of inequality~\eqref{eq:potential_reduce2} also holds for column vertices.

Consider now the global potential $U(\tilde{e} + y_0)$. Note that it follows from Lemma~\ref{lem:y_subgraph} that $y_0$ will have empty intersection with the local view of any $v$ not in $Y$, so that only the local potentials associated with vertices of $Y$ can be affected by flipping $y_0$. We will bound the potential by explicitly separating out the contributions of the exceptional vertices in $Y_e$ over which we have little control. Let us write $\beta \equiv 772/\alpha$ for the constant appearing in Corollary~\ref{cor:special_vertex}. Then we can bound the change in the potential by
\begin{align}
    0 &\le U(\tilde{e}+y_0) - U(\tilde{e})\\ 
    &= \sum_{v \in N(v_0)\cap Y}(U_v(\tilde{e}+y_0) - U_v(\tilde{e}))\\
    &\le \sum_{v \in N(v_0)\cap Y_n}(U_v(\tilde{e}+y_0) - U_v(\tilde{e})) + \beta\Delta^{3/2-\varepsilon}\, ,
\end{align}
where the first inequality follows from the assumption that no subset of $Q(v_0)$ decreases the global potential when flipped, the second line from the fact that only the local views associated with vertices of $N(v_0)\cap Y$ are affected by flipping $y_0$, and the last line removes the contributions resulting from the vertices in $Y_e$. The $\beta\Delta^{3/2-\varepsilon}$ term in the last line comes from the fact that there are at most $\beta\Delta^{1/2-\varepsilon}$ vertices of $N(v_0)\cap Y_e$ as a result of Corollary~\ref{cor:special_vertex}, each of which can increase the weight of the potential by at most $\Delta$.

Splitting the sum above into row and column parts and applying inequality~\eqref{eq:potential_reduce2}, we get
\begin{align}
    &\sum_{v \in N(v_0)\cap Y_n}(U_v(\tilde{e}+y_0) - U_v(\tilde{e})) \\
    = &\sum_{v \in N_r(v_0)\cap Y_n}(U_v(\tilde{e}+y_0[v,\cdot]) - U_v(\tilde{e})) + \sum_{v' \in N_c(v_0)\cap Y_n}(U_{v'}(\tilde{e}+y_0[\cdot,v']) - U_{v'}(\tilde{e}))\\
    \le &\sum_{v \in N_r(v_0)\cap Y_n}\left(-|R_\mathrm{row}[v,\cdot]| + \frac{\Delta^{1/2+\varepsilon}}{\delta}\right) + \sum_{v' \in N_c(v_0)\cap Y_n}\left(-|R_{\mathrm{col}}[\cdot,v']| + \frac{\Delta^{1/2+\varepsilon}}{\delta}\right)\\
    \le &-\sum_{v \in N_r(v_0)\cap Y_n}|R_{\mathrm{row}}[v,\cdot]| - \sum_{v' \in N_c(v_0)\cap Y_n}|R_{\mathrm{col}}[\cdot,v']|  + \frac{2\Delta^{3/2+\varepsilon}}{\delta}\, .
\end{align}

By Corollary~\ref{cor:R_notin_Y}, it follows that the rows of $R_{\mathrm{row}}$ (and columns of $R_{\mathrm{col}}$, respectively) are zero if the indexing vertex is not in $Y$. It follows that we have
\begin{align}
    \sum_{v \in N_r(v_0)\cap Y_n}|R_{\mathrm{row}}[v,\cdot]| = \sum_{v \in N_r(v_0)\backslash Y_e}|R_{\mathrm{row}}[v,\cdot]| \ge |R_{\mathrm{row}}| - \beta\Delta^{3/2-\varepsilon}\, ,
\end{align}
and likewise
\begin{align}
    \sum_{v' \in N_c(v_0)\cap Y_n}|R_{\mathrm{col}}[\cdot,v']| = \sum_{v' \in N_c(v_0)\backslash Y_e}|R_{\mathrm{col}}[\cdot,v']| \ge |R_{\mathrm{col}}| - \beta\Delta^{3/2-\varepsilon}\, ,
\end{align}
where the $\beta\Delta^{3/2-\varepsilon}$ correction term again comes from the vertices in $Y_e$ over which we have no control. Altogether, we have
\begin{align}
    0 \le -|R_{\mathrm{row}}| - |R_{\mathrm{col}}| + \frac{2\Delta^{3/2+\varepsilon}}{\delta} + 3\beta\Delta^{3/2-\varepsilon}\, .
\end{align}
Taking $\Delta$ sufficiently large so that $\Delta^{2\varepsilon} \ge 3\delta\beta$, we finally get
\begin{align}
    |R_0| \le \frac{3\Delta^{3/2+\varepsilon}}{\delta}\, .
\end{align}
\end{proof}

Lemma~\ref{lem:R_low_weight} shows that $R_0$ is small. This now allows us to follow the remaining steps outlined in scenario 2 above to complete the proof of Theorem~\ref{thm:correction_exists}.

\begin{corollary}\label{cor:y_distance}
Suppose that no subset of $Q(v_0)$ decreases the global potential $U$ when flipped. Then we have
\begin{align}
    d(y_0,C_A\otimes \mathbb{F}_2^\Delta) + d(y_0, \mathbb{F}_2^\Delta\otimes C_B) \le \frac{10\Delta^{3/2+\varepsilon}}{\delta}
\end{align}
for sufficiently large $\Delta$.
\end{corollary}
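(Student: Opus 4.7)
The plan is to combine the column/row decomposition from Lemma~\ref{lem:R_decomp} with the closeness estimate from Lemma~\ref{lem:c_close_col}, treating the two distances $d(y_0, C_A \otimes \mathbb{F}_2^\Delta)$ and $d(y_0, \mathbb{F}_2^\Delta \otimes C_B)$ by a symmetric argument. For the distance to $C_A \otimes \mathbb{F}_2^\Delta$, I would construct a column-by-column approximant $M \in C_A \otimes \mathbb{F}_2^\Delta$ to $y_0$, using the per-column identity
\begin{align}
y_0[\cdot,v'] \sqcup R_{\mathrm{col}}[\cdot,v'] = c_{v'}[\cdot,v_0]
\end{align}
from Lemma~\ref{lem:R_decomp} to know that $y_0[\cdot,v']$ differs from the column $c_{v'}[\cdot,v_0]$ of the codeword $c_{v'} \in C_1^\perp$ by exactly $|R_{\mathrm{col}}[\cdot,v']|$ bits.

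The construction of $M$ splits the column neighbors $v' \in N_c(v_0)$ into three classes. If $v' \notin Y$, then $c_{v'} = 0$ and both $y_0[\cdot,v']$ and $R_{\mathrm{col}}[\cdot,v']$ vanish (the former by the edgewise version of Lemma~\ref{lem:y_subgraph}, the latter by Corollary~\ref{cor:R_notin_Y}), and I place the zero codeword in the $v'$-column. If $v' \in N_c(v_0) \cap Y_n$, Lemma~\ref{lem:c_close_col} supplies a codeword $c_A \in C_A$ within Hamming distance $\Delta^{1/2+\varepsilon}/\delta$ of $c_{v'}[\cdot,v_0]$, which I place in the $v'$-column; the per-column contribution to $d(y_0, M)$ is then at most $|R_{\mathrm{col}}[\cdot,v']| + \Delta^{1/2+\varepsilon}/\delta$. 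Finally, for $v' \in N_c(v_0) \cap Y_e$ I use the trivial per-column bound $\Delta$ and rely on Corollary~\ref{cor:special_vertex} to bound the number of such columns by $\beta \Delta^{1/2-\varepsilon}$. Summing over the $\Delta$ columns gives
\begin{align}
d(y_0, C_A \otimes \mathbb{F}_2^\Delta) \le |R_{\mathrm{col}}| + \frac{\Delta^{3/2+\varepsilon}}{\delta} + \beta \Delta^{3/2-\varepsilon}\, ,
\end{align}
and the completely analogous row-based argument (using the closeness of rows of $c_v$ to codewords of $C_B$) yields the same bound with $|R_{\mathrm{row}}|$ in place of $|R_{\mathrm{col}}|$.

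Adding the two estimates and invoking $|R_{\mathrm{row}}| + |R_{\mathrm{col}}| = |R_0| \le 3\Delta^{3/2+\varepsilon}/\delta$ from Lemma~\ref{lem:R_low_weight} yields
\begin{align}
d(y_0, C_A \otimes \mathbb{F}_2^\Delta) + d(y_0, \mathbb{F}_2^\Delta \otimes C_B) \le \frac{5\Delta^{3/2+\varepsilon}}{\delta} + 2\beta \Delta^{3/2-\varepsilon}\, ,
\end{align}
which is at most $10\Delta^{3/2+\varepsilon}/\delta$ once $\Delta$ is large enough that $2\beta\delta \le 5\Delta^{2\varepsilon}$ (this is where the assumption $\varepsilon > 0$ is essential). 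The argument is essentially bookkeeping once the preceding lemmas are in hand; I do not expect a substantive obstacle. The one point requiring care is making sure the three classes of column neighbors are handled separately so that the trivial $\Delta$-per-column cost is only paid for the $\beta\Delta^{1/2-\varepsilon}$ exceptional neighbors, which is exactly what Corollary~\ref{cor:special_vertex} was engineered to enable.
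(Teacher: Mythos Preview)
Your proposal is correct and follows essentially the same argument as the paper: split the neighbors into the three classes $\notin Y$, $\in Y_n$, $\in Y_e$, use Lemma~\ref{lem:R_decomp} to write $y_0[\cdot,v'] + R_{\mathrm{col}}[\cdot,v'] = c_{v'}[\cdot,v_0]$, invoke Lemma~\ref{lem:c_close_col} on normal neighbors, use Corollary~\ref{cor:special_vertex} to control the exceptional ones, and then apply Lemma~\ref{lem:R_low_weight} to bound the $R_0$ contribution. The only cosmetic difference is that the paper bounds each of the two distances separately by $5\Delta^{3/2+\varepsilon}/\delta$ using $|R_{\mathrm{row}}|, |R_{\mathrm{col}}| \le |R_0|$, whereas you add first and use the slightly sharper $|R_{\mathrm{row}}| + |R_{\mathrm{col}}| = |R_0|$; both arrive at the same final bound.
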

\begin{proof}
Consider the distance of $y_0$ to the row codespace $\mathbb{F}_2^A\otimes C_B$ (with the column case being identical). From equation~\eqref{eq:y_R_decomp}, we have
\begin{align}
    y_0[v,\cdot] + R_{\mathrm{row}}[v,\cdot] = y_0[v,\cdot] \sqcup R_{\mathrm{row}}[v,\cdot] = c_v[v_0,\cdot]\, .
\end{align}
If $v \notin Y$ then Corollary~\ref{cor:R_notin_Y} implies that each of the terms above is zero. If $v\in Y_n$, then Lemma~\ref{lem:c_close_col} implies that
\begin{align}
    d(y_0[v,\cdot] + R_{\mathrm{row}}[v,\cdot], C_B) = d(c_v[v_0,\cdot],C_B) \le \frac{\Delta^{1/2+\varepsilon}}{\delta}\, .
\end{align}
Summing over all rows, and accounting for the exceptional vertices $v\in Y_e$, we get
\begin{align}
d(y_0 + R_{\mathrm{row}}, \mathbb{F}_2^A\otimes C_B) \le \frac{\Delta^{3/2+\varepsilon}}{\delta} + \beta\Delta^{3/2-\varepsilon}\, ,    
\end{align}
where the $\Delta^{3/2+\varepsilon}$ term comes from the non-exceptional vertices and the $\Delta^{3/2-\varepsilon}$ term from the exceptional vertices. Since 
\begin{align}
|R_{\mathrm{row}}| \le |R_0| \le \frac{3\Delta^{3/2+\varepsilon}}{\delta}
\end{align}
by Lemma~\ref{lem:R_low_weight}, it follows that we have
\begin{align}
d(y_0, \mathbb{F}_2^A\otimes C_B) \le \frac{4\Delta^{3/2+\varepsilon}}{\delta} + \beta\Delta^{3/2-\varepsilon} \le \frac{5\Delta^{3/2+\varepsilon}}{\delta}\, ,    
\end{align}
where the last inequality follows from the fact that we took $\Delta$ large enough so that $\Delta^{2\varepsilon}\ge 3\beta\delta$ in Lemma~\ref{lem:R_low_weight}.
\end{proof}

\begin{corollary}
Suppose no subset of $Q(v_0)$ decreases the global potential $U$ when flipped. Then the local view $y_0$ has weight 
\begin{align}
    |y_0| \ge \frac{1}{4}\alpha\delta\Delta^2
\end{align}
for sufficiently large $\Delta$.
\end{corollary}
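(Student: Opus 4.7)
The plan is to combine three pieces that have just been established: (i) by Corollary~\ref{cor:special_vertex}, the vertex $v_0$ is incident to at least $\alpha\Delta$ dense edges; (ii) by Lemma~\ref{lem:R_decomp}, along each edge $(v_0,v)$ the row (or column) of $y_0$ together with the corresponding row (or column) of $R_{\mathrm{row}}$ (or $R_{\mathrm{col}}$) reconstructs exactly $c_v[v_0,\cdot]$ (or $c_{v'}[\cdot,v_0]$); (iii) by Lemma~\ref{lem:R_low_weight}, $|R_0| \le 3\Delta^{3/2+\varepsilon}/\delta$. Intuitively, each dense edge forces $c_v$ to have weight at least $\delta\Delta - \Delta^{1/2+\varepsilon}/\delta$ on its shared row/column with $v_0$, and since $R_0$ is too small to absorb this, almost all of that weight must live in $y_0$.

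Concretely, I would split the dense edges at $v_0$ into $A$-edges (meeting row vertices) and $B$-edges (meeting column vertices). At least one of these types accounts for $\alpha\Delta/2$ of the dense edges; WLOG say it is the row type. For each such dense row-type neighbor $v$, Lemma~\ref{lem:R_decomp} gives
\begin{equation*}
|y_0[v,\cdot]| = |c_v[v_0,\cdot]| - |R_{\mathrm{row}}[v,\cdot]| \ge \delta\Delta - \frac{\Delta^{1/2+\varepsilon}}{\delta} - |R_{\mathrm{row}}[v,\cdot]|,
\end{equation*}
using the definition of a dense edge for the lower bound on $|c_v[v_0,\cdot]|$.

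Since the rows of $y_0$ indexed by distinct row vertices are disjoint as subsets of $Q(v_0)$, summing over the at least $\alpha\Delta/2$ row-type dense neighbors and invoking Lemma~\ref{lem:R_low_weight} yields
\begin{equation*}
|y_0| \;\ge\; \frac{\alpha\Delta}{2}\left(\delta\Delta - \frac{\Delta^{1/2+\varepsilon}}{\delta}\right) - |R_{\mathrm{row}}| \;\ge\; \frac{\alpha\delta}{2}\Delta^2 - \frac{\alpha}{2\delta}\Delta^{3/2+\varepsilon} - \frac{3}{\delta}\Delta^{3/2+\varepsilon}.
\end{equation*}
Because $\varepsilon < 1/2$, both subtracted terms are $o(\Delta^2)$, so for $\Delta$ sufficiently large the right-hand side exceeds $\tfrac{1}{4}\alpha\delta\Delta^2$, giving the claim.

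There is no real obstacle in this argument — all the work was done in Corollary~\ref{cor:special_vertex} (controlling how many dense edges $v_0$ sees) and Lemma~\ref{lem:R_low_weight} (controlling how much of the dense-edge weight can be ``hidden'' in $R_0$). The only subtlety worth pausing on is the disjointness used when turning a sum of row weights into a bound on $|y_0|$: rows at $v_0$ indexed by different row vertices are disjoint, but if one naively summed over row \emph{and} column neighbors one could double count the same square. The WLOG reduction to a single edge type avoids this completely, at the cost of a harmless factor of $1/2$ that is absorbed into the final constant $\tfrac{1}{4}\alpha\delta$.
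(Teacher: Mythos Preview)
Your proof is correct and follows essentially the same approach as the paper: split the dense edges at $v_0$ by type, use equation~\eqref{eq:y_R_decomp} to express $|y_0|+|R_{\mathrm{row}}|$ as a sum of $|c_v[v_0,\cdot]|$ over row neighbors, lower bound the contribution from the dense edges, and subtract off $|R_{\mathrm{row}}|$ using Lemma~\ref{lem:R_low_weight}. The only cosmetic difference is that the paper first restricts to dense edges going to \emph{normal} vertices (subtracting the at most $\beta\Delta^{1/2-\varepsilon}$ exceptional neighbors from Corollary~\ref{cor:special_vertex}), whereas you simply use all $\alpha\Delta$ dense edges directly; your version is in fact slightly cleaner, since the lower bound $|c_v[v_0,\cdot]|\ge \delta\Delta-\Delta^{1/2+\varepsilon}/\delta$ is built into the definition of a dense edge and does not require $v\in Y_n$.
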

\begin{proof}
From Corollary~\ref{cor:special_vertex} it follows that $v_0$ is adjacent to either $\ge (\alpha\Delta-\beta\Delta^{1/2-\varepsilon})/2$ normal row vertices $v\in N_r(v_0)\cap Y_n$ or $\ge (\alpha\Delta-\beta\Delta^{1/2-\varepsilon})/2$ normal column vertices $v'\in N_c(v_0)\cap Y_n$ through dense edges. Suppose without loss of generality that it is the former. Then by definition of dense edges, it follows that $|c_v[v_0,\cdot]|\ge \delta\Delta - \Delta^{1/2+\varepsilon}/\delta$ for each of these vertices. 

Summing the first equation in~\eqref{eq:y_R_decomp} over all row vertices $v$, we get 
\begin{align}
    |y_0| + |R_{\mathrm{row}}| &= \sum_{v \in N_r(v_0)}|y_0[v,\cdot]\sqcup R_{\mathrm{row}}[v,\cdot]|\\
    &= \sum_{v \in N_r(v_0)}|c_v[v_0,\cdot]|\\
    &\ge (\alpha\Delta - \beta\Delta^{1/2-\varepsilon})(\delta\Delta - \Delta^{1/2+\varepsilon}/\delta)/2\, ,
\end{align}
where the last inequality follows from the preceding discussion. Choosing $\Delta$ sufficiently large so that
\begin{align}
    (\alpha\Delta - \beta\Delta^{1/2-\varepsilon})(\delta\Delta - \Delta^{1/2+\varepsilon}{\delta}) \ge \frac{2}{3}\alpha\delta\Delta^2
\end{align}
and applying Lemma~\ref{lem:R_low_weight}, we get
\begin{align}
    |y_0| \ge \frac{1}{3}\alpha\delta\Delta^2 - \frac{3\Delta^{3/2+\varepsilon}}{\delta}\, .
\end{align}
This implies that
\begin{align}
    |y_0| \ge \frac{1}{4}\alpha\delta\Delta^2\, ,
\end{align}
again for sufficiently large $\Delta$.
\end{proof}

Finally, we are now in a position to complete the proof of Theorem~\ref{thm:correction_exists}. 

\begin{proof}[Theorem~\ref{thm:correction_exists}]
Since the code $C_1^\perp$ is chosen to be $\Delta^{3/2+\varepsilon'}$ robust for $\varepsilon' > \varepsilon$, it follows from Corollary~\ref{cor:y_distance} and Proposition~\ref{prop:robustness} that there exists some $c_0 \in C_A\otimes C_B$ such that $|y_0-c_0| \le 15\Delta^{3/2+\varepsilon}/\delta$, which holds so long as $\Delta$ is chosen large enough so that $\delta\Delta^{\varepsilon'} \ge 10\Delta^{\varepsilon}$. Applying Lemma~\ref{lem:R_low_weight}, this implies that
\begin{align}
    |e_0+c_0| = |y_0 + e_0\cap R_0 + c_0| \le |y_0 + c_0| + |R_0\cap e_0| \le \frac{18\Delta^{3/2+\varepsilon}}{\delta}\, .
\end{align}
Since we have $|e_0| \ge |y_0|\ge (\alpha\delta/4)\Delta^2$, it follows that we have $|e_0 + c_0| < |e_0|$ whenever
\begin{align}
   \frac{72}{\alpha\delta^2} < \Delta^{1/2-\varepsilon}\, .
\end{align}
This contradicts the fact that $e$ was chosen to be a reduced error.
\end{proof}

\section{Discussion and Conclusion}\label{sec:conclusion}

In this paper, we have shown the existence of a provably correct decoder for the recent quantum Tanner codes construction of asymptotically good qLDPC codes. Our decoder has runtime linear in the code blocklength, and provably corrects all errors with weight up to a constant fraction of the distance (and hence the blocklength). A key idea behind the decoder is the introduction of a global potential function which measures the stability of the error against locally defined corrections. Our decoder proceeds operationally in a manner similar to the small-set-flip decoder for quantum expander codes~\cite{LTZ15}, checking candidate subsets defined within the local views of the code to see if the global potential function can be reduced at each step. We prove that such a reduction is always possible for sufficiently low weight errors, which we use to show that the decoder successfully corrects all errors of weight $|e| \lesssim \delta n/\Delta^{7/2+\varepsilon}$. The existence of our decoder implies a notion of soundness for the quantum Tanner codes construction (see Corollary~\ref{cor:soundness}). It also implies an accuracy threshold against stochastic noise (see Corollary~\ref{cor:threshold}).

An important part of our proof for the correctness of the decoder involves showing the existence of dual tensor codes of larger robustness ($\Delta^{3/2+\varepsilon}$) than was established in~\cite{LZ22}. This result also gives a constant factor improvement in the distance of the code. In addition, it leads to a simplification in the construction of quantum Tanner codes in that the dual tensor codes are no longer required to be resistant to puncturing.

A number of open problems remain at this point. One major problem is the time-complexity of the decoder. While the runtime of the decoder is linear in the blocklength, there are constant prefactors on the order of $2^{\Delta^2}$ arising from the need to check all subsets of the $\Delta^2$-sized local views. This renders the decoder impractical in reality. Part of the problem stems from the inherently large check weights ($\Delta^2$) of the quantum Tanner codes construction. A natural follow-up problem therefore is to look for ways to reduce the absolute runtime of the decoder, for example by reducing the check weights of the underlying code construction. 

Another problem is related to the decoding of the asymptotically good qLDPC codes by Panteleev and Kalachev~\cite{PK22}. While the quantum Tanner codes construction is in many ways similar to the codes by Panteleev and Kalachev, we do not currently know how to efficiently decode the Panteleev-Kalachev code. It would be interesting to see if our current decoder can be modified to work for the Panteleev-Kalachev code. A related -- and more generic -- problem is the existence of efficient decoders for good qLDPC codes constructed by the balanced product construction~\cite{BE21} in general, especially with the presence of non-trivial local codes.

Our current decoder requires the checking of local views belonging to vertices of both $V_0$ and $V_1$. This is in contrast to the small-set-flip decoder, which only requires checking the supports of generators of a single type. It may be possible that a tighter analysis (for example, using a stronger version of the low-overlap property, or more robust local codes) may allow us to eliminate the need to check both vertex types. A better understanding of the candidate flip-sets in general may be useful, especially towards the problem of lowering the runtime mentioned earlier.

\vspace{1 cm}
\textbf{Acknowledgements.} The authors would like to thank Zhiyang He for his valuable suggestions and his careful reading of an earlier draft of this work. We also thank Anand Natarajan, John Preskill, and Michael Beverland for helpful comments and discussions. S.G. acknowledges funding from the U.S. Department of Energy (DE-AC02-07CH11359), and the National Science Foundation (PHY-1733907). C.A.P. acknowledges funding from the Air Force Office of Scientific Research (AFOSR), FA9550-19-1-0360. The Institute for Quantum Information and Matter is an NSF Physics Frontiers Center. E.T. acknowledges funding received from DARPA 134371-5113608, DOD grant award KK2014, and the Center for Theoretical Physics at the Massachusetts Institute of Technology.

\bibliographystyle{unsrtnat}
\bibliography{Decoderbib}

\appendix

\section{Existence of dual tensor codes with sufficiently high robustness}
\label{sec:robustness}

In this appendix, we show the existence of dual tensor codes with sufficiently high robustness, which we require as a component of the quantum Tanner codes construction in order to prove correctness of our decoder. We will use the following notation throughout this section. Given codes $C_A$ and $C_B$ defined by parity check matrices $H_A$ and $H_B$, we denote their dual tensor code $(C_A^\perp \otimes C_B^\perp)^\perp$ by $C_{AB}$ for short, with the dependence on $C_A,C_B$ being implicit. 

We first recall the definition of a $w$-robust dual tensor code as defined in~\cite{LZ22}.

\begingroup
\def\thetheorem{\ref{def:robustness}}
\begin{definition}[$w$-Robustness]
Let $C_A,C_B\subseteq \bb F_2^n$ be classical codes with distances $d_A$ and $d_B$ respectively. We say that the dual tensor code $C_{AB}=C_A\otimes \bb F_2^n + \bb F_2^n\otimes C_B$ is \emph{$w$-robust} if every codeword $X\in C_{AB}$ with $|X|\le w$ is supported on the union of at most $|X|/d_A$ non-zero columns and $|X|/d_B$ non-zero rows. That is, there exist rows $A'$ with $|A'|\ge n-|X|/d_B$ and columns $B'$ with $|B'|\ge n-|X|/d_A$ such that $\res{X}{A'\times B'}=0$.
\end{definition}
\addtocounter{theorem}{-1}
\endgroup

\begin{definition}[Sufficiently Robust]
We say that $C_{AB}$ is \emph{sufficiently robust} if there exists some $\varepsilon > 0$ such that $C_{AB}$ is $\Delta^{3/2+\varepsilon}$-robust.
\end{definition}

When a codeword of a dual tensor code is supported on few columns and rows, it has a decomposition into column and row codewords respecting this support.

\begin{lemma}\label{lem:rowcolsum}
Let $C_A$ and $C_B$ be classical codes of distance at least $d$ and $C=C_A\otimes \mathbb F_2^B + \mathbb F_2^A\otimes C_B$ be the dual tensor code. Suppose $X\in C$ is supported on the union of $\alpha$ non-zero rows and $\beta$ non-zero columns, with $\alpha,\beta < d$. Then $X$ can be written as $X=\mathbf r + \mathbf c$ where $\mathbf r\in \mathbb F_2^A\otimes C_B$ is supported on at most $\alpha$ non-zero rows and $\mathbf c\in C_A\otimes \mathbb F_2^B$ is supported on at most $\beta$ non-zero columns.
\end{lemma}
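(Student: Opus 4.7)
The plan is to construct $\mathbf c$ first and then obtain $\mathbf r = X - \mathbf c$ as a consequence. Let $A' \subseteq A$ of size $\alpha$ and $B' \subseteq B$ of size $\beta$ denote the supporting rows and columns of $X$, so that $\mathrm{supp}(X) \subseteq (A' \times B) \cup (A \times B')$. The strategy is to build $\mathbf c \in C_A \otimes \mathbb F_2^B$ whose columns outside $B'$ are zero and whose restriction to rows in $\bar{A'}$ agrees with $X$. This automatically forces $\mathbf r[a,\cdot] = 0$ for $a \notin A'$, and what remains is to verify that the surviving rows of $\mathbf r$ are codewords of $C_B$.

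The main technical step --- and the one I expect to be the real obstacle --- is showing that for every column index $b$, the restricted column $X[\bar{A'},b]$ lies in the punctured code $C_A^{\bar{A'}}$, so that it has a well-defined extension to a codeword of $C_A$. I would do this by duality. Since $X \in C = (C_A^\perp \otimes C_B^\perp)^\perp$, for every $u \in C_A^\perp$ the vector $u^\top X$ lies in $C_B$. Pick any $u \in C_A^\perp$ that vanishes on $A'$. Then for $b \notin B'$, $X[a,b]$ can be nonzero only when $a \in A'$, where $u[a]=0$; so $(u^\top X)[b] = 0$. Thus $u^\top X$ is supported on $B'$ and has weight at most $\beta < d \le d_B$, forcing $u^\top X = 0$. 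Letting $u$ range over all such elements gives the duality condition $X[\bar{A'},b] \perp (C_A^\perp)_{\bar{A'}}$ for every $b$, which is equivalent to $X[\bar{A'},b] \in C_A^{\bar{A'}}$.

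Once this is in hand, the bound $|A'| = \alpha < d \le d_A$ implies that the puncturing map $C_A \to C_A^{\bar{A'}}$ is injective, so each $X[\bar{A'},b]$ has a unique extension $c_b \in C_A$. For $b \notin B'$ we have $X[\bar{A'},b]=0$, so uniqueness forces $c_b = 0$. Defining $\mathbf c[\cdot,b] = c_b$ for all $b$ then gives a matrix $\mathbf c \in C_A \otimes \mathbb F_2^B$ supported on at most $\beta$ nonzero columns (those in $B'$), and whose restriction to rows $\bar{A'}$ coincides with $X[\bar{A'},\cdot]$ by construction.

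Setting $\mathbf r = X - \mathbf c$, the row-wise agreement $c_b[a] = X[a,b]$ for $a \notin A'$ immediately gives $\mathbf r[a,\cdot] = 0$ for all $a \notin A'$, so $\mathbf r$ is supported on at most $\alpha$ rows. To finish, I would note that $\mathbf r \in C$ as the difference of two elements of $C$, hence $H_A(\mathbf r H_B^\top) = 0$, so every column of $\mathbf r H_B^\top$ is a codeword of $C_A$. But since $\mathbf r$ is supported on at most $\alpha < d_A$ rows, each such column has weight less than $d_A$ and must vanish. Therefore $\mathbf r H_B^\top = 0$, every row of $\mathbf r$ lies in $C_B$, and $\mathbf r \in \mathbb F_2^A \otimes C_B$ with the required support, completing the decomposition $X = \mathbf r + \mathbf c$.
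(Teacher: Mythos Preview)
Your proof is correct, but it takes a genuinely different route from the paper's. The paper begins with an \emph{arbitrary} decomposition $X=\mathbf r+\mathbf c$ (which exists simply because $X\in C$), observes that $\mathbf r|_{A'\times B'}=\mathbf c|_{A'\times B'}$ lies in the punctured tensor code $C_{A'}\otimes C_{B'}$, and then uses injectivity of the puncturing map $C_A\otimes C_B\to C_{A'}\otimes C_{B'}$ to lift this to a correcting codeword $Y\in C_A\otimes C_B$; the final decomposition is $(\mathbf r+Y)+(\mathbf c+Y)$. Your argument instead builds $\mathbf c$ from scratch: you use duality (shortened dual $=$ dual of punctured) and the distance bound $\beta<d_B$ to show that each column $X[\bar{A'},b]$ lies in the punctured code $C_A^{\bar{A'}}$, extend uniquely to $C_A$, and then separately verify $\mathbf r\in\mathbb F_2^A\otimes C_B$ via a weight argument on $\mathbf r H_B^\top$. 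The paper's proof is shorter and more symmetric in $A,B$, leveraging the tensor structure directly; yours is more explicit and avoids ever invoking an initial decomposition, at the cost of a small extra verification step at the end. Both rely essentially on the same injectivity-of-puncturing fact.
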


\begin{proof}
Let $\overline{A'}$ be the rows and $\overline{B'}$ be the columns that $X$ is supported on. We have $\alpha = |\overline{A'}|$ and $\beta = |\overline{B'}|$. Let $C_{A'},C_{B'}$ be the projections of $C_A$ and $C_B$ onto the complements $A'$ and $B'$ respectively. Because $|\overline{A'}|,|\overline{B'}|<d$, the projections $C_A\to C_{A'}$ and $C_B\to C_{B'}$ are isomorphisms, and hence so is the projection $C_A\otimes C_B\to C_{A'}\otimes C_{B'}$.

Let $X=\mathbf r + \mathbf c$ be any decomposition where $\mathbf r\in \mathbb F_2^A\otimes C_B$ and $\mathbf c\in C_A\otimes \mathbb F_2^B$. By assumption, we have $\left. X\right|_{A'\times B'} = 0$, so we have that $\left.\mathbf r\right|_{A'\times B'}=\left.\mathbf c\right|_{A'\times B'}$. It follows that this quantity is in $C_{A'}\otimes C_{B'}$. By the isomorphism above, there exists a unique $Y\in C_A\otimes C_B$ such that $\left.Y\right|_{A'\times B'}=\left.\mathbf r\right|_{A'\times B'}=\left.\mathbf c\right|_{A'\times B'}$. Again due to the isomorphism above, we actually know that $Y$ is equal to $\mathbf r$ on the rows indexed by $A'$, and also that $Y$ is equal to $\mathbf c$ on the columns indexed by $B'$. Therefore, $X=(\mathbf r+Y) + (\mathbf c+Y)$ is the desired decomposition with $\left. (\mathbf r+Y) \right|_{A'\times B} = 0$ and $\left. (\mathbf c+Y) \right|_{A\times B'} = 0$.
\end{proof}

We use a probabilistic argument to show that randomly chosen dual tensor codes will be sufficiently robust with high probability. There are several ways to randomly choose a classical code, which we make use of in different parts of the proof. We first show that these distributions are almost the same.

\subsection{Lemmas about random codes}

In this subsection, we collect some basic results about various ensembles of random codes. The main utility of these results is in the proof of Theorem~\ref{thm:puncture_robust}, where we must consider random ensembles of punctured codes. While the majority of the results in this appendix are more conveniently shown using ensembles of codes obtained from random parity check matrices, it is much simpler to perform puncturing on codes defined using generator matrices. The results proven in this subsection will allow us to freely switch between the various closely related ensembles of random codes so that we may use the most convenient ensemble at each step.

Let $C_1,C_2,C_3$ be random classical codes of length $\Delta$ chosen from three different ensembles:
\begin{enumerate}
    \item Let $H\sim \mathcal U\left(\bb F_2^{(1-\rho)\Delta\times\Delta}\right)$ be a uniformly random parity check matrix and let $C_1=\ker H$.
    \item Let $G\sim \mathcal U\left(\bb F_2^{\Delta\times\rho\Delta}\right)$ be a uniformly random generator matrix and let $C_2=\operatorname{col} H$.
    \item Let $\mc S=\{C\subseteq\bb F_2^\Delta: C \text{ is a $\rho\Delta$-dimensional subspace}\}$ and let $C_3\sim\mc U(\mc S)$ be a uniformly random $\rho\Delta$-dimensional subspace.
\end{enumerate}

\begin{lemma}
For a fixed $C\in \mc S$, we have
\begin{equation}
    \Pr(C_1=C \mid \rk H=(1-\rho)\Delta) = \Pr(C_2=C \mid \rk G=\rho\Delta) = \Pr(C_3=C)\, .
\end{equation}
\end{lemma}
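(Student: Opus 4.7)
The plan is to show that all three conditional probabilities equal $1/|\mc S|$ by direct counting. Since $C_3$ is uniform on $\mc S$ by definition, the equality $\Pr(C_3=C) = 1/|\mc S|$ is immediate. For the other two, the strategy is to count the number of full-rank matrices that produce a given $C$ and observe that this count is independent of $C$, so conditioning on full rank yields a uniform distribution on $\mc S$.

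For $C_1 = \ker H$, I would argue as follows. A surjective linear map $H: \bb F_2^\Delta \to \bb F_2^{(1-\rho)\Delta}$ with kernel equal to a fixed $C \in \mc S$ is the same data as a choice of isomorphism $\phi: \bb F_2^\Delta/C \xrightarrow{\sim} \bb F_2^{(1-\rho)\Delta}$, since such $H$ must factor as $H = \phi \circ \pi_C$ where $\pi_C$ is the quotient map. The number of such isomorphisms is $|\mathrm{GL}_{(1-\rho)\Delta}(\bb F_2)|$, which does not depend on the choice of $C$. Therefore, when $H$ is drawn uniformly from $\bb F_2^{(1-\rho)\Delta \times \Delta}$ conditioned on $\rk H = (1-\rho)\Delta$, the probability of landing on any particular $C \in \mc S$ is
\begin{equation}
\frac{|\mathrm{GL}_{(1-\rho)\Delta}(\bb F_2)|}{|\mc S|\cdot |\mathrm{GL}_{(1-\rho)\Delta}(\bb F_2)|} = \frac{1}{|\mc S|}\, .
\end{equation}

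For $C_2 = \col G$, the analogous argument works. A full-rank $G: \bb F_2^{\rho\Delta} \to \bb F_2^\Delta$ with image equal to a fixed $C \in \mc S$ is exactly an isomorphism $\bb F_2^{\rho\Delta} \xrightarrow{\sim} C$, of which there are $|\mathrm{GL}_{\rho\Delta}(\bb F_2)|$, again independent of $C$. Dividing by the total number of full-rank $G$ gives $1/|\mc S|$ as before.

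There is no serious obstacle here; the only subtle point to verify is that the orbit-counting argument genuinely produces the same count for every $C$, which is a consequence of the natural free action of $\mathrm{GL}_{(1-\rho)\Delta}(\bb F_2)$ (resp. $\mathrm{GL}_{\rho\Delta}(\bb F_2)$) on the set of full-rank matrices with prescribed kernel (resp. column space). Equivalently, one can observe that all three distributions are invariant under the transitive $\mathrm{GL}_\Delta(\bb F_2)$-action on $\mc S$ (invariance of $C_1$ and $C_2$ under $H \mapsto HT^{-1}$ and $G \mapsto TG$ respectively), and any invariant distribution on a homogeneous space is uniform.
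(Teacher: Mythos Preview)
Your proof is correct and follows essentially the same approach as the paper: both establish that the number of full-rank matrices with a prescribed kernel (resp.\ column space) is independent of the particular subspace chosen. The paper does this by exhibiting a bijection $H \mapsto HA^{-1}$ (resp.\ $G \mapsto AG$) for $A \in \mathrm{GL}_\Delta(\bb F_2)$ sending $C'$ to $C''$, which is exactly the $\mathrm{GL}_\Delta$-invariance argument you sketch at the end; your primary argument makes the count explicit as $|\mathrm{GL}_{(1-\rho)\Delta}(\bb F_2)|$ (resp.\ $|\mathrm{GL}_{\rho\Delta}(\bb F_2)|$), which is a minor refinement but not a different route.
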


\begin{proof}
We first prove that $\Pr(C_1=C \mid \rk H=(1-\rho)\Delta)=\Pr(C_3=C)$. Since $C_3$ is drawn from a uniform distribution, it is sufficient to show that given two $\rho\Delta$-dimensional subspaces $C',C'' \in \mc S$, we have 
\begin{align}
\Pr(C_1=C' \mid \rk H=(1-\rho)\Delta)=\Pr(C_1=C'' \mid \rk H=(1-\rho)\Delta)\, .
\end{align}
Equivalently, we show that the number of full rank matrices $H$ with $\ker H=C'$ is the same as the number with $\ker H=C''$. Let
\begin{align}
    \mc H_1 &= \{H\in \bb F_2^{(1-\rho)\Delta\times\Delta}: \rk H=(1-\rho)\Delta,\, \ker H=C'\}\, ,\\
    \mc H_2 &= \{H\in \bb F_2^{(1-\rho)\Delta\times\Delta}: \rk H=(1-\rho)\Delta,\, \ker H=C''\}\, .
\end{align}
Because $C'$ and $C''$ have the same dimension, there is an invertible matrix $A\in \bb F_2^{\Delta\times\Delta}$ such that $AC'=C''$. Consider the bijective linear map
\begin{align}
    f: \bb F_2^{(1-\rho)\Delta\times\Delta} &\to \bb F_2^{(1-\rho)\Delta\times\Delta}\, ,\\
    H &\mapsto HA^{-1}\, .
\end{align}
Now if $\ker H=C'$, then for any $x\in C''$, we have
\begin{equation}
    f(H)x = HA^{-1}x = 0
\end{equation}
since $A^{-1}x\in C'$. Thus, $f$ restricts to a bijection between $\mc H_1$ and $\mc H_2$.

The other equality is shown similarly. We prove that the two sets
\begin{align}
    \mc G_1 &= \{G\in \bb F_2^{\Delta\times\rho\Delta}: \rk G=\rho\Delta,\, \col G=C'\} \\
    \mc G_2 &= \{G\in \bb F_2^{\Delta\times\rho\Delta}: \rk G=\rho\Delta,\, \col G=C''\}
\end{align}
have the same cardinality. Define
\begin{align}
    g: \bb F_2^{\Delta\times\rho\Delta} &\to \bb F_2^{\Delta\times\rho\Delta}\, ,\\
    G &\mapsto AG\, .
\end{align}
Suppose $G\in \mc G_1$. For any $y\in C_2$, we have $A^{-1}y\in C_1$, so let $x\in \bb F_2^{\rho\Delta}$ be such that $Gx=A^{-1}y$. Then
\begin{equation}
    g(G)x = AGx = AA^{-1}y = y\, .
\end{equation}
This shows that $g(G)\in \mc G_2$, and so $g$ is a bijection between $\mc G_1$ and $\mc G_2$.
\end{proof}

\begin{lemma}\label{lem:prob_not_fullrank}
The probability that $H$ or $G$ is not full rank is exponentially small:
\begin{equation}
    \Pr(\rk H\ne (1-\rho)\Delta)\le 2^{-\rho\Delta} \quad \text{ and } \quad \Pr(\rk G\ne \rho\Delta)\le 2^{-(1-\rho)\Delta}\, .
\end{equation}
\end{lemma}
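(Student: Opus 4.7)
The plan is to prove both bounds by a standard union bound argument over witnesses of rank deficiency.

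For the first bound, I observe that $H$ fails to have rank $(1-\rho)\Delta$ precisely when its $(1-\rho)\Delta$ rows are linearly dependent, i.e.\ when there exists some nonzero $v\in\mathbb F_2^{(1-\rho)\Delta}$ such that $v^{\mathrm T}H=0$. The key elementary observation is that for any fixed nonzero $v$, the map $H\mapsto v^{\mathrm T}H$ sends a uniformly random matrix in $\mathbb F_2^{(1-\rho)\Delta\times\Delta}$ to a uniformly random vector in $\mathbb F_2^{\Delta}$, so $\Pr(v^{\mathrm T}H=0)=2^{-\Delta}$. A union bound over the $2^{(1-\rho)\Delta}-1$ nonzero vectors then gives
\begin{equation}
\Pr(\rk H\neq(1-\rho)\Delta)\le(2^{(1-\rho)\Delta}-1)\cdot 2^{-\Delta}<2^{-\rho\Delta}.
\end{equation}

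The second bound is symmetric: $G$ fails to have rank $\rho\Delta$ iff there exists a nonzero $w\in\mathbb F_2^{\rho\Delta}$ such that $Gw=0$. For each fixed nonzero $w$, the vector $Gw$ is uniform in $\mathbb F_2^\Delta$, so $\Pr(Gw=0)=2^{-\Delta}$, and a union bound over the $2^{\rho\Delta}-1$ nonzero vectors yields
\begin{equation}
\Pr(\rk G\neq \rho\Delta)\le(2^{\rho\Delta}-1)\cdot 2^{-\Delta}<2^{-(1-\rho)\Delta}.
\end{equation}

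There is no real obstacle here; the only thing to be careful about is the direction of the witness (left-null vector for $H$ since it is ``wide'', right-null vector for $G$ since it is ``tall''), which is what produces the asymmetric exponents $\rho\Delta$ versus $(1-\rho)\Delta$ on the right-hand sides.
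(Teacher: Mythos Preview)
Your proof is correct and is essentially identical to the paper's own argument: the paper also union-bounds over nonzero linear combinations of the columns of $G$ (equivalently, nonzero $w\in\mathbb F_2^{\rho\Delta}$ with $Gw=0$) and states the same for $H$ by symmetry. The only cosmetic difference is that the paper phrases the witness as a nonempty subset of columns summing to zero rather than as a null vector, which is of course the same thing over $\mathbb F_2$.
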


\begin{proof}
Let the columns of $G$ be $g_1,g_2,\dots,g_{\rho\Delta}$. If $G$ is not full rank, there must be a non-trivial subset of the columns that sums to zero. Thus, a union bound gives
\begin{align}
    \Pr(\rk G\ne \rho\Delta) &= \Pr\left(\sum_{i\in S} g_i=0 \text{ for some nonempty subset } S\subseteq [\rho\Delta]\right)\\
    &\le \sum_{\emptyset\ne S\subseteq [\rho\Delta]} \Pr\left(\sum_{i\in S} g_i=0\right)\\
    &\le 2^{\rho\Delta}2^{-\Delta}\\
    &= 2^{-(1-\rho)\Delta}\,  .
\end{align}
The same argument shows that $\Pr(\rk H\ne (1-\rho)\Delta)\le 2^{-\rho\Delta}$.
\end{proof}

The above two lemmas imply that statements about random codes do not depend much on which distribution the codes are chosen from.

\begin{corollary}\label{cor:total_var_bound}
Let $\mc V$ denote the set of all subspaces of $\mathbb{F}^\Delta_2$. Then the total variation distance $\delta_{TV}(\Pr_{C_i},\Pr_{C_j})$ between the distributions of $C_i$ and $C_j$ is bounded above by
\begin{align}
    \delta_{TV}(\mathrm{Pr}_{C_i},\mathrm{Pr}_{C_j}) \equiv \frac{1}{2}\sum_{C \in \mc V}\big|\Pr_{C_i}(C_i = C) - \Pr_{C_j}(C_j = C)\big| \le 2^{-\Omega(\Delta)}
\end{align}
for $i,j \in \{1,2,3\}$.
\end{corollary}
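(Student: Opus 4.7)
The plan is to deduce the corollary directly from the two preceding lemmas, which together say that once we condition on a full-rank event, all three distributions coincide, and the full-rank event has probability exponentially close to $1$.

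First I would argue that $\delta_{TV}(\mathrm{Pr}_{C_1},\mathrm{Pr}_{C_3})$ is small. Let $E_1$ be the event $\{\rk H = (1-\rho)\Delta\}$. Since $C_3$ is always a $\rho\Delta$-dimensional subspace, and since the previous lemma asserts that $\Pr(C_1 = C \mid E_1) = \Pr(C_3 = C)$ for every $C \in \mathcal{S}$, for any subset $\mathcal{A}$ of subspaces we can write
\begin{align*}
\Pr(C_1 \in \mathcal{A}) - \Pr(C_3 \in \mathcal{A})
&= \Pr(E_1)\Pr(C_1 \in \mathcal{A} \mid E_1) + \Pr(\neg E_1)\Pr(C_1 \in \mathcal{A} \mid \neg E_1) - \Pr(C_3 \in \mathcal{A}) \\
&= \Pr(\neg E_1)\bigl(\Pr(C_1 \in \mathcal{A} \mid \neg E_1) - \Pr(C_3 \in \mathcal{A})\bigr),
\end{align*}
whose absolute value is at most $\Pr(\neg E_1)$. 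Taking the supremum over $\mathcal{A}$ and invoking Lemma~\ref{lem:prob_not_fullrank} gives $\delta_{TV}(\mathrm{Pr}_{C_1},\mathrm{Pr}_{C_3}) \le 2^{-\rho\Delta}$. The identical argument applied to $E_2 = \{\rk G = \rho\Delta\}$ yields $\delta_{TV}(\mathrm{Pr}_{C_2},\mathrm{Pr}_{C_3}) \le 2^{-(1-\rho)\Delta}$.

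Finally, I would close the remaining case $(i,j)=(1,2)$ by the triangle inequality for total variation distance, obtaining $\delta_{TV}(\mathrm{Pr}_{C_1},\mathrm{Pr}_{C_2}) \le 2^{-\rho\Delta} + 2^{-(1-\rho)\Delta}$. Since $\rho \in (0,1)$ is a fixed constant, all three bounds are of the form $2^{-\Omega(\Delta)}$, proving the corollary.

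There is essentially no obstacle here; the real content was already packaged into the preceding two lemmas (conditional equality of distributions and the exponentially small failure probability of the rank condition). The only point that requires a small amount of care is the standard fact that if $P(\,\cdot \mid E) = Q$, then $\delta_{TV}(P,Q) \le \Pr_P(\neg E)$, which is immediate from the telescoping computation above.
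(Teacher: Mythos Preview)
Your proof is correct and follows essentially the same approach as the paper: both arguments reduce to the two preceding lemmas (conditional equality of the distributions on the full-rank event, and the exponentially small probability of rank failure). The only cosmetic difference is that you use the event formulation of total variation and a clean conditioning identity, whereas the paper carries out the same computation as an explicit sum over $\mc S$ and $\mc V\setminus\mc S$; your presentation is arguably tidier, and your explicit mention of the triangle inequality for the $(C_1,C_2)$ case is a detail the paper leaves implicit.
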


\begin{proof}
Let us compare the distributions of $C_1$ and $C_3$. Note that $C_3$ is uniformly random on $\mc S$ and zero on $\mc V\backslash \mc S$. Therefore we can write
\begin{align}
    \delta_{TV}(\mathrm{Pr}_{C_1},\mathrm{Pr}_{C_3}) &= \frac{1}{2}\sum_{C\in\mc V}\big|\Pr_{C_1}(C_1 = C) - \Pr_{C_3}(C_3 = C)\big|\\
    &= \frac{1}{2}\sum_{C\in\mc S}\big|\Pr_{C_1}(C_1 = C) - \Pr_{C_3}(C_3 = C)\big| + \frac{1}{2}\sum_{C \in \mc V\backslash \mc S}\Pr_{C_1}(C_1 = C)\\
    &= \frac{1}{2}\sum_{C\in\mc S}\big|\Pr_{C_1}(C_1 = C) - \Pr_{C_3}(C_3 = C)\big| + \frac{1}{2}\Pr_{C_1}(\dim C_1 \neq \rho\Delta)\\
    &\le \frac{1}{2}\sum_{C\in\mc S}\big|\Pr_{C_1}(C_1 = C) - \Pr_{C_3}(C_3 = C)\big| + \frac{1}{2}\cdot 2^{-\rho\Delta}\, ,
\end{align}
where the last inequality follows by Lemma~\ref{lem:prob_not_fullrank}. For $C\in \mc S$, the previous two lemmas imply that
\begin{align}
    \Pr_{C_1}(C_1=C) &= \Pr_{H}(C_1=C \mid \rk H=(1-\rho)\Delta)\cdot\Pr_{H}(\rk H=(1-\rho)\Delta)\\
    &\ge \Pr_{C_3}(C_3=C)(1-2^{-\rho\Delta})\, .
\end{align}
It follows that
\begin{align}
    \sum_{C\in\mc S}\big|\Pr_{C_1}(C_1 = C) - \Pr_{C_3}(C_3 = C)\big| \le \sum_{C\in\mc S}\Pr_{C_3}(C_3 = C)2^{-\rho\Delta} = 2^{-\rho\Delta}\, .
\end{align}
It follows that we have
\begin{align}
    \delta_{TV}(\mathrm{Pr}_{C_1},\mathrm{Pr}_{C_3}) \le 2^{-\rho\Delta}\, .
\end{align}
The same argument holds when comparing $C_2$ and $C_3$, with the upper bound $2^{-(1-\rho)\Delta}$.
\end{proof}

Note that the total variation distance can equivalently be given by
\begin{align}
     \delta_{TV}(\mathrm{Pr}_{C_i},\mathrm{Pr}_{C_j}) = \sup_{A \subseteq \mc V}\big|\Pr_{C_i}(C_i \in A) - \Pr_{C_j}(C_j \in A)\big|\, .
\end{align}
The most common way we will apply Corollary~\ref{cor:total_var_bound} is in terms of joint probability distributions. For independent random variables, the total variation distance satisfies
\begin{align}
    \delta_{TV}(\mathrm{Pr}_{C_i,C_j}, \mathrm{Pr}_{C_i,C_k}) \le \delta_{TV}(\mathrm{Pr}_{C_j}, \mathrm{Pr}_{C_k})\, .
\end{align}
This allows us to freely switch between the various joint distributions, up to an exponentially small overhead.

\subsection{Random codes are sufficiently robust}
Throughout this section, we will use the notation $\tilde{\Theta}(f(x))$ to denote $\Theta(f(x)\log f(x))$. 
For $a \in (0,1)$, we have the following asymptotic bound for the binomial coefficients which we will use frequently:
\begin{equation}
    \binom{n}{n^a} = 2^{\tilde{\Theta}(n^a)}\label{eq:binom_id}\, .
\end{equation}
Note that equation~\eqref{eq:binom_id} follows from the bound~\cite{TC06}
\begin{align}
\frac{1}{n+1} 2^{n h(k/n)} \le \binom{n}{k} \le 2^{n h(k/n)}
\end{align}
after some basic algebra. Here, $h(x)$ denotes the binary entropy function.

The goal of this section is to show that randomly chosen dual tensor codes will be sufficiently robust with high probability. Towards this goal, it will be more convenient to work with a condition which is proxy for $w$-robustness, one which we will call \emph{sparse robustness} (and its associated punctured version). In all that follows we will fix some small but otherwise arbitrary constant $\varepsilon > 0$. All definitions below are technically made with reference to some chosen $\varepsilon$, but we will suppress the dependence out of brevity. 

\begin{definition}[Low-Weight and Sparse]
We will say that a matrix $X \in \mathbb{F}_2^{\Delta\times\Delta}$ is \emph{low-weight} if $|X| \le \Delta^{3/2+\varepsilon}$. We will say that $X$ is \emph{sparse} if each row and column of $X$ has weight at most $\Delta^{1/2+2\varepsilon}$.
\end{definition}

Note that low-weight and sparse above are closely related but distinct notions. Neither implies the other. We ultimately want to show robustness against low-weight codewords, and we do so by first showing robustness against sparse matrices.

\begin{definition}[Sparse Robustness and Puncturing]
Let $C_{AB}$ be a dual tensor code with distance $d\ge\delta\Delta$. We say that $C_{AB}$ is \emph{sparse robust} if $C_{AB}$ does not contain any non-zero sparse codewords.

Let $C_A \subseteq \mathbb{F}_2^A$ be a code and let $A' \subseteq A$. We say that the code $C_{A'} \subseteq \mathbb{F}_2^{A'}$ is a \emph{punctured code} obtained from $C_A$ if the codewords of $C_{A'}$ are precisely those obtained from $C_A$ by removing all entries in $\overline{A'} = A\backslash A'$. In this case, we also say that $C_{A'}$ is obtained from $C_A$ by puncturing on $\overline{A'}$. Note that a generator matrix for $C_{A'}$ is obtained from a generator matrix for $C_A$ by removing the entries supported on $\overline{A'}$.

Let $\mathcal{P}$ denote the set of all codes $C_{A'B'}$ obtained from $C_{AB}$ by puncturing $A$ and $B$ on $\Delta^{1-\varepsilon}$ coordinates (note that $|A'|=|B'| = \Delta-\Delta^{1-\varepsilon}$ in this case). We say that $C_{AB}$ is \emph{sparse robust with respect to puncturing (SRP)} if every $C_{A'B'} \in \mathcal{P}$ is sparse robust.
\end{definition}

The connection between $w$-robustness and sparse robustness is formalized in the lemma below.

\begin{lemma}\label{lem:sparse_implies_robust}
Let $C_{AB}$ be a dual tensor code with distance $d=\delta\Delta$. For sufficiently large $\Delta$, if $C_{AB}$ is sparse robust with respect to puncturing then it is $\Delta^{3/2+\varepsilon/2}$-robust. In particular, $C_{AB}$ is sufficiently robust.
\end{lemma}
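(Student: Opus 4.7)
The plan is to bootstrap the SRP hypothesis into full $w$-robustness in two steps. First, I would use SRP together with a ``varying puncturing'' argument to show that $X$ must vanish outside of the union of the few heavy rows and heavy columns of $X$. Second, I would apply the decomposition Lemma~\ref{lem:rowcolsum} to split $X$ into row-type and column-type pieces and read off the non-zero row/column counts required by the definition of $w$-robustness.

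Concretely, fix $X \in C_{AB}$ with $|X| \le \Delta^{3/2+\varepsilon/2}$, and define the heavy rows $A^{\mathrm{sh}} = \{a : |X_{\{a\},B}| > \Delta^{1/2+2\varepsilon}\}$ and similarly $B^{\mathrm{sh}}$. The trivial counting bound gives $|A^{\mathrm{sh}}|, |B^{\mathrm{sh}}| \le |X|/\Delta^{1/2+2\varepsilon} \le \Delta^{1-3\varepsilon/2}$, comfortably below the puncturing size $\Delta^{1-\varepsilon}$. For any pair $(a,b)$ with $a \notin A^{\mathrm{sh}}$ and $b \notin B^{\mathrm{sh}}$, I would extend $A^{\mathrm{sh}}, B^{\mathrm{sh}}$ to arbitrary supersets $P_r, P_c$ of size exactly $\Delta^{1-\varepsilon}$ avoiding $a$ and $b$. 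Setting $A' = A \setminus P_r$ and $B' = B \setminus P_c$, the code $C_{A'B'} \in \mathcal{P}$ is again a dual tensor code built from the punctured row and column codes, and $X|_{A' \times B'}$ is one of its codewords; moreover, every row and column of $X|_{A' \times B'}$ has weight at most $\Delta^{1/2+2\varepsilon}$ because all heavy rows and columns were punctured. Hence $X|_{A'\times B'}$ is sparse, so by SRP it vanishes, which in particular forces $X_{a,b}=0$. Varying $(a,b)$ over all pairs outside the heavy sets yields $X|_{(A\setminus A^{\mathrm{sh}})\times(B\setminus B^{\mathrm{sh}})} = 0$.

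At this point, the support of $X$ is contained in the union of $|A^{\mathrm{sh}}|$ rows and $|B^{\mathrm{sh}}|$ columns, both of which are below $d$ for $\Delta$ large enough, so Lemma~\ref{lem:rowcolsum} gives a decomposition $X = \mathbf{r} + \mathbf{c}$ with $\mathbf{r} \in \mathbb{F}_2^A \otimes C_B$ supported on rows $A^{\mathrm{sh}}$ and $\mathbf{c} \in C_A \otimes \mathbb{F}_2^B$ supported on columns $B^{\mathrm{sh}}$. Take $A^{*}$ to be the set of non-zero rows of $\mathbf{r}$ and $B^{*}$ the set of non-zero columns of $\mathbf{c}$, so that $X$ is supported on $(A^{*}\times B)\cup(A\times B^{*})$. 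Each non-zero row of $\mathbf{r}$ at $a \in A^{*}$ is a codeword of $C_B$, and because $\mathbf{c}$ vanishes on all columns of $B\setminus B^{\mathrm{sh}}$, this codeword agrees with $X|_{\{a\}, B\setminus B^{\mathrm{sh}}}$ on those columns and therefore has weight at least $d_B - |B^{\mathrm{sh}}|$ there. Summing over $a \in A^{*}$ gives $|A^{*}|(d_B - |B^{\mathrm{sh}}|) \le |X|$, and since $|B^{\mathrm{sh}}|/d_B = O(\Delta^{-3\varepsilon/2}/\delta) \to 0$, this tightens to $|A^{*}| \le |X|/d_B$ for sufficiently large $\Delta$. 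The symmetric argument yields $|B^{*}| \le |X|/d_A$, which is the $\Delta^{3/2+\varepsilon/2}$-robustness property.

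The main technical delicacy is precisely the last bookkeeping step: the natural estimate is $|A^{*}| \le |X|/(d_B - |B^{\mathrm{sh}}|)$, carrying a $(1+o(1))$ overhead relative to the clean $|X|/d_B$ required by the definition, and the ``sufficiently large $\Delta$'' qualifier in the lemma is used precisely to absorb this overhead into the exponent slack between $\Delta^{3/2+\varepsilon/2}$ and the ambient regime. The remaining prerequisites---that puncturing a dual tensor code yields a dual tensor code of the punctured row/column codes, and that the hypotheses $|A^{\mathrm{sh}}|,|B^{\mathrm{sh}}| < d$ of Lemma~\ref{lem:rowcolsum} hold in the large-$\Delta$ limit---are routine consequences of the tensor structure and the counting bound on heavy rows/columns.
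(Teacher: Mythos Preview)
Your strategy matches the paper's: use SRP to confine any low-weight $X \in C_{AB}$ to the union of a small number of rows and columns, then upgrade this coarse localization to the precise $|X|/d_A$, $|X|/d_B$ bounds required by the definition of $w$-robustness. The first step is essentially the paper's argument, with two minor differences. First, your heavy-row threshold $\Delta^{1/2+2\varepsilon}$ coincides with the sparsity bound and leaves no slack for the blocklength drop $\Delta \mapsto \Delta' = \Delta - \Delta^{1-\varepsilon}$: sparsity for $C_{A'B'}$ requires row and column weights at most $(\Delta')^{1/2+2\varepsilon} < \Delta^{1/2+2\varepsilon}$, so a non-heavy row need not be sparse after puncturing. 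The paper takes the threshold $\Delta^{1/2+3\varepsilon/2}$ instead and verifies $\Delta^{1/2+3\varepsilon/2} \le (\Delta')^{1/2+2\varepsilon}$ for large $\Delta$. Second, your varying-puncturing device is correct but unnecessary: a single puncturing containing all heavy rows and columns already localizes $X$ to $O(\Delta^{1-\varepsilon})$ rows and columns, which is all the next step needs.

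The genuine gap is in the second step. Your counting through Lemma~\ref{lem:rowcolsum} yields only $|A^{*}| \le |X|/(d_B - |B^{\mathrm{sh}}|)$, and the assertion that this ``tightens to $|A^{*}| \le |X|/d_B$ for sufficiently large $\Delta$'' is false. The multiplicative overhead is $1 + \Theta(\Delta^{-3\varepsilon/2})$, and since $|X|/d_B$ ranges up to $\Theta(\Delta^{1/2+\varepsilon/2})$, the additive discrepancy in the row count is $\Theta(\Delta^{1/2-\varepsilon})$, which grows without bound; no choice of robustness exponent absorbs an additive error in the number of rows, because the definition of $w$-robustness demands exactly $|X|/d_B$. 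The paper does not attempt this direct count: once $X$ is confined to at most $d/2$ rows and columns, it invokes Lemma~30 of~\cite{LZ22} as a black box to obtain the exact $|X|/d$ bound. To close your argument you must either reproduce that lemma or cite it.
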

\begin{proof}

Let $C_{AB}$ be sparse robust with respect to puncturing. Let $X \in C_{AB}$ be a codeword of weight $|X| \le \Delta^{3/2+\varepsilon/2}$. From Lemma~30 of~\cite{LZ22}, if $X$ is supported on the union of at most $d/2=\delta\Delta/2$ rows and columns, then it is supported on the union of $|X|/d$ rows and columns. Therefore, it suffices to show that $X$ is supported on the union of at most $\delta\Delta/2$ non-zero rows and columns.

Since $|X|\le \Delta^{3/2+\varepsilon/2}$, it can have at most $\Delta^{1-\varepsilon}$ rows or columns which are of weight greater than $\Delta^{1/2+3\varepsilon/2}$. By removing these high-weight rows and columns, it follows that there exists some puncturing sets $\overline{A'}, \overline{B'}$ of size $\Delta^{1-\varepsilon}$ such that $X$ punctured on those coordinates has all columns and rows with weight at most $\Delta^{1/2+3\varepsilon/2}$. The idea now is to show that the punctured matrix $X'$ is sparse, so that it must vanish by the sparse robustness of the punctured code $C_{A'B'}$. The sparsity of $X'$ is slightly complicated by the fact that it is a matrix of size $\Delta' = \Delta - \Delta^{1-\varepsilon} < \Delta$. To account for the smaller size of $\Delta'$, let us choose $\Delta$ to be sufficiently large so that $\Delta' \ge \Delta/2$. Then we have
\begin{align}
\Delta^{1/2+3\varepsilon/2} \le \left(2\Delta'\right)^{1/2+3\varepsilon/2} \le (\Delta')^{1/2+2\varepsilon}\, ,
\end{align}
where the last inequality holds as long as we choose $\Delta$ large enough so that $4 \le \varepsilon\log_2\Delta$. It follows that for sufficiently large $\Delta$, the rows and columns of $X'$ have at most $(\Delta')^{1/2+2\varepsilon}$ entries, so the punctured matrix $X'$ is sparse. Since $C_{A'B'}$ is sparse robust by assumption, it follows that $X'=0$. Therefore $X$ must have been supported on its punctured rows and columns, of which there are $O(\Delta^{1-\varepsilon})$. This will be less than $d/2 = \delta\Delta/2$ for sufficiently large $\Delta$, and the result follows.
\end{proof}

We will therefore proceed by first showing that a randomly chosen dual tensor code will be sparse robust with high probability, and then use this fact to show that random dual tensor codes are sparse robust with respect to puncturing---and hence sufficiently robust---with high probability.

For a dual tensor code $C_{AB}$ with distance $d\ge\delta\Delta$, we are automatically guaranteed that there are no non-zero sparse codewords supported on fewer than $\delta\Delta$ non-zero rows and columns.

\begin{lemma}\label{lem:sparse_row_col}
Let $C_{AB}$ be a dual tensor code with distance $d\ge \delta\Delta$. Let $\gamma \in (0,1)$ be some constant. For $\Delta$ sufficiently large, the dual tensor code $C_{AB}$ contains no non-zero sparse codewords which are supported on the union of $\le \gamma\delta\Delta$ non-zero rows and $\le \gamma\delta\Delta$ non-zero columns.
\end{lemma}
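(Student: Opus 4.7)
The strategy is to apply Lemma~\ref{lem:rowcolsum} to extract a structured decomposition of a hypothetical sparse codeword, and then show that this decomposition forces a row or column of $X$ to be too heavy to be sparse.

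First, suppose for contradiction that $X \in C_{AB}$ is a non-zero sparse codeword supported on the union of $\alpha$ non-zero rows and $\beta$ non-zero columns, with $\alpha, \beta \le \gamma\delta\Delta$. Since $\gamma < 1$ and $d \ge \delta\Delta$, we have $\alpha, \beta < d$, so the hypotheses of Lemma~\ref{lem:rowcolsum} are satisfied. This yields a decomposition $X = \mathbf{r} + \mathbf{c}$, where $\mathbf{r} \in \mathbb{F}_2^A \otimes C_B$ is supported on at most $\alpha$ non-zero rows and $\mathbf{c} \in C_A \otimes \mathbb{F}_2^B$ is supported on at most $\beta$ non-zero columns. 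In particular, every row of $\mathbf{c}$ has weight at most $\beta \le \gamma\delta\Delta$, and every column of $\mathbf{r}$ has weight at most $\alpha \le \gamma\delta\Delta$.

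Next, I would argue that in fact $\mathbf{r} = 0$. If some row $v$ of $\mathbf{r}$ were non-zero, then since $\mathbf{r}[v,\cdot] \in C_B$ is a non-zero codeword of a code of distance at least $\delta\Delta$, we would have $|\mathbf{r}[v,\cdot]| \ge \delta\Delta$. The corresponding row of $X$ would then satisfy
\begin{equation}
    |X[v,\cdot]| \ge |\mathbf{r}[v,\cdot]| - |\mathbf{c}[v,\cdot]| \ge \delta\Delta - \beta \ge (1-\gamma)\delta\Delta\, .
\end{equation}
But sparsity of $X$ requires $|X[v,\cdot]| \le \Delta^{1/2+2\varepsilon}$, which contradicts the previous bound whenever $\Delta$ is large enough that $(1-\gamma)\delta\Delta > \Delta^{1/2+2\varepsilon}$; such a $\Delta$ exists since $1-\gamma > 0$ and $1/2+2\varepsilon < 1$ (taking $\varepsilon < 1/4$, which is allowed by the global assumption that $\varepsilon$ is small). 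Hence $\mathbf{r} = 0$ for all sufficiently large $\Delta$.

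By an entirely symmetric argument applied to columns of $\mathbf{c}$ and the distance of $C_A$, we likewise conclude that $\mathbf{c} = 0$. Therefore $X = \mathbf{r} + \mathbf{c} = 0$, contradicting the assumption that $X$ was non-zero. The only real content of the proof is the bookkeeping between the distance lower bound $\delta\Delta$ and the sparsity upper bound $\Delta^{1/2+2\varepsilon}$; there is no significant obstacle, since the gap between these two scales is polynomial in $\Delta$.
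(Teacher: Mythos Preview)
Your proof is correct and follows essentially the same approach as the paper: both apply Lemma~\ref{lem:rowcolsum} to get the decomposition $X = \mathbf{r} + \mathbf{c}$, then use the support bounds on $\mathbf{r}$ and $\mathbf{c}$ together with the minimum distance and sparsity to force both pieces to vanish. The only cosmetic difference is that the paper bounds $|\mathbf{c}[\cdot,i]|$ from above and compares to the distance, whereas you bound $|X[v,\cdot]|$ from below and compare to the sparsity threshold; these are the same inequality rearranged.
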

\begin{proof}
Suppose that $X\in C_{AB}$ is sparse and is supported on a union of at most $\delta\Delta$ non-zero rows and columns. By Lemma~\ref{lem:rowcolsum}, there exists a decomposition $X = \mathbf{r}+\mathbf{c}$ where $\mathbf{c}\in C_A\otimes \mathbb{F}_2^B$ and $\mathbf{r} \in \mathbb{F}_2^A\otimes C_B$ such that $\mathbf{c}$ has $\le \gamma\delta\Delta$ non-zero columns, each of which is a codeword for $C_A$, and $\mathbf{r}$ has $\le\gamma\delta\Delta$ non-zero rows, each of which is a codeword for $C_B$. Since $X$ is sparse, it follows that each column of $\mathbf{c}$ has weight
\begin{align}
    |\mathbf{c}[\cdot, i]| \le  \gamma\delta\Delta + \Delta^{1/2+2\varepsilon}\, .
\end{align}
Choosing $\Delta$ large enough so that $\Delta^{1/2+2\varepsilon} < (1-\gamma)\delta\Delta$, we get
\begin{align}
    |\mathbf{c}[\cdot, i]| < \delta\Delta
\end{align}
so that $\mathbf{c}[\cdot,i] = 0$. Since this holds for every column, it follows that $\mathbf{c}$ is the zero codeword. The same logic applies to $\mathbf{r}$. 
\end{proof}

It follows that to show a random $C_{AB}$ is sparse robust, it suffices to show that it cannot contain any sparse codewords with more than $\delta\Delta/2$ non-zero columns and more than $\delta\Delta/2$ non-zero rows

\begin{theorem}[Sparse Robustness]\label{thm:sparse_robust}
Fix constants\footnote{Note that we are only interested in the upper bound on the probability here, so we do not restrict the values of the $\rho_A$, $\rho_B$, and $\delta$. With particular choices of $\rho_A$, $\rho_B$, and $\delta$, this result implies that sparse robust codes exist by the Gilbert–Varshamov Bound.} $\rho_A,\rho_B \in (0,1)$, $\varepsilon \in (0,1/14)$, and $\delta \in (0,1)$.

Let $H_A\in \mathbb F_2^{(1-\rho_A)\Delta\times\Delta}$ and $H_B\in \mathbb F_2^{(1-\rho_B)\Delta\times\Delta}$ be uniformly random binary check matrices defining codes $C_A$ and $C_B$ respectively. Then the probability that $C_{AB}$ has distance $d\ge \delta\Delta$ and is not sparse robust is bounded above by
\begin{align}
    \Pr_{H_A,H_B}(C_{AB} \text{ is not SR} \text{ and } d \ge \delta\Delta) \le 2^{-\Theta(\Delta^{3/2-2\varepsilon})}\, .
\end{align}
\end{theorem}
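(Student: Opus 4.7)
The plan is to apply a first-moment / Markov argument to bound the expected number of non-zero sparse codewords of $C_{AB}$ with at least $\gamma\delta\Delta$ non-zero rows and columns. Lemma~\ref{lem:sparse_row_col} applied under the hypothesis $d \ge \delta\Delta$ ensures that any non-zero sparse codeword has this property for some constant $\gamma\in(0,1)$, so it suffices to control
\[
\sum_X \Pr_{H_A,H_B}(X \in C_{AB}),
\]
where the sum ranges over sparse $X$ with $r,c > \gamma\delta\Delta$ non-zero rows and columns.

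For the per-$X$ probability, I would observe that $X \in C_{AB}$ is equivalent to $H_A X H_B^T = 0$. Conditioning on $H_B$ and exploiting the independence and uniformity of the $(1-\rho_A)\Delta$ rows of $H_A$ yields
\[
\Pr(H_A Y = 0 \mid H_B) = 2^{-(1-\rho_A)\Delta\cdot\rk(Y)}\,,\qquad Y = X H_B^T\,.
\]
A standard random-matrix computation shows that $\rk(X H_B^T) = \rk(X)=:k$ with probability at least $1 - 2^{-((1-\rho_B)\Delta - k)}$ over $H_B$; combining this with the symmetric calculation conditioning on $H_A$ first gives a bound of the form $\Pr(X \in C_{AB}) \lesssim 2^{-k(1-\rho^*)\Delta}$, where $\rho^* = \min(\rho_A,\rho_B)$.

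The remaining task is to count sparse $X$ with $r,c > \gamma\delta\Delta$ in a way that balances this probability decay. The key geometric input is that since the $c$ non-zero columns of $X$ lie in its $k$-dimensional column space and each has weight at most $\Delta^{1/2+2\varepsilon}$, a column-pattern argument (generalizing the rank-$1$ case, where sparsity forces $c \le \Delta^{1/2+2\varepsilon}$, and the rank-$2$ case, where $c \le 3\Delta^{1/2+2\varepsilon}$) forces $c \le (2^k-1)\Delta^{1/2+2\varepsilon}$. The condition $c > \gamma\delta\Delta$ then implies $k \ge k_{\min} := \lceil \log_2(\gamma\delta\Delta^{1/2-2\varepsilon})\rceil$. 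For each admissible $k$, I would count rank-$k$ sparse $X$ via a factorization $X = UV^T$ modulo basis change in $\mathbb{F}_2^k$, parameterized by the multiplicity distributions $m_U, m_V : \mathbb{F}_2^k\setminus\{0\}\to\mathbb{Z}_{\ge 0}$ on the row and column coefficient vectors. The sparsity conditions translate into the bilinear constraints $\sum_{v:\ u\cdot v=1} m_V(v) \le \Delta^{1/2+2\varepsilon}$ for every $u$ with $m_U(u) > 0$ (and symmetrically), which heavily restrict the admissible $(m_U,m_V)$.

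The main obstacle is obtaining a rank-stratified count sharp enough to overcome the $2^{-k(1-\rho^*)\Delta}$ decay. A naive count of rank-$k$ $\Delta\times\Delta$ matrices is of order $2^{O(k\Delta)}$, which matches the probability at the same exponential scale, so constants and lower-order savings matter. The delicate step is to extract additional savings from the bilinear sparsity conditions on $(m_U, m_V)$, reducing the count to essentially $2^{\tilde O(k\Delta^{1/2+2\varepsilon})}$; after summing count-times-probability over the admissible range $k_{\min} \le k \lesssim \Delta$, this produces the claimed bound $2^{-\Theta(\Delta^{3/2-2\varepsilon})}$. This combinatorial bookkeeping, and in particular the emergence of the specific exponent $\Delta^{3/2-2\varepsilon}$, is the technical heart of the proof.
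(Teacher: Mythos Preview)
Your overall first-moment structure and rank stratification are the same as the paper's, and your combinatorial lower bound $c\le (2^k-1)\Delta^{1/2+2\varepsilon}$ on the number of non-zero columns (hence $k\ge k_{\min}\sim (1/2-2\varepsilon)\log_2\Delta$) is correct. The rank-stratified count $2^{\tilde O(k\Delta^{1/2+2\varepsilon})}$ is also essentially what Lemma~\ref{lem:lowrank_card_bound} gives.

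The gap is in your per-$X$ probability bound. Your ``rank preservation'' step only yields
\[
\Pr_{H_A,H_B}(H_AXH_B^T=0)\le 2^{-(1-\rho_A)\Delta k}+\Pr\big(\rk(XH_B^T)<k\big)\le 2^{-(1-\rho_A)\Delta k}+2^{k-(1-\rho_B)\Delta},
\]
and the ``symmetric calculation'' does not improve the second term: it is still of order $2^{-\Theta(\Delta)}$, independent of $k$, whenever $k=o(\Delta)$. This is \emph{not} $\lesssim 2^{-k(1-\rho^*)\Delta}$. For $k$ near $\Delta^{1/2-2\varepsilon}$ your count is $2^{\tilde\Theta(k\Delta^{1/2+2\varepsilon})}=2^{\tilde\Theta(\Delta)}$, and multiplying by the weak probability bound $2^{-\Theta(\Delta)}$ gives a term that is not $o(1)$; the union bound collapses exactly in the regime that matters.

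The paper repairs this in two ways. First, it proves the sharper rank-drop estimate of Lemma~\ref{lem:rank_prob},
\[
\Pr(\rk(H_AX)=M)\le \binom{k}{M}2^{-((1-\rho_A)\Delta-M)(k-M)},
\]
and sums $\sum_M \Pr(\rk(H_AX)=M)\,2^{-(1-\rho_B)\Delta M}$ via a binomial-type bound to obtain $\Pr(X\in C_{AB})\le 2^{-\Omega(\Delta\min(k,\Delta))}$; this is the inequality you actually need. Second, rather than lower-bounding $\rk(X)$ combinatorially, the paper invokes the distance hypothesis $d_A\ge\delta\Delta$ through Lemma~\ref{lem:LZ25} to get $\rk(H_AX)\ge(\delta/2)\Delta^{1/2-2\varepsilon}$ deterministically, which immediately gives the per-$X$ bound $2^{-\Theta(\Delta^{3/2-2\varepsilon})}$ for the low-rank part and is where that specific exponent originates. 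Your logarithmic bound $k_{\min}$ plays no role in the paper's argument; the polynomial lower bound on $\rk(H_AX)$ (not $\rk(X)$) is the crucial input.
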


To prove Theorem~\ref{thm:sparse_robust} we first begin with some setup. Let us define $\mathcal{X}\subseteq \mathbb{F}_2^{\Delta\times\Delta}$ as the set of all sparse matrices with more than $\delta\Delta/2$ non-zero rows and columns, i.e.,
\begin{align}
    \mathcal{X} = \{X \in \mathbb{F}^{\Delta\times\Delta}_2 \mid X \text{ is sparse and has}>\delta\Delta/2 \text{ non-zero rows and}>\delta\Delta/2 \text{ non-zero columns}\}\, .
\end{align}

We first bound the number of high and low rank matrices in $\mathcal{X}$.

\begin{lemma}\label{lem:lowrank_card_bound}
Let $b\in(0, 1/2)$ and let
\begin{align}
    \mathcal{X}_1 = \{X \in \mathcal{X} \mid \mathrm{rank}(X) \le \Delta^{1/2+b}\}\quad\text{and}\quad \mathcal{X}_2 = \mathcal{X}\backslash \mathcal{X}_1 = \{X \in \mathcal{X} \mid \mathrm{rank}(X) > \Delta^{1/2+b}\}\, .
\end{align}
Then we have the cardinality bounds\footnote{Note that these bounds only make use of the sparsity condition, and not the restriction on the number of rows and columns.}
\begin{align}
    |\mathcal{X}_1| \le 2^{\tilde{\Theta}(\Delta^{1+2\varepsilon+b})} \quad\text{and}\quad |\mathcal{X}_2|\le 2^{\tilde{\Theta}(\Delta^{3/2+2\varepsilon})}\, .
\end{align}
\end{lemma}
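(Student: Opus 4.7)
The plan is to handle the two bounds separately, with the bound on $|\mathcal{X}_2|$ following directly from sparsity and the bound on $|\mathcal{X}_1|$ requiring a rank-factorization argument that simultaneously exploits both row and column sparsity.

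For $\mathcal{X}_2$, I would observe that any sparse matrix $X \in \mathbb{F}_2^{\Delta\times\Delta}$ has total Hamming weight at most $\Delta \cdot \Delta^{1/2+2\varepsilon} = \Delta^{3/2+2\varepsilon}$, since each of its $\Delta$ rows contains at most $\Delta^{1/2+2\varepsilon}$ nonzero entries. The number of such matrices is bounded by the number of subsets of $[\Delta]\times[\Delta]$ of size at most $\Delta^{3/2+2\varepsilon}$, which is
\begin{equation}
\sum_{k \le \Delta^{3/2+2\varepsilon}} \binom{\Delta^2}{k} \le \Delta^{3/2+2\varepsilon} \binom{\Delta^2}{\Delta^{3/2+2\varepsilon}} = 2^{\tilde{\Theta}(\Delta^{3/2+2\varepsilon})}
\end{equation}
using the identity $\binom{n}{n^a} = 2^{\tilde{\Theta}(n^a)}$ from equation~\eqref{eq:binom_id}. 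Note this bound uses only sparsity (not the row/column support size), so it automatically bounds $|\mathcal{X}_2|$ as well as all of $|\mathcal{X}|$.

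The more delicate bound is on $|\mathcal{X}_1|$. The key structural observation is that any matrix $X$ of rank $r' \le r := \Delta^{1/2+b}$ admits index sets $I, J \subseteq [\Delta]$ with $|I| = |J| = r'$ such that $X[I,J]$ is invertible, and $X$ is then completely determined by the two sets of entries $X[I,*]$ and $X[*,J]$, since the remaining block satisfies $X[\bar{I},\bar{J}] = X[\bar{I},J]\,X[I,J]^{-1}\,X[I,\bar{J}]$. Given this, I would enumerate the possible $X \in \mathcal{X}_1$ as follows:
\begin{enumerate}
\item Choose $r' \in \{1,\dots,r\}$ and choose $I,J \subseteq [\Delta]$ of size $r'$: at most $r \cdot \binom{\Delta}{r'}^2 = 2^{O(r \log \Delta)}$ possibilities.
\item Choose the $r'$ rows $X[I,*]$: since $X$ is sparse, each row has weight at most $w := \Delta^{1/2+2\varepsilon}$, contributing at most $\bigl(\sum_{k \le w} \binom{\Delta}{k}\bigr)^{r'} = 2^{\tilde{O}(r'w)}$ possibilities.
\item Choose the remaining entries $X[\bar{I},J]$: each of the $r'$ columns of $X[*,J]$ has weight at most $w$, so this contributes at most $2^{\tilde{O}(r'w)}$ possibilities.
\end{enumerate}
Multiplying these together yields $|\mathcal{X}_1| \le 2^{\tilde{O}(rw + r\log\Delta)} = 2^{\tilde{O}(rw)} = 2^{\tilde{\Theta}(\Delta^{1+2\varepsilon+b})}$, as desired, since $\log\Delta = o(w)$.

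The only potential obstacle is to justify the double-sparsity counting in step (3) without double-counting or missing cases; in particular, using the same factorization for matrices of any rank $r' \le r$ (rather than exactly $r$) requires summing over $r'$, which contributes only a polynomial factor absorbed into the $\tilde{O}$ notation. A minor subtlety is that the same matrix $X$ may admit many valid choices of $(I,J)$, but this only inflates the upper bound, which is fine. The naive approach of parameterizing rank-$r'$ matrices solely by a row basis and coefficient matrix gives a bound of $2^{r\Delta}$, which is much too weak; the factorization above is what lets both row and column sparsity cut into the exponent.
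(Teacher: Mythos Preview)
Your proof is correct. The bound on $|\mathcal{X}_2|$ is identical to the paper's. For $|\mathcal{X}_1|$, you take a genuinely different route: you parameterize a rank-$r'$ matrix by an invertible $r'\times r'$ submatrix $X[I,J]$ together with the data $X[I,*]$ and $X[*,J]$, then invoke the Schur-complement identity $X[\bar I,\bar J] = X[\bar I,J]\,X[I,J]^{-1}\,X[I,\bar J]$ to recover the rest. The paper instead chooses $N$ actual rows of $X$ as a basis for the row space, row-reduces to obtain pivot columns $c_1,\dots,c_N$, and then observes that the coefficient of the $i$-th basis vector in any row $r_j$ is exactly the entry $X[r_j,c_i]$; column sparsity then limits how many rows can use each basis vector. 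Both arguments hinge on the same mechanism---row sparsity bounds the ``basis'' data and column sparsity bounds the ``coefficient'' data---but your invertible-submatrix factorization is cleaner and avoids the row-reduction step. The paper's approach, on the other hand, makes slightly more explicit why column sparsity enters: it is literally the weight of the pivot columns that caps how many rows contain a given basis vector.
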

\begin{proof}
We begin with the proof of the high rank case. Since the overwhelming majority of matrices in $\mathcal{X}$ are expected to be high rank, we simply bound the total number of matrices in $\mathcal{X}$ as a whole. Since each matrix in $\mathcal{X}$ is sparse, it can have weight at most $\Delta^{3/2+2\varepsilon}$. We can therefore bound $|\mathcal{X}|$ by the total number of matrices of such weight, given by
\begin{align}
    |\mathcal{X}| \le \sum_{j=0}^{\Delta^{3/2+2\varepsilon}}\binom{\Delta^2}{j} \le \Delta^{3/2+2\varepsilon}\binom{\Delta^2}{\Delta^{3/2+2\varepsilon}} = 2^{\tilde{\Theta}(\Delta^{3/2+2\varepsilon})}\, .
\end{align}

Now we bound the low rank case. Let us see how many ways we can build some $X \in \mathcal{X}_1$ with $\rk X = N$. We first fix a basis for the row space of $X$. Since $X$ is sparse, each basis vector can be chosen in at most
\begin{align}
    \sum_{j=1}^{\Delta^{1/2+2\varepsilon}}\binom{\Delta}{j} \le \Delta^{1/2+2\varepsilon}\binom{\Delta}{\Delta^{1/2+2\varepsilon}} = 2^{\tilde{\Theta}(\Delta^{1/2+2\varepsilon})}
\end{align}
ways. There are $N$ basis vectors, so there are at most
\begin{align}
\left(2^{\tilde{\Theta}(\Delta^{1/2+2\varepsilon})}\right)^{N} = 2^{\tilde{\Theta}(N\Delta^{1/2+2\varepsilon})}
\end{align}
possible (ordered) bases for the row space of $X$. We can place these basis vectors into the rows of the matrix $X$ in at most
\begin{align}
    \binom{\Delta}{N} \le \binom{\Delta}{\Delta^{1/2+b}} = 2^{\tilde{\Theta}(\Delta^{1/2+b})}
\end{align}
ways. Having fixed a row space basis, each of the remaining rows must be a linear combination of these basis vectors. By row reduction, let $\{v_1,\dots,v_N\}$ be another basis for the row space of $X$ such that each $v_i$ has a $1$ in some column $c_i$ in which every other $v_j$ is $0$. Now, every row of $X$ is also a linear combination of $\{v_1,\dots,v_N\}$. If the basis vector $v_i$ appears in the linear combination defining a row $r_j$, then $(r_j)_{c_i}=1$. However, by column sparsity, $(r_j)_{c_i}=1$ can only be true for at most $\Delta^{1/2+2\varepsilon}$ values of $j$. There are therefore at most 
\begin{align}
    \sum_{j=0}^{\Delta^{1/2+2\varepsilon}}\binom{\Delta}{j} \le \Delta^{1/2+2\varepsilon} \binom{\Delta}{\Delta^{1/2+2\varepsilon}} = 2^{\tilde{\Theta}(\Delta^{1/2+2\varepsilon})}
\end{align}
ways to choose the rows which contain a given $v_i$ in its linear combination. Making this choice for each $v_i$, it follows that there are at most
\begin{align}
    \left(2^{\tilde{\Theta}(\Delta^{1/2+2\varepsilon})}\right)^{N} = 2^{\tilde{\Theta}(N\Delta^{1/2+2\varepsilon})}
\end{align}
ways to fill out the remaining rows of the matrix, since we have chosen the subset of $\{v_1,\dots,v_N\}$ in the linear combination defining every row of $X$. Combining everything, and summing over the rank $N$, it follows that there can be at most
\begin{align}
    \sum_{N=1}^{\Delta^{1/2+b}}2^{\tilde{\Theta}(N\Delta^{1/2+2\varepsilon})}2^{\tilde{\Theta}(\Delta^{1/2+b})}2^{\tilde{\Theta}(N\Delta^{1/2+2\varepsilon})} \le \Delta^{1/2+b}\cdot 2^{\tilde{\Theta}(\Delta^{1+2\varepsilon+b})} = 2^{\tilde{\Theta}(\Delta^{1+2\varepsilon+b})}
\end{align}
distinct matrices in $\mathcal{X}_1$.
\end{proof}

For low-rank matrices $X\in \mc X_1$, we want to show that $H_AX$ is also likely to have low rank. This uses the following lemma:

\begin{lemma}\label{lem:rank_prob}
Let $Y\in \bb F_2^{\Delta\times \Delta'}$ be a matrix of rank $M$ and let $H\in \bb F_2^{(1-\rho)\Delta\times\Delta}$ be chosen uniformly at random. Then for any $K$,
\begin{equation}
    \Pr_H(\rk(HY)=K)\le \binom{M}{K}2^{-((1-\rho)\Delta-K)(M-K)}\, .
\end{equation}
\end{lemma}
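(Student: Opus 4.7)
The plan is to reduce to the case of a uniformly random matrix (exploiting that $\rk(HY)$ depends only on the column space of $Y$), and then bound the probability that such a matrix has rank exactly $K$ by a union bound over which $K$-subset of columns spans the column space.

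For the reduction step: since $\rk Y = M$, I would pick $M$ linearly independent columns $y_1, \ldots, y_M$ of $Y$. The remaining columns of $Y$ lie in their span, and since multiplication by $H$ preserves linear relations, $\rk(HY) = \rk(HY')$ where $Y' = [y_1|\cdots|y_M] \in \bb F_2^{\Delta \times M}$ has full column rank. The key observation is then that $HY'$ is uniformly distributed over $\bb F_2^{(1-\rho)\Delta \times M}$: extending $y_1, \ldots, y_M$ to a basis of $\bb F_2^\Delta$, a uniformly random $H$ is equivalent to specifying the images $Hy_i$ of each basis vector independently and uniformly, and the first $M$ of these are precisely the columns of $HY'$.

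For the rank bound, let $R \in \bb F_2^{N \times M}$ be uniformly random with $N = (1-\rho)\Delta$. If $\rk R = K$, then there is some subset $S \subseteq [M]$ with $|S| = K$ such that the columns $\{r_j\}_{j \in S}$ form a basis of the column space of $R$; in particular, every $r_i$ with $i \notin S$ lies in $\spn\{r_j\}_{j \in S}$. Union bounding over the $\binom{M}{K}$ choices of $S$ and conditioning on the columns indexed by $S$, this span has size at most $2^K$, so each of the $M-K$ remaining columns independently lies in it with probability at most $2^{K-N}$. This yields
\begin{equation}
\Pr(\rk R = K) \le \binom{M}{K}\cdot 2^{-(N-K)(M-K)}\, ,
\end{equation}
which is the claimed bound. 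The only subtle step is the reduction to a uniform matrix; the rank bound itself is a clean union-bound argument that in fact saturates in the $K=0$ case (both sides equal $2^{-NM}$), suggesting the bound is essentially tight.
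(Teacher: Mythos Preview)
Your proof is correct and follows essentially the same approach as the paper: select $M$ linearly independent columns $y_1,\dots,y_M$ of $Y$, observe that $Hy_1,\dots,Hy_M$ are independent uniform vectors in $\bb F_2^{(1-\rho)\Delta}$, and then union bound over which $K$-subset spans the rest. The only cosmetic difference is that you factor out the ``reduction to a uniform matrix'' step explicitly (via extending to a basis), whereas the paper carries out the same computation inline; the union bound and the resulting $\binom{M}{K}2^{-(N-K)(M-K)}$ estimate are identical.
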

\begin{proof}
Let $y_1,\dots,y_M$ be linearly independent columns of $Y$. If $\rk(HY)=K$, then $\{Hy_1,\dots,Hy_M\}$ must span a $K$-dimensional subspace of $\bb F_2^{(1-\rho)\Delta}$. In other words, there is a $K$-element subset $S\subseteq [M]$ such that $V_S\equiv \spn\{Hy_j\}_{j\in S}$ is $K$-dimensional and $Hy_i\in V_S$ for all $i\in [M]$. Let $\mc S$ denote the set of all $K$-element subsets of $[M]$. We have
\begin{align}
    &\Pr_H(\rk(HY)=K)\\ 
    = &\Pr_H(\exists S\in\mc S \text{ such that } \dim V_S=K \text{ and } Hy_i\in V_S \text{ for all } i\in [M])\\
    \le &\sum_{S\in\mc S} \Pr_H(\dim V_S=K \text{ and } Hy_i\in V_S \text{ for all } i\in [M])\\
    \le &\sum_{S\in\mc S} \Pr_H(Hy_i\in V_S \text{ for all } i\in [M] \mid \dim V_S = K)\\
    = &\sum_{S\in \mc S} \prod_{i \in [M]} \Pr_H(Hy_i\in V_S \mid \dim V_S = K)\, .
\end{align}
Now, consider some fixed $S$ in the latter sum. If $i\in S$, then $Hy_i\in V_S$ is guaranteed. Otherwise, because the $y_i$ are independent, the $Hy_i$ are independently mapped to uniformly random vectors in $\bb F_2^{(1-\rho)\Delta}$, and each of them lands in the fixed subspace $V_S$ with probability $2^{-((1-\rho)\Delta-K)}$. Thus,
\begin{align}
    \Pr_H(\rk(HY)=K)\le \binom{M}{K}\left(2^{-((1-\rho)\Delta-K)}\right)^{M-K}\, .
\end{align}
\end{proof}

We will also need the following fact about the rank of sparse matrices, which first appears in~\cite{LZ22}.

\begin{lemma}[Corollary 25 of~\cite{LZ22}]\label{lem:LZ25}
Let $C_A$ be an error-correcting code with minimum distance $d_A\ge\delta\Delta$. Let $H_A$ be its parity check matrix. Let $X \in \mathbb{F}^{\Delta\times \Delta}_2$ be a matrix such that all columns are of weight at most $\Delta^{1/2+2\varepsilon}$, and such that $X$ has more than $\delta\Delta/2$ non-zero rows. Then for $\Delta$ sufficiently large, we have $\mathrm{rank}(H_AX) \ge (\delta/2)\Delta^{1/2-2\varepsilon}$.
\end{lemma}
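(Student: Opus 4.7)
The plan is to argue by contradiction. Suppose $\operatorname{rank}(H_AX) = m < m_0 := (\delta/2)\Delta^{1/2-2\varepsilon}$. Since the rank of $H_AX$ equals the dimension of the column span of $H_AX$, we may pick $m$ columns $x_{i_1},\ldots,x_{i_m}$ of $X$ whose images $H_Ax_{i_1},\ldots,H_Ax_{i_m}$ form a basis of $\operatorname{col}(H_AX)$. For any other column $x_k$ of $X$, there exist coefficients $\alpha_l \in \mathbb F_2$ such that $H_Ax_k = \sum_l \alpha_l H_Ax_{i_l}$, which rearranges to $x_k - \sum_l \alpha_l x_{i_l} \in \ker H_A = C_A$.

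The first key step is to invoke sparsity. The vector $x_k - \sum_l \alpha_l x_{i_l}$ is supported on the union of the supports of $x_k$ and the $x_{i_l}$'s, so its weight is at most $(m+1)\Delta^{1/2+2\varepsilon}$. Using $m < m_0$, we obtain
\begin{equation}
(m+1)\Delta^{1/2+2\varepsilon} \le (\delta/2)\Delta + \Delta^{1/2+2\varepsilon} < \delta\Delta
\end{equation}
for $\Delta$ sufficiently large. Since $C_A$ has minimum distance $\ge \delta\Delta$, the codeword $x_k - \sum_l \alpha_l x_{i_l}$ must vanish. Hence every column of $X$ lies in the span $V := \operatorname{span}(x_{i_1},\ldots,x_{i_m})$.

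The second key step is a row-support count. If row $j$ of $X$ is non-zero, then some column $x_k$ has a non-zero $j$-th coordinate. Writing $x_k = \sum_l \alpha_l x_{i_l}$, at least one of the basis vectors $x_{i_l}$ must also be non-zero in coordinate $j$. Therefore the union of the row supports of $x_{i_1},\ldots,x_{i_m}$ contains every non-zero row of $X$, giving
\begin{equation}
\delta\Delta/2 < \#\{\text{non-zero rows of }X\} \le \sum_{l=1}^m |x_{i_l}| \le m\cdot \Delta^{1/2+2\varepsilon},
\end{equation}
which yields $m > (\delta/2)\Delta^{1/2-2\varepsilon}$, contradicting our assumption.

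The bulk of the work is the combinatorial/dimensional observation in the first step; the real substance is recognizing that each residue $x_k - \sum \alpha_l x_{i_l}$ is both a codeword of $C_A$ and supported on few coordinates, forcing it to vanish. Once this is done, the row-covering count is essentially immediate. The role of the hypothesis on row support is to prevent $X$ from being too ``column-thin'' relative to its sparsity budget, and it only enters at the final counting step.
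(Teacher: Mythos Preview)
Your proof is correct and follows essentially the same approach as the one in~\cite{LZ22} that the paper cites (the paper itself does not reproduce the argument, only refers to it). The two key moves---using the small rank of $H_AX$ together with column sparsity and the minimum distance of $C_A$ to force every column into a small span, then counting row supports---are precisely the content of Lemma~24 and Corollary~25 in~\cite{LZ22}, up to the modified exponents.
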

\begin{proof}
This follows directly from the proofs of Lemma 24 and Corollary 25 in~\cite{LZ22} with the appropriate modifications of the relevant parameters.
\end{proof}

Now we are ready to prove Theorem~\ref{thm:sparse_robust}.

\begin{proof}[Proof of Theorem~\ref{thm:sparse_robust}]
It follows from Lemma~\ref{lem:sparse_row_col} that a dual tensor code $C_{AB}$ with distance $d\ge \delta\Delta$ is sparse robust if and only if it contains no element of $\mathcal{X}$. Taking a union bound over $\mathcal{X}$, we can write
\begin{align}
    \Pr_{H_A,H_B}(C_{AB} \text{ is not SR} \text{ and } d \ge \delta\Delta) &\le \sum_{X\in\mathcal{X}}\Pr_{H_A,H_B}(X\in C_{AB} \text{ and }d\ge\delta\Delta)\\
    &= \sum_{X\in\mathcal{X}}\Pr_{H_A,H_B}(H_AXH_B^\mathrm{T} = 0  \text{ and }d\ge\delta\Delta)\, ,
\end{align}
where the last line follows from the definition of the dual tensor code. To proceed, we decompose the sum according to the rank of $H_AX$. It follows from Lemma~\ref{lem:rank_prob} (with $K=0$) that for any matrix $Y$ with $\mathrm{rank}(Y) = M$, the probability over $H_B$ that $YH_B^\mathrm{T} = 0$ is bounded above by $2^{-(1-\rho_B)\Delta M}$. Applying this fact by taking $Y = H_AX$, we get
\begin{align}
    &\sum_{X\in\mathcal{X}}\Pr_{H_A,H_B}(H_AXH_B^\mathrm{T} = 0 \text{ and }d\ge\delta\Delta)\\ 
    \le & \sum_{X\in\mathcal{X}}\Pr_{H_A,H_B}(H_AXH_B^\mathrm{T} = 0 \text{ and }d_A\ge\delta\Delta)\\
    = &\sum_{X\in\mathcal{X}}\sum_{M=0}^\mathrm{rank(X)}\bigg[\Pr_{H_A,H_B}(H_AXH_B^\mathrm{T} = 0 \mid \mathrm{rank}(H_AX)=M \text{ and }d_A\ge\delta\Delta)\nonumber\\
    &\qquad\qquad\qquad \times\Pr_{H_A,H_B}(\mathrm{rank}(H_AX)=M \text{ and }d_A\ge\delta\Delta)\bigg]\\
    \le &\sum_{X\in\mathcal{X}}\sum_{M=0}^{\mathrm{rank}(X)}2^{-(1-\rho_B)\Delta M}\Pr_{H_A}(\mathrm{rank}(H_AX) = M \text{ and }d_A\ge\delta\Delta)\, .
\end{align}

We can bound the inner sum using Lemma~\ref{lem:LZ25}.  Note that any $X \in \mathcal{X}$ satisfies the hypotheses of Lemma~\ref{lem:LZ25}. Therefore we get
\begin{align}
    &\quad\sum_{M=0}^{\mathrm{rank}(X)}\Pr_{H_A}(\mathrm{rank}(H_AX) = M\text{ and }d_A\ge\delta\Delta)2^{-(1-\rho_B)\Delta M}\\
    = &\sum_{M=(\delta/2)\Delta^{1/2-2\varepsilon}}^{\mathrm{rank}(X)}\Pr_{H_A}(\mathrm{rank}(H_AX) = M \text{ and }d_A \ge \delta\Delta )2^{-(1-\rho_B)\Delta M}\\
    \le &\sum_{M=(\delta/2)\Delta^{1/2-2\varepsilon}}^{\mathrm{rank}(X)}\Pr_{H_A}(\mathrm{rank}(H_AX) = M)2^{-(1-\rho_B)\Delta M}\, ,
\end{align}
where we drop the distance condition in the last line since it has now played its part in allowing the application of Lemma~\ref{lem:LZ25}.

We will now bound the total probability in two stages by splitting the outer sum (see Lemma~\ref{lem:lowrank_card_bound}) into a low rank part $\mathcal{X}_1\subseteq \mathcal{X}$ (where $\rk X \le \Delta^{1/2+b}$) and a high rank part $\mathcal{X}_2 \subseteq \mathcal{X}$ (where $\rk X > \Delta^{1/2+b}$), with $b \in (2\varepsilon, 3\varepsilon)$, to get
\begin{align}
&\sum_{X\in\mathcal{X}}\sum_{M=(\delta/2)\Delta^{1/2-2\varepsilon}}^{\mathrm{rank}(X)}\Pr_{H_A}(\mathrm{rank}(H_AX) = M)2^{-(1-\rho_B)\Delta M}\\
=&\underbrace{\sum_{X\in\mathcal{X}_1}\sum_{M=(\delta/2)\Delta^{1/2-2\varepsilon}}^{\mathrm{rank}(X)}\Pr_{H_A}(\mathrm{rank}(H_AX) = M)2^{-(1-\rho_B)\Delta M}}_{\equiv P_1}\\ 
+ &\underbrace{\sum_{X\in\mathcal{X}_2}\sum_{M=(\delta/2)\Delta^{1/2-2\varepsilon}}^{\mathrm{rank}(X)}\Pr_{H_A}(\mathrm{rank}(H_AX) = M)2^{-(1-\rho_B)\Delta M}}_{\equiv P_2}\, .
\end{align}

\begin{proof}[Bound for $P_1$]
We can bound the low rank part $P_1$ using the cardinality bound for $|\mathcal{X}_1|$ in Lemma~\ref{lem:lowrank_card_bound}. We get
\begin{align}
    P_1 &= \sum_{X\in\mathcal{X}_1}\sum_{M=(\delta/2)\Delta^{1/2-2\varepsilon}}^{\mathrm{rank}(X)}\Pr_{H_A}(\mathrm{rank}(H_AX) = M)2^{-(1-\rho_B)\Delta M}\\    
    &\le \sum_{X\in\mathcal{X}_1} 2^{-(1-\rho_B)\Delta\cdot (\delta/2)\Delta^{1/2-2\varepsilon}}\\
    &= |\mathcal{X}_1|\cdot 2^{-\Theta(\Delta^{3/2-2\varepsilon})}\\
    &\le 2^{\tilde{\Theta}(\Delta^{1+2\varepsilon+b})}2^{-\Theta(\Delta^{3/2-2\varepsilon})}\\
    &= 2^{-\Theta(\Delta^{3/2-2\varepsilon})}\, .
\end{align}
The second line follows by bounding the inner sum using its largest term. The last line follows due to the fact that $4\varepsilon + b < 7\varepsilon < 1/2$, so that $\Delta^{3/2-2\varepsilon}$ asymptotically dominates $\Delta^{1+2\varepsilon+b}$.
\end{proof}

\begin{proof}[Bound for $P_2$]
To bound the expression $P_2$, we will assume without loss of generality that $\rho_B \le \rho_A$. If this is not the case, we can switch the roles of $C_A$ and $C_B$ by applying the current argument to the transposed code $C_{BA}$, noting that the set $\mathcal{X}$ is invariant under transpose. Writing $N = \mathrm{rank}(X)$, we can bound the inner sum of $P_2$ as
\begin{align}
    &\sum_{M=(\delta/2)\Delta^{1/2-2\varepsilon}}^{\mathrm{rank}(X)}\Pr_{H_A}(\mathrm{rank}(H_AX) = M)2^{-(1-\rho_B)\Delta M}\\ \le& \sum_{M=0}^{\min(N,(1-\rho_A)\Delta)}\binom{N}{M}2^{-((1-\rho_A)\Delta-M)(N-M)}2^{-(1-\rho_B)\Delta M}\\
    \le & L\sum_{M=0}^{\min(N,(1-\rho_A)\Delta)}\binom{N}{M}(2^{-(1-\rho_A)\Delta})^{N-M}(2^{-(1-\rho_B)\Delta})^M\\
    \le & L(2^{-(1-\rho_A)\Delta}+2^{-(1-\rho_B)\Delta})^N\\
    \le & L2^N 2^{-(1-\rho_A)\Delta N}\, ,
\end{align}
where we apply Lemma~\ref{lem:rank_prob} in the second line and also extend the limits of summation down to $M=0$ for convenience. We write
\begin{align}
    L = \max_{0\le M \le \min(N,(1-\rho_A)\Delta)}\left(2^{(N-M)M}\right)\, ,
\end{align}
which we extract from the sum in the third line above. We apply the binomial theorem in going to the fourth line, and the last line follows from the assumption that $\rho_B\le \rho_A$. To bound the remaining expression, we split into a two cases depending on the sizes of $(1-\rho_A)\Delta$ and $N$.

\begin{enumerate}
\item If we have $N \le 2(1-\rho_A)\Delta$, then $L = 2^{N^2/4} \le 2^{(1-\rho_A)\Delta N/2}$ and we have
\begin{align}
    L2^N2^{-(1-\rho_A)\Delta N} &\le 2^{(1-\rho_A)\Delta N/2+N-(1-\rho_A)\Delta N}\\ 
    &= 2^{-(1/2)(1-\rho_A)\Delta N + N}\\ 
    &= 2^{-\Theta(\Delta N)}\, .
\end{align}
\item If $N > 2(1-\rho_A)\Delta$, then $L = 2^{(1-\rho_A)\Delta(N-(1-\rho_A)\Delta)}$ and we have
\begin{align}
    L2^N2^{-(1-\rho_A)\Delta N} &= 2^{(1-\rho_A)\Delta N - ((1-\rho_A)\Delta)^2 + N -(1-\rho_A)\Delta N}\\ 
    &= 2^{-(1-\rho_A)^2\Delta^2 + N}\\ 
    &= 2^{-\Theta(\Delta^2)}\, .
\end{align}
\end{enumerate}
Since $N = \mathrm{rank}(X) > \Delta^{1/2+b}$, it follows that we have
\begin{align}
    \sum_{M=(\delta/2)\Delta^{1/2-2\varepsilon}}^{\mathrm{rank}(X)}\Pr_{H_A}(\mathrm{rank}(H_AX) = M)2^{-(1-\rho_B)\Delta M} = 2^{-\Omega(\Delta^{3/2+b})}
\end{align}
in both cases. Bounding $|\mathcal{X}_2|$ using Lemma~\ref{lem:lowrank_card_bound}, we finally get
\begin{align}
    P_2 \le |\mathcal{X}_2|2^{-\Omega(\Delta^{3/2+b})} \le 2^{\tilde{\Theta}(\Delta^{3/2+2\varepsilon})}2^{-\Omega(\Delta^{3/2+b})} = 2^{-\Omega(\Delta^{3/2+b})}\, ,
\end{align}
where the last equation follows from the fact that we chose $2\varepsilon < b$, so that $\Delta^{3/2+b}$ asymptotically dominates over $\Delta^{3/2+2\varepsilon}$.
\end{proof}

Altogether, combining the bounds for $P_1$ and $P_2$, it follows that
\begin{align}
    \Pr_{H_A,H_B}(C_{AB} \text{ is not SR} \text{ and } d \ge \delta\Delta) \le P_1 + P_2 \le 2^{-\Theta(\Delta^{3/2-2\varepsilon})} + 2^{-\Omega(\Delta^{3/2+2\varepsilon})} = 2^{-\Theta(\Delta^{3/2-2\varepsilon})}\, .
\end{align}
\end{proof}

Theorem~\ref{thm:sparse_robust} shows that random dual tensor codes are sparse robust with high probability. We now proceed to use this result to show that random dual tensor codes are also sparse robust with respect to puncturing with high probability. The main result of this section is the following theorem.

\begin{theorem}[Sparse Robustness with respect to Puncturing]\label{thm:puncture_robust}
Fix constants $\rho_A,\rho_B \in (0,1)$, $\varepsilon \in (0,1/14)$, and $\delta \in (0,1/2)$ with $\delta < \min(h^{-1}(\rho_A),h^{-1}(\rho_B))$, where $h(x)$ is the binary entropy function.

Let $H_A\in \mathbb F_2^{(1-\rho_A)\Delta\times \Delta}$ and $H_B\in \mathbb F_2^{(1-\rho_B)\Delta\times\Delta}$ be uniformly random binary check matrices defining codes $C_A$ and $C_B$ respectively. Then $C_{AB}$ has distance $d\ge \delta\Delta$ and is sparse robust with respect to puncturing with high probability. More precisely, we have
\begin{align}
    \Pr_{H_A,H_B}(C_{AB} \text{ is SRP} \text{ and } d \ge \delta\Delta) \ge 1 - 2^{-\Omega(\Delta)}\, .
\end{align}
In particular, it follows from Lemma~\ref{lem:sparse_implies_robust} that random dual tensor codes have distance $d\ge \delta\Delta$ and are sufficiently robust with high probability.
\end{theorem}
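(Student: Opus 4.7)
\textbf{Proof plan for Theorem~\ref{thm:puncture_robust}.} The plan is to reduce SRP to a union bound over all puncturings, applying Theorem~\ref{thm:sparse_robust} (sparse robustness of a single random dual tensor code) to each one. The key technical device is Corollary~\ref{cor:total_var_bound}: puncturing acts naturally on random generator matrices (by simply deleting the corresponding rows), whereas Theorem~\ref{thm:sparse_robust} is stated in the random parity check ensemble, so I will switch between the two ensembles at a cost of $2^{-\Omega(\Delta)}$ total variation distance each time.

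First I would handle the distance claim separately. A standard Gilbert--Varshamov union bound over all non-zero vectors of Hamming weight less than $\delta\Delta$, applied to each of $C_A$ and $C_B$, shows under the stated hypothesis on $\delta$ that $d_A \ge \delta\Delta$ and $d_B \ge \delta\Delta$, and hence $d = \min(d_A, d_B) \ge \delta\Delta$, except with probability $2^{-\Omega(\Delta)}$.

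The bulk of the work lies in showing SRP. By Corollary~\ref{cor:total_var_bound}, I may work in the generator matrix ensemble throughout. For any fixed puncturing sets $\overline{A'}, \overline{B'}$ of size $\Delta^{1-\varepsilon}$, the punctured generator matrices $G_{A'} \in \mathbb{F}_2^{\Delta' \times \rho_A\Delta}$ and $G_{B'} \in \mathbb{F}_2^{\Delta' \times \rho_B\Delta}$ (where $\Delta' = \Delta - \Delta^{1-\varepsilon}$) are obtained by deleting the punctured rows, and so remain uniformly random. Switching $C_{A'}$ and $C_{B'}$ back to the parity check ensemble via Corollary~\ref{cor:total_var_bound} places the distribution of the punctured dual tensor code $C_{A'B'}$ within $2^{-\Omega(\Delta)}$ total variation distance of a random dual tensor code on $\Delta'$ coordinates defined by uniformly random parity check matrices with effective rates $\rho_{A'} = \rho_A\Delta/\Delta'$ and $\rho_{B'} = \rho_B\Delta/\Delta'$. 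Since $\rho_{A'}, \rho_{B'}$ converge to $\rho_A, \rho_B$ and the hypothesis on $\delta$ is strict, Theorem~\ref{thm:sparse_robust} applies on $\Delta'$ coordinates with distance parameter $\delta/2$ for sufficiently large $\Delta$, yielding for each fixed puncturing the bound $\Pr[C_{A'B'}\text{ is not sparse robust}] \le 2^{-\Theta(\Delta^{3/2-2\varepsilon})} + 2^{-\Omega(\Delta)}$, where the second term absorbs both the TV distance correction and the GV-type probability that $C_{A'B'}$ fails to have distance $\ge (\delta/2)\Delta'$.

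Finally, a union bound over the $\binom{\Delta}{\Delta^{1-\varepsilon}}^2 = 2^{O(\Delta^{1-\varepsilon}\log\Delta)}$ choices of puncturing yields $\Pr[C_{AB}\text{ is not SRP}] \le 2^{-\Omega(\Delta)}$, using the fact that $\Delta^{1-\varepsilon}\log\Delta = o(\Delta)$ and $\Delta^{1-\varepsilon} \ll \Delta^{3/2-2\varepsilon}$ for $\varepsilon < 1/2$. Combining with the distance bound and applying Lemma~\ref{lem:sparse_implies_robust} then proves the theorem. The main obstacle I anticipate is the bookkeeping around the ensemble switches and the small rate shifts $\rho_A \to \rho_A\Delta/\Delta'$: one must verify that the shifted rates still satisfy the hypotheses of Theorem~\ref{thm:sparse_robust} (which holds because the rate condition is an open condition in $(0,1)$ and the shift is only $O(\Delta^{-\varepsilon})$) and that the TV distance corrections do not accumulate prohibitively across the union bound (which holds because each costs only $2^{-\Omega(\Delta)}$ and the union bound introduces at most a $2^{O(\Delta^{1-\varepsilon}\log\Delta)}$ multiplicative factor). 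Neither issue is genuinely difficult; the real content is packaged into Theorem~\ref{thm:sparse_robust}, and the present theorem is essentially a lift of that result to the puncturing setting.
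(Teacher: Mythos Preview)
Your proposal is correct and follows essentially the same approach as the paper: Gilbert--Varshamov for the distance, a union bound over all $2^{\tilde\Theta(\Delta^{1-\varepsilon})}$ puncturings, two applications of Corollary~\ref{cor:total_var_bound} to pass to and from the generator-matrix ensemble (where puncturing is transparent), and Theorem~\ref{thm:sparse_robust} on each punctured code. The only cosmetic difference is that the paper carries the event $d\ge\delta\Delta$ through and uses the deterministic implication $d\ge\delta\Delta\Rightarrow d'\ge 0.9\,\delta\Delta$ (since puncturing removes only $\Delta^{1-\varepsilon}$ coordinates), whereas you re-invoke a GV-type bound on the punctured code; both are fine.
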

\begin{proof}
We have
\begin{align}
    \Pr_{H_A,H_B}(C_{AB} \text{ is SRP} \text{ and } d \ge \delta\Delta) = 1 - \Pr_{H_A,H_B}(C_{AB} \text{ is not SRP} \text{ or } d < \delta\Delta)\, .
\end{align}
We will upper bound the latter probability. For $\delta < \min(h^{-1}(\rho_A),h^{-1}(\rho_B))$, the Gilbert-Varshamov bound implies that randomly chosen parity check matrices $H_A, H_B$ will define codes with minimum distances $d = \min(d_A,d_B) \ge \delta\Delta$ with probability $1-2^{-\Omega(\Delta)}$. Taking a union bound, we have
\begin{align}
    \Pr_{H_A,H_B}(C_{AB} \text{ is not SRP} \text{ or } d < \delta\Delta) &\le \Pr_{H_A,H_B}(C_{AB} \text{ is not SRP} \text{ and } d \ge \delta\Delta) + \Pr_{H_A,H_B}(d < \delta\Delta)\\
    &=\Pr_{H_A,H_B}(C_{AB} \text{ is not SRP} \text{ and } d \ge \delta\Delta) +  2^{-\Omega(\Delta)}\, .
\end{align}

Let $\mathcal{A}'$ and $\mathcal{B}'$ be the set of all coordinates obtained from $A$ and $B$ by puncturing on a subset of size $\Delta^{1-\varepsilon}$. Note that $C_{AB}$ will fail to be SRP if and only if there exists some $A'\in \mathcal{A}'$ and $B' \in \mathcal{B}'$ such that the punctured code $C_{A'B'}$ is not SR. We can therefore take a union bound over $\mathcal{A}'$ and $\mathcal{B}'$ to get  
\begin{align}\label{eq:Pr_CAB_not_SRP}
    \Pr_{H_A,H_B}\left(C_{AB} \text{ is not SRP and }d\ge \delta\Delta\right) \le \sum_{\substack{A'\in \mathcal{A}'\\B'\in\mathcal{B}'}}\Pr_{H_A,H_B}(C_{A'B'} \text{ is not SR and } d\ge \delta\Delta)\, .
\end{align}
To handle the puncturing, it is more convenient to take the random codes over uniformly chosen generator matrices. To that end, we can apply Corollary~\ref{cor:total_var_bound} to get
\begin{align}
    \Pr_{H_A,H_B}(C_{A'B'} \text{ is not SR and } d\ge \delta\Delta) \le \Pr_{G_A,G_B}(C_{A'B'} \text{ is not SR and } d\ge \delta\Delta) + 2^{-\Omega(\Delta)}\, ,
\end{align}
where the latter probability is over codes defined by randomly chosen generator matrices (of the appropriate sizes). Since $G_A$ and $G_B$ are chosen uniformly randomly, it follows that the the generator matrices for their punctured codes $G_{A'}$ and $G_{B'}$ are also chosen uniformly randomly. Since we only puncture on a sublinear number of entries, the distance $d'$ of the punctured code is guaranteed to be above, say $0.9\delta\Delta$, for sufficiently large $\Delta$. Therefore we have
\begin{align}
    \Pr_{G_A,G_B}(C_{A'B'} \text{ is not SR and } d\ge \delta\Delta) &\le\Pr_{G_A,G_B}(C_{A'B'} \text{ is not SR and } d'\ge 0.9\delta\Delta)\\
    &= \Pr_{G_{A'},G_{B'}}(C_{A'B'} \text{ is not SR and } d'\ge 0.9\delta\Delta)\\
    &\le \Pr_{H_{A'},H_{B'}}(C_{A'B'} \text{ is not SR and } d'\ge 0.9\delta\Delta) + 2^{-\Omega(\Delta)}\, ,
\end{align}
where in the last line we apply Corollary~\ref{cor:total_var_bound} once again to return to the distribution over uniform check matrices $H_{A'}$ and $H_{B'}$.
We can now apply Theorem~\ref{thm:sparse_robust} with our chosen parameters\footnote{Note that the blocklength of the punctured code is proportional to $\Delta' = \Delta - \Delta^{1-\varepsilon}$. Since the leading order behavior is unchanged, we have $\Theta(\Delta^{3/2-2\varepsilon}) = \Theta((\Delta')^{3/2-2\varepsilon})$.} to conclude that
\begin{align}
    \Pr_{H_{A'},H_{B'}}(C_{A'B'} \text{ is not SR and } d'\ge 0.9\delta\Delta) \le 2^{-\Theta(\Delta^{3/2-2\varepsilon})\, }.
\end{align}
It remains to bound the sizes of $\mathcal{A}'$ and $\mathcal{B}'$.
There are at most
\begin{align}
    \binom{\Delta}{\Delta^{1-\varepsilon}} = 2^{\tilde{\Theta}(\Delta^{1-\varepsilon})} 
\end{align}
ways to puncture $\Delta^{1-\varepsilon}$ coordinates of $A$ (or $B$). Therefore we get $|\mathcal{A}'|\cdot |\mathcal{B}'| = 2^{\tilde{\Theta}(\Delta^{1-\varepsilon})}$. Returning to \eqref{eq:Pr_CAB_not_SRP}, we have the following bound of
\begin{align}
    \Pr_{H_A,H_B}\left(C_{AB} \text{ is not SRP and }d\ge \delta\Delta\right) &\le |\mathcal{A}'|\cdot |\mathcal{B}'| \cdot (2^{-\Theta(\Delta^{3/2-2\varepsilon})}+2^{-\Omega(\Delta)})\\
    &= 2^{\tilde{\Theta}(\Delta^{1-\varepsilon})}2^{-\Omega(\Delta)}\\ 
    &= 2^{-\Omega(\Delta)}\, .
\end{align}
Therefore
\begin{align}
    \Pr_{H_A,H_B}(C_{AB} \text{ is SRP} \text{ and } d \ge \delta\Delta) &= 1 - \Pr_{H_A,H_B}(C_{AB} \text{ is not SRP} \text{ or } d < \delta\Delta)\\
    &\ge 1 - 2^{-\Omega(\Delta)}\, .
\end{align}
and the result follows.
\end{proof}

Theorem~\ref{thm:dualtensorrobustness} follows easily from Theorem~\ref{thm:sparse_robust} and Lemma~\ref{lem:sparse_implies_robust}.

\begingroup
\def\thetheorem{\ref{thm:dualtensorrobustness}}
\begin{theorem}
Fix constants $\varepsilon\in (0,1/28)$, $\rho\in (0,1/2)$, and $\delta\in (0,1/2)$ such that $\delta<h^{-1}(\rho)$, where $h(x)$ is the binary entropy function. For all sufficiently large $\Delta$, there exist classical codes $C_A,C_B$ of length $\Delta$ and rates $\rho_A= \rho$ and $\rho_B=1-\rho$ such that such that both the dual tensor code of $C_A$ and $C_B$ and the dual tensor code of $C_A^\perp$ and $C_B^\perp$ are $\Delta^{3/2+\varepsilon}$-robust and have distances at least $\delta\Delta$.
\end{theorem}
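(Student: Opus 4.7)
The plan is to use a probabilistic argument: randomly chosen parity check matrices $H_A, H_B$ will define codes $C_A, C_B$ whose dual tensor codes are simultaneously sufficiently robust and have the required distance with high probability. By a union bound, the same is automatically true for the pair $(C_A^\perp, C_B^\perp)$, whose dual tensor code is determined by generator matrices of $C_A, C_B$ (which can likewise be taken uniformly random, up to the exponentially small overhead quantified in Corollary~\ref{cor:total_var_bound}). The restriction $\rho_A = \rho$ and $\rho_B = 1-\rho$ is compatible with the distance requirement $\delta < h^{-1}(\rho)$ via the Gilbert--Varshamov bound applied to both $C_A$ and $C_B$ simultaneously.

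The core technical step is to replace the condition of $\Delta^{3/2+\varepsilon}$-robustness, which is hard to attack directly by counting, with the more tractable proxy notion of \emph{sparse robustness with respect to puncturing (SRP)}: that no sublinear puncturing of the code contains a non-zero codeword whose rows and columns all have weight at most $\Delta^{1/2+2\varepsilon}$. First I would verify the implication SRP $\Rightarrow$ $\Delta^{3/2+\varepsilon}$-robust (Lemma~\ref{lem:sparse_implies_robust}): any low-weight codeword has at most $\Delta^{1-\varepsilon}$ heavy rows and columns, which can be stripped out by puncturing; the residual matrix is sparse and hence vanishes, forcing the original codeword to be supported on few rows/columns; combined with the distance bound and Lemma 30 of~\cite{LZ22}, this yields the robustness property.

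Next I would prove sparse robustness of a random dual tensor code (Theorem~\ref{thm:sparse_robust}). Using Lemma~\ref{lem:sparse_row_col}, a sparse codeword must have more than $\delta\Delta/2$ non-zero rows and columns, so the set $\mathcal{X}$ of ``bad'' matrices is constrained. A union bound over $X \in \mathcal{X}$ of the event $H_A X H_B^{\mathrm T} = 0$ is the main expression to estimate. The hard part will be handling the counting versus the probability: a naive cardinality bound on $\mathcal{X}$ scales like $2^{\tilde\Theta(\Delta^{3/2+2\varepsilon})}$, while the typical probability of $H_A X H_B^{\mathrm T} = 0$ for a high-rank $X$ is much smaller, so the bound splits into two rank regimes. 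For low-rank $X$ ($\rk X \le \Delta^{1/2+b}$), the counting bound improves to $2^{\tilde\Theta(\Delta^{1+2\varepsilon+b})}$ and Lemma~\ref{lem:LZ25} forces $\rk(H_A X) \gtrsim \Delta^{1/2-2\varepsilon}$, giving each $X$ probability $\le 2^{-\Theta(\Delta^{3/2-2\varepsilon})}$. For high-rank $X$, Lemma~\ref{lem:rank_prob} gives a much sharper per-matrix probability that beats the larger cardinality. Choosing $b \in (2\varepsilon, 3\varepsilon)$ makes both contributions subexponentially small.

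Finally, I would bootstrap from SR to SRP (Theorem~\ref{thm:puncture_robust}) by union-bounding over the $2^{\tilde\Theta(\Delta^{1-\varepsilon})}$ possible puncturings and invoking SR for each punctured code; because puncturing is easy to perform in the generator-matrix ensemble but sparse robustness is easiest to prove in the check-matrix ensemble, here I would shuttle between the three ensembles using the total variation bound of Corollary~\ref{cor:total_var_bound}. Combined with Lemma~\ref{lem:sparse_implies_robust}, this produces $C_A, C_B$ with the required parameters on a probability $1 - 2^{-\Omega(\Delta)}$ event. Applying the same argument to the dual pair $(C_A^\perp, C_B^\perp)$ (whose randomness is inherited via uniform generator matrices of the primal codes), and noting that $\rho_{A^\perp} = 1-\rho$ and $\rho_{B^\perp} = \rho$ satisfy the same symmetric hypotheses, a single union bound yields a code for which all four conditions hold simultaneously. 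The main subtlety to watch is that the condition $\varepsilon \in (0, 1/28)$ in the statement must be tracked through the $\varepsilon \to \varepsilon/2$ loss in Lemma~\ref{lem:sparse_implies_robust} and the sparsity exponent $\Delta^{1/2+2\varepsilon}$, which is why the theorem's $\varepsilon$ is roughly a factor of two tighter than the $\varepsilon \in (0, 1/14)$ appearing in Theorem~\ref{thm:sparse_robust}.
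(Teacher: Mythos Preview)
Your proposal is correct and follows essentially the same approach as the paper: the probabilistic argument via the SRP proxy, the rank-split counting for Theorem~\ref{thm:sparse_robust}, the union bound over puncturings with ensemble-switching for Theorem~\ref{thm:puncture_robust}, and the final union bound over the primal and dual pairs are exactly the paper's route. The only cosmetic difference is that the paper phrases the dual-pair step by working in the uniform-subspace ensemble (where $C_A^\perp$ is automatically a uniform subspace of the complementary dimension) rather than via generator matrices, but this is equivalent after Corollary~\ref{cor:total_var_bound}.
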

\addtocounter{theorem}{-1}
\endgroup

\begin{proof}
Let $C_A$ be a uniformly random classical code of length $\Delta$ and rate $\rho$. That is, $C_A$ is a uniformly random $\rho\Delta$-dimensional subspace of $\bb F_2^\Delta$. Similarly, let $C_B$ be a random $(1-\rho)\Delta$-dimensional subspace of $\bb F_2^\Delta$. By Theorem~\ref{thm:sparse_robust} and Lemma~\ref{lem:sparse_implies_robust}, we have
\begin{equation}
    \Pr_{C_A,C_B}(C_{AB} \text{ is not $\Delta^{3/2+\varepsilon}$-robust} \text{ or } d < \delta\Delta)\le 2^{-\Omega(\Delta)}\, ,
\end{equation}
where we also use Corollary~\ref{cor:total_var_bound} to switch from the distribution defined by random parity check matrices to one defined by random subspaces. Since $C_A^\perp$ and $C_B^\perp$ are also uniformly random subspaces of $\bb F_2^\Delta$ of dimensions $(1-\rho)\Delta$ and $\rho\Delta$ respectively, we also have
\begin{equation}
    \Pr_{C_A,C_B}(C_{A^\perp B^\perp} \text{ is not $\Delta^{3/2+\varepsilon}$-robust} \text{ or } d^\perp < \delta\Delta)\le 2^{-\Omega(\Delta)}\, ,
\end{equation}
where $C_{A^\perp B^\perp}$ is the dual tensor code of $C_A^\perp$ and $C_B^\perp$ and $d^\perp$ is the distance of $C_{A^\perp B^\perp}$. Therefore,
\begin{equation}
    \Pr_{C_A,C_B}(C_{AB} \text{ and } C_{A^\perp B^\perp} \text{ are $\Delta^{3/2+\varepsilon}$-robust and } d,d^\perp\ge \delta\Delta ) \ge 1-2^{-\Omega(\Delta)}\, ,
\end{equation}
so for sufficiently large $\Delta$, there exist $C_A,C_B$ satisfying the conditions. Note that we require $\varepsilon<1/28$ in the theorem because the SRP parameter of up to $1/14$ in Theorem~\ref{thm:sparse_robust} is halved in Lemma~\ref{lem:sparse_implies_robust}.
\end{proof}

We remark that we did not give the tightest bounds in the section because in the proof of our decoder, we only needed dual tensor codes with $\Delta^{3/2+ \varepsilon}$-robustness for any $\varepsilon>0$. By more carefully tracking the exponents throughout the argument, it is possible to show the existence of $\Delta^{3/2+\varepsilon}$-robust dual tensor codes for any $\varepsilon<1/6$.

\end{document}